\documentclass[10pt,onecolumn]{IEEEtran}
\usepackage{subcaption}
\usepackage{algorithm}
\usepackage{algpseudocode}

\usepackage{pgf,tikz}
\usepackage{mathrsfs}
\usetikzlibrary{arrows,patterns}

\usepackage{bm}

\usepackage{amssymb}

\usepackage{amsmath}

\usepackage{amsthm}

\usepackage{mathtools}

\newtheorem{theorem}{Theorem}
\newtheorem{definition}[theorem]{Definition}
\newtheorem{lemma}[theorem]{Lemma}
\newtheorem{remark}[theorem]{Remark}

\newtheorem{corollary}[theorem]{Corollary}\newtheorem{construction}[theorem]{Construction}
\newtheorem{example}[theorem]{Example}
\newtheorem{proposition}[theorem]{Proposition}

\usepackage{hyperref}
\hypersetup{
	colorlinks,
	linkcolor={blue!100!black},
	citecolor={blue!100!black},
	urlcolor={blue!80!black}
}



\newcommand{\blue}[1]{\textcolor{black}{#1}}
\newcommand{\purple}[1]{\textcolor{black}{#1}}

\newcommand{\GC}{\textit{Gradient Coding}}

\newcommand{\bC}{\mathbb{C}}

\newcommand{\bE}{\mathbb{E}}
\newcommand{\bF}{\mathbb{F}}

\newcommand{\bK}{\mathbb{K}}

\newcommand{\bR}{\mathbb{R}}

\newcommand{\cA}{\mathcal{A}}
\newcommand{\cB}{\mathcal{B}}
\newcommand{\cC}{\mathcal{C}}

\newcommand{\cE}{\mathcal{E}}
\newcommand{\cF}{\mathcal{F}}

\newcommand{\cK}{\mathcal{K}}
\newcommand{\cL}{\mathcal{L}}

\newcommand{\cN}{\mathcal{N}}

\newcommand{\cP}{\mathcal{P}}
\newcommand{\cQ}{\mathcal{Q}}
\newcommand{\cR}{\mathcal{R}}
\newcommand{\cS}{\mathcal{S}}

\newcommand{\cV}{\mathcal{V}}
\newcommand{\cW}{\mathcal{W}}
\newcommand{\cX}{\mathcal{X}}

\newcommand{\boldA}{\textbf{A}}
\newcommand{\boldB}{\textbf{B}}
\newcommand{\boldC}{\textbf{C}}
\newcommand{\boldD}{\textbf{D}}

\newcommand{\boldG}{\textbf{G}}

\newcommand{\boldI}{\textbf{I}}

\newcommand{\boldN}{\textbf{N}}

\newcommand{\boldP}{\textbf{P}}

\newcommand{\boldU}{\textbf{U}}
\newcommand{\boldV}{\textbf{V}}

\newcommand{\bolda}{\bm{a}}
\newcommand{\boldb}{\bm{b}}
\newcommand{\boldc}{\bm{c}}

\newcommand{\boldf}{\bm{f}}
\newcommand{\boldg}{\bm{g}}

\newcommand{\bolds}{\bm{s}}
\newcommand{\boldt}{\bm{t}}
\newcommand{\boldu}{\bm{u}}
\newcommand{\boldv}{\bm{v}}
\newcommand{\boldw}{\bm{w}}
\newcommand{\boldx}{\bm{x}}
\newcommand{\boldy}{\bm{y}}
\newcommand{\boldz}{\bm{z}}

\DeclarePairedDelimiter{\floor}{\lfloor}{\rfloor}
\DeclarePairedDelimiter{\ceil}{\lceil}{\rceil}
\DeclareMathOperator{\supportOP}{supp}
\DeclareMathOperator{\diag}{diag}

\newcommand{\Span}[1]{{\left\langle {#1} \right\rangle}}
\newcommand{\zeronorm}[1]{\lVert #1 \rVert_0}
\newcommand{\support}[1]{\supportOP(#1)}
\newcommand{\norm}[1]{\lVert #1 \rVert_2}

\newcommand{\specnorm}[1]{\lVert #1 \rVert_{\text{spec}}}
\newcommand{\bigfloor}[1]{\bigl\lfloor #1 \bigr\rfloor}

\DeclareSymbolFont{bbold}{U}{bbold}{m}{n}
\DeclareSymbolFontAlphabet{\mathbbold}{bbold}
\newcommand{\1}{\mathbbold{1}}

\newif\ifSUPPLEMENTARY
\SUPPLEMENTARYtrue

\title{Gradient Coding from Cyclic MDS\\Codes and Expander Graphs}
\author{\textbf{Netanel Raviv}$^\star$, \textbf{Itzhak Tamo}$^\dagger$, \textbf{Rashish Tandon}$^\ddagger$, and \textbf{Alexandros G.~Dimakis}$^\circ$\\
\IEEEauthorblockA{\normalsize {$^\star$Department of Electrical Engineering, California Institute of Technology, Pasadena, CA, USA.\\
		$^\dagger$Department of Electrical Engineering--Systems, Tel-Aviv University, Israel.\\
		$^\ddagger$Apple, Seattle, WA, USA.\\
		$^\circ$Department of Electrical and Computer Engineering, The University of Texas at Austin, Austin, TX, USA.}
\thanks{Parts of this work were presented at the International Conference on Machine Learning (ICML), Stockholm, Sweden, 2018.}}}

\begin{document}
\maketitle
\thispagestyle{empty}

\begin{abstract}
Gradient coding is a technique for straggler mitigation in distributed learning. In this paper we design novel gradient codes using tools from classical coding theory, namely, cyclic MDS codes, which compare favorably with existing solutions, both in the applicable range of parameters and in the complexity of the involved algorithms. Second, we introduce an approximate variant of the gradient coding problem, in which we settle for approximate gradient computation instead of the exact one. This approach enables graceful degradation, i.e., the~$\ell_2$ error of the approximate gradient is a decreasing function of the number of stragglers. Our main result is that normalized adjacency matrices of expander graphs yield excellent approximate gradient codes, which enable significantly less computation compared to exact gradient coding, \blue{and guarantee faster convergence than trivial solutions under standard assumptions}. We experimentally test our approach on Amazon EC2, and show that the generalization error of approximate gradient coding is very close to the full gradient while requiring significantly less computation from the workers. 



\end{abstract}

\begin{IEEEkeywords}
	Gradient Descent, Distributed Computing, Coding theory, Expander graphs.
\end{IEEEkeywords}

\section{Introduction}\label{section:intro}
Data intensive machine learning tasks have become ubiquitous in many real-world applications, and with the increasing size of training data, distributed methods have gained increasing popularity. However, the performance of distributed methods (in synchronous settings) is strongly dictated by \textit{stragglers}, i.e., nodes that are slow to respond or unavailable. In this paper, we focus on coding theoretic (and graph theoretic) techniques for mitigating stragglers in distributed synchronous gradient descent.

A coding theoretic framework for straggler mitigation 
called \textit{gradient coding} was first introduced in~\cite{TandonLDK17}. It consists of a system with one master and~$n$ worker nodes (or servers), in which the data is partitioned into~$k$ parts, and one or more parts is assigned to each one of the workers. In turn, each worker computes the partial gradient on each of its assigned parts, linearly combines the results according to some predetermined vector of coefficients, and sends this linear combination back to the master node. Choosing the coefficients at each node judiciously, one can guarantee that the master node is capable of reconstructing the full gradient 
even if \textit{any} $s$ machines fail to perform their work. 
The \textit{storage overhead} of the system, which is denoted by~$d$, refers to the amount of redundant computations, or alternatively, to the number of data parts that are sent to each node (see example in Fig.~\ref{figure:example}).

The importance of straggler mitigation was demonstrated in a series of recent studies (e.g.,~\cite{LiMu} and~\cite{MultiTask}). In particular, it was demonstrated in~\cite{TandonLDK17} that stragglers may run up to $\times 5$ slower than the typical worker ($\times 8$ in~\cite{MultiTask}) on 
Amazon EC2, especially for the cheaper virtual machines; 
such erratic behavior is unpredictable and can significantly 
delay training. One can, of course, use more expensive instances but the goal here is to use coding theoretic methods to provide reliability 
out of cheap unreliable workers, overall reducing the cost of training. 

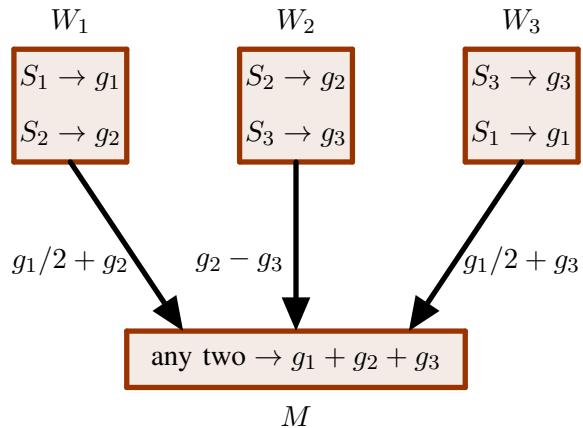
\begin{figure}
    \centering
    \definecolor{zzttqq}{rgb}{0.6,0.2,0}
\begin{tikzpicture}[scale=0.75,line cap=round,line join=round,>=triangle 45,x=1cm,y=1cm]\clip(0,3) rectangle (12,11);
\fill[line width=2pt,color=zzttqq,fill=zzttqq,fill opacity=0.10000000149011612] (3,5) -- (9,5) -- (9,4) -- (3,4) -- cycle;
\fill[line width=2pt,color=zzttqq,fill=zzttqq,fill opacity=0.10000000149011612] (5,8) -- (7,8) -- (7,10) -- (5,10) -- cycle;
\fill[line width=2pt,color=zzttqq,fill=zzttqq,fill opacity=0.10000000149011612] (3,8) -- (3,10) -- (1,10) -- (1,8) -- cycle;
\fill[line width=2pt,color=zzttqq,fill=zzttqq,fill opacity=0.10000000149011612] (9,8) -- (9,10) -- (11,10) -- (11,8) -- cycle;
\draw [line width=2pt,color=zzttqq] (3,5)-- (9,5);
\draw [line width=2pt,color=zzttqq] (9,5)-- (9,4);
\draw [line width=2pt,color=zzttqq] (9,4)-- (3,4);
\draw [line width=2pt,color=zzttqq] (3,4)-- (3,5);
\draw [line width=2pt,color=zzttqq] (5,8)-- (7,8);
\draw [line width=2pt,color=zzttqq] (7,8)-- (7,10);
\draw [line width=2pt,color=zzttqq] (7,10)-- (5,10);
\draw [line width=2pt,color=zzttqq] (5,10)-- (5,8);
\draw [line width=2pt,color=zzttqq] (3,8)-- (3,10);
\draw [line width=2pt,color=zzttqq] (3,10)-- (1,10);
\draw [line width=2pt,color=zzttqq] (1,10)-- (1,8);
\draw [line width=2pt,color=zzttqq] (1,8)-- (3,8);
\draw [line width=2pt,color=zzttqq] (9,8)-- (9,10);
\draw [line width=2pt,color=zzttqq] (9,10)-- (11,10);
\draw [line width=2pt,color=zzttqq] (11,10)-- (11,8);
\draw [line width=2pt,color=zzttqq] (11,8)-- (9,8);
\draw [->,line width=2pt] (2,8) -- (4,5);
\draw [->,line width=2pt] (6,8) -- (6,5);
\draw [->,line width=2pt] (10,8) -- (8,5);
\draw (2,9.5) node[anchor=center] {$\cS_1 \rightarrow \boldg_1$};
\draw (2,8.5) node[anchor=center] {$\cS_2 \rightarrow \boldg_2$};
\draw (10,9.5) node[anchor=center] {$\cS_3 \rightarrow \boldg_3$};
\draw (10,8.5) node[anchor=center] {$\cS_1 \rightarrow \boldg_1$};
\draw (6,9.5) node[anchor=center] {$\cS_2 \rightarrow \boldg_2$};
\draw (6,8.5) node[anchor=center] {$\cS_3 \rightarrow \boldg_3$};
\draw (2,6.25) node[anchor=center] {$\boldg_1/2+\boldg_2$};
\draw (5,6.25) node[anchor=center] {$\boldg_2-\boldg_3$};
\draw (10,6.25) node[anchor=center] {$\boldg_1/2+\boldg_3$};
\draw (6,3.5) node[anchor=center] {$M$};
\draw (6,4.5) node[anchor=center] {$\mbox{any two}
\rightarrow \boldg_1+\boldg_2+\boldg_3$};
\draw (2,10.5) node[anchor=center] {$W_1$};
\draw (6,10.5) node[anchor=center] {$W_2$};
\draw (10,10.5) node[anchor=center] {$W_3$};
\end{tikzpicture}
    \caption{Gradient coding for~$n=3$, $k=3$, $d=2$, and~${s=1}$~\cite{TandonLDK17}. Each worker node~$W_i$ obtains two parts~$\cS_{i_1},\cS_{i_2}$ of the data set~$\cS=\cS_1\cup \cS_2\cup \cS_3$, computes the partial gradients~$\boldg_{i_1},\boldg_{i_2}$, and sends their linear combination back to the master node~$M$. By choosing the coefficients judiciously, the master node~$M$ can compute the full gradient from any two responses, providing robustness against any one straggler.}
    \label{figure:example}
\end{figure}

\blue{By and large, the purpose of gradient coding is to enable the master node to compute the exact gradient out of the responses of \textit{any} $n-s$ non-straggling nodes.} The work of~\cite{TandonLDK17} established the fundamental bound~$d\ge s+1$, provided a deterministic construction which achieves it with equality when~$s+1\vert n$, and a randomized one which applies to all~$s$ and~$n$. Subsequently, deterministic constructions were also obtained by~\cite{ShortDot} and~\cite{Wael}. These works have focused on the scenario where~$s$ is known prior to the construction of the system. Furthermore, the exact computation of the full gradient is guaranteed if the number of stragglers is at most~$s$, but no error bound is guaranteed if this number exceeds~$s$.

The contribution of this work is twofold. For the computation of the exact gradient we employ tools from classic coding theory, namely, cyclic MDS codes, in order to obtain a deterministic construction which compares favourably with existing solutions; both in the applicable range of parameters, and in the complexity of the involved algorithms. Some of these gains are a direct application of well known properties of these codes. 

Second, we introduce an \textit{approximate} variant of the gradient coding problem. In this variant, the requirement for exact computation of the full gradient is traded by an approximate one, where the~$\ell_2$-deviation of the given solution \blue{from the full gradient decreases as the number of stragglers decreases. }
Note that by this approach, the parameter~$s$ is \textit{not} a part of the system construction, and the system can provide an approximate solution for any~$s<n$, whose quality deteriorates gracefully as~$s$ increases. In the suggested solution, the coefficients at the worker nodes are based on an important family of graphs called \textit{expanders}. In particular, it is shown that setting these coefficients according to a normalized adjacency matrix of an expander graph, a strong bound on the error term of the resulting solution is obtained. Moreover, this approach enables to break the aforementioned barrier~$d\ge s+1$, which is a substantial obstacle in gradient coding, and allows the master node to decode using a very simple algorithm.

This paper is organized as follows. Related work regarding gradient coding (and coded computation in general) is listed in Section~\ref{section:relatedWork}. A framework which encapsulates all the results in this paper is given in Section~\ref{section:framework}. Necessary mathematical notions from coding theory and graph theory are given in Section~\ref{section:notions}. The former notions are used to obtain an algorithm for exact gradient computation in Section~\ref{section:exact}, and the latter ones are used to obtain an algorithm for the approximate gradient in Section~\ref{section:approximate}. Experimental results are given in Section~\ref{section:experimentalResults}. 

\section{Related Work}\label{section:relatedWork}
The work of Lee et al.~\cite{LeeSpeedingUp} initiated the use of coding theoretic methods for mitigating stragglers in large-scale learning. This work is focused on linear regression and therefore can exploit more structure compared to the general gradient coding problem that we study here. The work by Li et al.~\cite{LiGlobecom}, investigates a generalized view of the coding ideas in~\cite{LeeSpeedingUp}, showing that their solution is a single operating point in a general scheme of trading off latency of computation to the load of communication. 

Further closely related work has shown how coding can be used for distributed MapReduce, as well as a similar communication and computation tradeoff~\cite{CodedMapReduce,li2018fundamental}.  We also mention the work of~\cite{Karakus} which addresses straggler mitigation in linear regression by using a different approach, that is not mutually exclusive with gradient coding. In their work, the data is coded rather than replicated, and the nodes perform their computation on coded data.

The work by~\cite{ShortDot} generalizes previous work for linear models~\cite{LeeSpeedingUp} but can also be applied to general models to yield explicit gradient coding constructions. Our results regarding the exact gradient are closely related to the work by~\cite{Wael,Wael2} which was obtained independently from our work. In~\cite{Wael}, similar coding theoretic tools were employed in a fundamentally different fashion. Both~\cite{Wael} and~\cite{ShortDot} are comparable in parameters to the randomized construction of~\cite{TandonLDK17} and are outperformed by us in a wide range of parameters. A detailed comparison of the theoretical asymptotic behaviour is given in the sequel.

\begin{remark}
	None of the aforementioned works studies approximate gradient computations. However, we note that subsequent to this work, two unpublished manuscripts~\cite{Dimitris,li2017near} study a similar approximation setting and obtain related results albeit using randomized as opposed to deterministic approaches. Furthermore, the exact setting was also discussed subsequent to this work in~\cite{ye2018communication} and~\cite{yu2018lagrange}. In~\cite{ye2018communication} it was shown that network communication can be reduced by increasing the replication factor, and respective bounds were given. The work of~\cite{yu2018lagrange} discussed coded polynomial computation with low overhead, and applies to gradient coding whenever the gradient at hand is a polynomial.
\end{remark}

\section{Framework}\label{section:framework}
This section provides a unified framework which accommodates straggler mitigation in both the exact and approximate gradient computations which follow. \blue{By and large, we use lowercase letters $a,b,\ldots$ to refer to scalars and functions, bold lowercase letter~$\bolda,\boldb,\ldots$ to refer to vectors, uppercase letters~$A,B,\ldots$ to refer to workers (or nodes), uppercase bold letters~$\boldA,\boldB$ to refer to matrices, and uppercase calligraphic letters~$\cA,\cB,\ldots$ to refer to sets and codes. E.g., the rows of a matrix~$\boldB$ are denoted by~$\boldb_i$, and its entries are denoted by~$b_{i,j}$. Unless otherwise stated, all vectors are \textit{row} vectors, and for any field~$\bF$ we adopt the common notation~$\bF^n$ to refer to~$\bF^{1\times n}$. We also employ the standard notation~$[n]\triangleq \{1,2,\ldots,n\}$ for an integer~$n$}.

We begin with a brief introduction to machine learning, and the reader is referred to~\cite{shalev2014understanding} for further reading (in particular, to \textit{gradient descent}~\cite[Sec.~14.1]{shalev2014understanding} and \textit{stochastic gradient descent}~\cite[Sec.~14.3]{shalev2014understanding}). \blue{Broadly speaking, the general purpose of machine learning is to find a hypothesis from a given hypotheses class, which best approximates an unknown target function by observing a \textit{training set} $\cS=\{ \bolds_i \}_{i=1}^m$ that is labeled by that target function. Assuming that the hypotheses class is parametrized by real vector~$\boldw\in\bR^p$, one defines a \textit{loss function}~$\ell(\boldw,\bolds)$, which penalizes a given hypothesis~$\boldw$ for erring on a data point~$\bolds$. Then, one wishes to find the~$\boldw$ which minimizes the \textit{empirical risk} $L_\cS(\boldw)\triangleq\frac{1}{m}\sum_{i=1}^m \ell(\boldw,\bolds_i)$ by using analytic methods. That is, one starts from an arbitrary point~$\boldw_0$, and iteratively computes the \textit{gradient}~$\nabla L_\cS(\boldw)=(\frac{\partial}{\partial w_i}L_\cS(\boldw))_{i=1}^n$ at the current point, and moves away from the current point in the opposite direction of the gradient, until convergence is achieved. In the \textit{gradient descent} algorithm the exact gradient is computed at every iteration, and in the \textit{stochastic gradient descent} algorithm it is taken from a distribution whose expected value is the exact gradient.}

In order to distribute the execution of gradient descent from a master node~$M$ to~$n$ worker nodes~$\{W_j\}_{j=1}^n$ (\blue{Algorithm~\ref{algorithm:distributedSGD}}), the training set~$\cS$ is partitioned to~$n$ disjoint subsets~$\{ \cS_i \}_{i=1}^n$ of size\footnote{For simplicity, assume that~$m\vert n$. The given scheme could be easily adapted to the case~$m\nmid n$. Further, the assumption that the number of partitions equals to the number of nodes is a mere convenience, and all subsequent schemes can be adapted to the case where the number of partitions is at most the number of nodes.} $\frac{m}{n}$ each. These subsets are distributed among~$\{ W_j \}_{i=1}^n$, \blue{at most~$d$ subsets at each worker for some parameter~$d$}. Then, every node computes the partial gradients~$\nabla L_{\cS_i}(\boldw)$ on the~$\cS_i$'s which it obtained, \blue{where $L_{\cS_i}(\boldw)\triangleq\frac{n}{m}\sum_{\bolds\in \cS_i}\ell(\boldw,\bolds)$}. The algorithm operates in iterations, where in iteration~$r\in[t]$ every node evaluates its gradients in the current model~$\boldw^{(t)}$, and sends to~$M$ some linear combination of them. After obtaining responses from at least~\blue{$n-s_r$ workers, where~$s_r$ is the number of stragglers in iteration~$r$}, $M$~aggregates them to form the gradient~$\nabla L_\cS(\boldw^{(t)})$ of the overall empirical risk at~$\boldw^{(t)}$. In the exact setting the value of~$s_r$ will \blue{be some fixed~$s$} for every~$r$, whereas in the approximate setting this value is at the discretion of the master, in correspondence with the required approximation error.
	
To support mitigation of stragglers in this setting, the following notions are introduced. Let~$\boldB\in\bC^{n\times n}$ be a matrix whose~$i$'th row~$\boldb_i$ contains the coefficients of the linear combination $\sum_{j=1}^{n}b_{i,j}\cdot\nabla L_{\cS_j}(\boldw^{(t)})$ that is sent to~$M$ by~$W_i$. Note that the support~$\support{\boldb_i}$ contains the indices of the sets~$\cS_j$ that are to be sent to~$W_i$ by~$M$. Given a set of non-stragglers~$\cK\in\cP(n)$, where~$\cP(n)$ is the set of all nonempty subsets of~$[n]$, a function~$a:\cP(n)\to \bC^n$ provides~$M$ with a vector by which the results from~$\{ W_i \}_{i\in \cK}$ are to be linearly combined to obtain \blue{a vector~$\boldv_r\in\bR^p$. This vector is either the true gradient, in which case Algorithm~\ref{algorithm:distributedSGD} is gradient descent, or an approximation of it (whose expectation is the true gradient), in which case Algorithm~\ref{algorithm:distributedSGD} is stochastic gradient descent}. We also require that~$\support{a(\cK)}\subseteq \cK$ for all~$\cK\in\cP(n)$. In most of the subsequent constructions, the matrix~$\boldB$ and the function~$a$ will be defined over~$\bR$ rather than over~$\bC$. 

\begin{algorithm}[tb]
	\caption{-- Gradient Coding.}\label{algorithm:distributedSGD}
	\begin{algorithmic}
	\State \textbf{Input:} Data $\cS=\{\boldz_i=(\boldx_i,y_i)\}_{i=1}^m$, number of iterations~$t>0$, learning rates~$\{\eta\}_{r=1}^t>0$, straggler tolerance parameters~$\{s_r\}_{r=1}^t$, a matrix~$\boldB\in\bC^{n\times n}$, and a function~$a:\cP(n)\to \bC^n$.
	\State Initialize~$\boldw^{(1)}=(0,\ldots,0)$.
	\State Partition~$\cS=\cup_{i=1}^n \cS_i$ and send~$\{\cS_j : j\in\support{\boldb_i}\}$ to~$W_i$ for every~$i\in[n]$.
	\For{$r=1$ \textbf{to} $t$}
		\State $M$ broadcasts~$\boldw^{(r)}$ to all nodes.
		\State Each~$W_j$ sends $\frac{1}{n}\sum_{i\in\support{\boldb_j}}b_{j,i}\cdot \nabla L_{\cS_i}(\boldw^{(r)})$ to~$M$.\label{line:eachWorker}
		\State $M$ waits until at least~$n-s_r$ nodes responded.
		\State $M$ computes~$\boldv_r=a(\cK_r)\cdot \boldC$, where the~$i$'th of~$\boldC$ \blue{is~$\frac{1}{n}$ times} the response from~$W_i$, if it responded, and~$0$ otherwise, and $\cK_r$ is the set of non-stragglers in the current iteration~$r$.
		\State $M$ updates $\boldw^{(r+1)}\triangleq \boldw^{(r)}-\eta_r \boldv_r$
	\EndFor
	\State
	\textbf{Return}~$\frac{1}{t}\sum_{r=1}^t\boldw^{(r+1)}$. 
	\end{algorithmic}
\end{algorithm}

	Different constructions of the matrix~$\boldB$ and the function~$a$ in Algorithm~\ref{algorithm:distributedSGD} enable to compute the gradient either exactly (which requires the storage overhead~$d$ to be at least~$s_r+1$ for all~$r\in[t]$) or \textit{approximately}. In what follows, the respective requirements and guarantees from~$a$ and~$\boldB$ are discussed. In the following definitions, for an integer~$r$ let~$\1_r$ be the vector of~$r$ ones, where the subscript is omitted if clear from context, and for~$\cK\subseteq [n]$ let~$\cK^c\triangleq [n]\setminus \cK$.
	
	\begin{definition}\label{definition:ECcondition}
		A matrix~$\boldB\in\bC^{n\times n}$ and a function~$a:\cP(n)\to \bC^n$ satisfy the Exact Computation (EC) condition if for all~$\cK\subseteq[n]$ such that~$|\cK|\ge \max _{r\in[t]}s_r$, we have~$a(\cK)\cdot \boldB=\1$. 
	\end{definition}

	\begin{definition}\label{definition:ACcondition}
		For a non-decreasing function~$\epsilon:[n-1]\to\bR_{\ge 0}$ such that~$\epsilon(0)=0$, $a$ and~$\boldB$ satisfy the~$\epsilon$-Approximate Computation ($\epsilon$-AC) condition, if for all~$\cK\in\cP(n)$, we have~$d_2(a(\cK)\boldB,\1)\le \epsilon(|\cK^c|)$ (where~$d_2$ is Euclidean distance).
	\end{definition}
	
	Notice that the error term~$\epsilon$ in Definition~\ref{definition:ACcondition} is a function of the number of stragglers since it is not expected to decrease if more stragglers are present. The conditions which are given in Definition~\ref{definition:ECcondition} and Definition~\ref{definition:ACcondition} guarantee the exact and approximate computation by the following lemmas. In the upcoming proofs, let~$\boldN(\boldw)$ be the matrix
	\begin{align}\label{equation:N(w)}
	\boldN(\boldw)\triangleq \dfrac{1}{n}\cdot
	\begin{pmatrix}
	\nabla L_{\cS_1}(\boldw)\\
	\nabla L_{\cS_2}(\boldw)\\
	\vdots\\
	\nabla L_{\cS_n}(\boldw)\\
	\end{pmatrix}.
	\end{align}
	
	\begin{lemma}\label{lemma:ECguarantee}
		If~$a$ and~$\boldB$ satisfy the EC condition, then for all~$r\in[t]$ we have~$\boldv_r=\nabla L_\cS(\boldw^{(r)})$.
	\end{lemma}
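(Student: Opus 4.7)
The plan is to unpack the definitions and reduce the identity $\boldv_t=\nabla L_S(\boldw^{(t)})$ to a straightforward matrix computation. First I would observe that the response of each worker $W_j$, as specified in line~\ref{line:eachWorker} of Algorithm~\ref{algorithm:distributedSGD}, is precisely the $j$-th entry of the column vector $B\cdot N(\boldw^{(t)})$, where $N(\boldw^{(t)})$ is the matrix of partial gradients from~(\ref{equation:N(w)}). Thus if every worker responded, we would have $\bolda = B\cdot N(\boldw^{(t)})$ exactly.

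Next I would handle the issue that in reality only workers indexed by $K_t$ respond, and $\bolda_i$ is zero for $i\notin K_t$. This is where the convention $\support{A(K_t)}\subseteq K_t$, stated just before Definition~\ref{definition:ECcondition}, becomes crucial: since $A(K_t)$ has zeros outside $K_t$, the inner product $A(K_t)\cdot \bolda$ ignores the coordinates in $K_t^c$. Consequently,
\begin{align*}
\boldv_t = A(K_t)\cdot \bolda = A(K_t)\cdot \bigl(B\cdot N(\boldw^{(t)})\bigr) = \bigl(A(K_t)\cdot B\bigr)\cdot N(\boldw^{(t)}).
\end{align*}
Because the master waits for $s_t$ responses and in the exact setting $s_t$ is fixed across iterations, $|K_t|\ge \max_{t\in[T]} s_t$ holds, so the EC condition from Definition~\ref{definition:ECcondition} gives $A(K_t)\cdot B = \1$.

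Finally, I would verify that $\1\cdot N(\boldw^{(t)}) = \nabla L_S(\boldw^{(t)})$. This is a short calculation: since $S$ is partitioned into $n$ equally sized parts $\{S_i\}$, the empirical risk factors as $L_S(\boldw) = \frac{1}{n}\sum_{i=1}^n L_{S_i}(\boldw)$, and taking gradients yields $\nabla L_S(\boldw) = \frac{1}{n}\sum_{i=1}^n \nabla L_{S_i}(\boldw)$, which is exactly $\1\cdot N(\boldw)$ by definition~(\ref{equation:N(w)}). Chaining the three equalities closes the proof.

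There is really no hard step here; the lemma is a sanity check that the EC condition was defined correctly. The only place where one has to be attentive is the zero-padding argument in the second paragraph, i.e., the observation that the difference between $\bolda$ and $B\cdot N(\boldw^{(t)})$ is supported on $K_t^c$ and is therefore annihilated by $A(K_t)$. Everything else follows by a direct substitution of definitions.
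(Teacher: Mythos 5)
Your proof is correct and follows essentially the same route as the paper's: the paper introduces an auxiliary zeroed-out matrix $B'$ and shows $A(K_t)B'=A(K_t)B$, while you argue the same zero-padding cancellation directly on $\bolda$ versus $B\cdot N(\boldw^{(t)})$ using $\support{A(K_t)}\subseteq K_t$, which is the same idea with slightly different bookkeeping. You are in fact a bit more explicit than the paper in checking that $|K_t|\ge\max_{t}s_t$ so that the EC condition applies.
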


	\begin{proof} 
		For a given~$r\in[t]$, let~$\boldB'$ be the matrix whose~$i$'th row~$\boldb_i'$ equals~$\boldb_i$ if~$i\in \cK_r$, and zero otherwise. By the definition of~$\boldC$ \blue{in Algorithm~\ref{algorithm:distributedSGD}} it follows that~$\boldC=\boldB'\cdot \boldN(w^{(r)})$, and since~$\support{a(\cK_r)}\subseteq \cK_r$ it follows that~$a(\cK_r)\boldB'=a(\cK_r)\boldB$. Therefore, we have
	\begin{align*}
	\boldv_r&=a(\cK_r)\cdot \boldC=a(\cK_r)\cdot \boldB'\cdot \boldN(\boldw^{(r)})\\
	&=a(\cK_r)\cdot \boldB\cdot \boldN(\boldw^{(r)})=\1\cdot \boldN(\boldw^{(r)})\\
	&=\dfrac{1}{n}\sum_{i=1}^n\nabla L_{\cS_i}(\boldw)=\frac{1}{n}\sum_{i=1}^n\frac{1}{m/n}\sum_{\boldz\in \cS_i}\nabla\ell(\boldw^{(r)},\boldz)\\
	&=\frac{1}{m}\sum_{\boldz\in \cS}\nabla\ell(\boldw^{(r)},\boldz)=\nabla L_\cS(\boldw^{(r)})\qedhere.
	\end{align*}
\end{proof}
	The next lemma bounds the deviance of~$\boldv_r$ from the gradient of the empirical risk at the current model~$\boldw^{(r)}$ by using the function~$\epsilon$ and the \textit{spectral norm}~$\specnorm{\cdot}$ of $\boldN(\boldw)$. Recall that for a matrix~$\boldP$ the spectral norm is defined as~$\specnorm{\boldP}\triangleq \max_{\boldx: \norm{\boldx}=1}\norm{\boldP \boldx}$.
	
	\begin{lemma}\label{lemma:epsACguarantee}
		For a function~$\epsilon$ as above,	if~$a$ and~$\boldB$ satisfy the~$\epsilon$-AC condition, then~$d_2(\boldv_r,\nabla L_\cS(\boldw^{(r)}))\le \epsilon(|\cK_r^c|)\cdot \specnorm{\boldN(\boldw^{(r)})}$.
	\end{lemma}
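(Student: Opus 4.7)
The plan is to mimic the algebraic manipulation from the proof of Lemma~\ref{lemma:ECguarantee} as far as possible, and then replace the use of the identity $A(K_t)B=\1$ with a quantitative bound via the spectral norm. Specifically, I would first reuse the construction of the auxiliary matrix $B'\in\bC^{n\times n}$, obtained from $B$ by zeroing every row outside $K_t$. The definition of $\bolda$ yields $\bolda=B'\cdot N(\boldw^{(t)})$, and the assumption $\support{A(K_t)}\subseteq K_t$ again gives $A(K_t)B'=A(K_t)B$. Combining these, I obtain
\begin{equation*}
\boldv_t \;=\; A(K_t)\cdot \bolda \;=\; A(K_t)\cdot B\cdot N(\boldw^{(t)}).
\end{equation*}

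Next, I would express the target $\nabla L_S(\boldw^{(t)})$ in the same linear-algebraic form. Exactly as in the last chain of equalities in the proof of Lemma~\ref{lemma:ECguarantee}, one has $\1\cdot N(\boldw^{(t)})=\nabla L_S(\boldw^{(t)})$. Subtracting, the error becomes
\begin{equation*}
\boldv_t-\nabla L_S(\boldw^{(t)}) \;=\; \bigl(A(K_t)B-\1\bigr)\cdot N(\boldw^{(t)}).
\end{equation*}

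Finally, I would apply the standard sub-multiplicative inequality for the spectral norm to the row vector $A(K_t)B-\1$ and the matrix $N(\boldw^{(t)})$: for any row vector $\boldx$ and matrix $P$, $\norm{\boldx P}\le \norm{\boldx}\cdot\specnorm{P}$ (which follows by transposing and using the definition of $\specnorm{\cdot}$ given just before the lemma, together with $\specnorm{P^{\top}}=\specnorm{P}$). This yields
\begin{equation*}
d_2\bigl(\boldv_t,\nabla L_S(\boldw^{(t)})\bigr) \;\le\; \norm{A(K_t)B-\1}\cdot \specnorm{N(\boldw^{(t)})},
\end{equation*}
and invoking the $\epsilon$-AC condition from Definition~\ref{definition:ACcondition} bounds the first factor by $\epsilon(|K_t^c|)$, completing the proof.

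There is no real obstacle here: the only delicate point is the bookkeeping of row-versus-column conventions when applying the spectral-norm inequality, since in the paper's setup $A(K_t)$ is treated as a row vector acting from the left. Everything else is a direct rerun of the exact-computation proof with the equality $A(K_t)B=\1$ relaxed to the $\ell_2$-bound supplied by Definition~\ref{definition:ACcondition}.
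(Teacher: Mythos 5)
Your proposal is correct and follows essentially the same route as the paper's proof: reuse of the $B'$ reduction from Lemma~\ref{lemma:ECguarantee} to write $\boldv_t=A(K_t)BN(\boldw^{(t)})$, identification of $\nabla L_S(\boldw^{(t)})$ with $\1 N(\boldw^{(t)})$, and the sub-multiplicative bound $\norm{(A(K_t)B-\1)N(\boldw^{(t)})}\le d_2(A(K_t)B,\1)\cdot\specnorm{N(\boldw^{(t)})}$ followed by the $\epsilon$-AC hypothesis. The only addition you make is to flag the row-vector convention in applying the spectral-norm inequality, which the paper leaves implicit.
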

	\begin{proof}
	As in the proof of Lemma~\ref{lemma:ECguarantee}, we have that
	\begin{align*}
	d_2&(\boldv_r,\nabla L_\cS(\boldw^{(r)}))=\\
	&=d_2(a(\cK_r)\cdot \boldB'\cdot \boldN(\boldw^{(r)}),\1\cdot \boldN(\boldw^{(r)}))\\
	&=d_2(a(\cK_r)\cdot \boldB\cdot \boldN(\boldw^{(r)}),\1\cdot \boldN(\boldw^{(r)})\\
	&=\norm{(a(\cK_r)\boldB-\1)\cdot \boldN(\boldw^{(r)})}\\
	&\le d_2(a(\cK_r)\boldB,\1)\cdot\specnorm{\boldN(\boldw^{(r)})}\\
	&\le \epsilon(|\cK_r^c|)\cdot \specnorm{\boldN(\boldw^{(r)})}.\qedhere
	\end{align*}
\end{proof}
	Due to Lemma~\ref{lemma:ECguarantee} and Lemma~\ref{lemma:epsACguarantee}, in the remainder of this paper we focus on constructing~$a$ and~$\boldB$ that satisfy either the EC condition (Section~\ref{section:exact}) or the~$\epsilon$-AC condition (Section~\ref{section:approximate}). 
	
		\begin{remark}\label{remark:k<n}
		In some settings~\cite{TandonLDK17}, it is convenient to partition the data set~$\cS$ to~$\{\cS_i\}_{i=1}^k$, where~$k<n$. Notice that the definitions of~$a$ and~$\boldB$ above extend verbatim to this case as well. If~$a$ and~$\boldB$ satisfy the~EC condition, we have that~$a(\cK)\boldB=\1_n$ for every large enough~$\cK\subseteq [n]$. Hence, by omitting any~$n-k$ columns of~$\boldB$ to form a matrix~$\hat{\boldB}$, we have that~$a(\cK)\hat{\boldB}=\1_k$, and hence a scheme for any partition of~$\cS$ to~$k$ parts emerges instantly. This new scheme~$(a,\hat{\boldB})$ is resilient to an identical number of stragglers~$s$ and has lesser or equal storage overhead than~$(a,\boldB)$. Similarly, if~$a$ and~$\boldB$ satisfy the~$\epsilon$-AC condition for some~$\epsilon$, then the scheme~$(a,\hat{\boldB})$ has lesser or equal storage overhead, and an identical error function~$\epsilon$, since~$d_2(a(\cK)\hat{\boldB},\1_k)\le d_2(a(\cK)\boldB,\1_n)\le \epsilon(|\cK^c|)$ for any~$\cK\in\cP(n)$.
	\end{remark}
	
\section{Mathematical Notions}\label{section:notions}
This section provides a brief overview on the mathematical notions that are essential for the suggested schemes. The exact computation (Sec.~\ref{section:exact}) requires notions from coding theory, and the approximate one (Sec.~\ref{section:approximate}) requires notions from graph theory. The coding theoretic material in this section is taken from~\cite{Ronny}, which focuses on finite fields, and yet the given results extend verbatim to the real or complex case (see also~\cite{RealCodes}, Sec.~8.4).

	
For a field~$\bF\in\{\bR,\bC\}$ an~$[n,\kappa]$ (linear) code~$\cC$ over~$\bF$ is a subspace of dimension~$\kappa$ of~$\bF^n$. The minimum distance~$\delta$ of~$\cC$ is~$\min\{ d_H(\boldx,\boldy):\boldx,\boldy\in \cC,~\boldx\ne \boldy \}$, where~$d_H$ denotes the \textit{Hamming distance}~$d_H(\boldx,\boldy)\triangleq\left| \{i\vert x_i\ne y_i\} \right|$. Since the code is a linear subspace, it follows that the minimum distance of a code is equal to the minimum \textit{Hamming weight}~$w_H(\boldx)\triangleq\zeronorm{\boldx}=|\support{\boldx}|$ among the nonzero codewords in~$\cC$. The well-known \textit{Singleton} bound states that~$\delta\le n-\kappa+1$, and codes which attain this bound with equality are called \textit{Maximum Distance Separable} (MDS) codes. 

\begin{definition}\cite[Sec.~8]{Ronny}\label{definition:cyclicCode}
	A code~$\cC$ is called cyclic if the cyclic shift of every codeword is also a codeword, namely,
	\begin{align*}
		(c_1,c_2\ldots,c_n)\in \cC \Rightarrow (c_n,c_1,\ldots,c_{n-1})\in \cC.
	\end{align*}
\end{definition}
The \textit{dual code} of~$\cC$ is the subspace~$\cC^\bot\triangleq\{\boldy\in \bF^n\vert \boldy\cdot \boldc^\top=0\mbox{ for all }\boldc\in \cC \}$. Several well-known and easy to prove properties of MDS codes are used throughout this paper. 
	
	\begin{lemma}\label{lemma:MDSproperties}
		If~$\cC\subseteq \bF^n$ is an~$[n,\kappa]$ MDS code, then
		\begin{itemize}
			\item[A1.]  $\cC^\bot$ is an~$[n,n-\kappa]$ MDS code, and hence its minimum Hamming weight is~$\kappa+1$.
			\item[A2.] For any subset~$\cK\subseteq [n]$ of size~$n-\kappa+1$ there exists a codeword in~$\cC$ whose support (i.e., the set of nonzero indices) is~$\cK$.
			\item[A3.] The reverse code~$\cC^R \triangleq \{(c_n,\ldots,c_1)\vert (c_1,\ldots,c_n)\in \cC \}$ is an~$[n,\kappa]$ MDS code.
		\end{itemize}
	\end{lemma}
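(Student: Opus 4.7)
The plan is to handle the three items separately, each relying on a standard coding-theoretic argument that applies verbatim over $\bR$ or $\bC$ since only linear algebra (not finite field structure) is used. The main workhorse is the well-known equivalence: a linear $[n,\kappa]$ code is MDS if and only if every set of $\kappa$ coordinate positions forms an information set (equivalently, every $\kappa$ columns of a generator matrix are linearly independent).

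For A1, I would begin with a generator matrix $G\in\bF^{\kappa\times n}$ of $C$ and a parity-check matrix $H\in\bF^{(n-\kappa)\times n}$, which serves as a generator matrix for $C^\bot$. The MDS property says every $\kappa$ columns of $G$ are independent, which by the standard rank duality translates to the statement that every $n-\kappa$ columns of $H$ are independent. This means no nonzero codeword of $C^\bot$ can have Hamming weight at most $n-\kappa$, forcing the minimum distance of $C^\bot$ to be at least $\kappa+1$. Combined with the Singleton bound applied to $C^\bot$ (which has dimension $n-\kappa$), we get equality, so $C^\bot$ is an $[n,n-\kappa]$ MDS code.

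For A2, fix $K\subseteq[n]$ with $|K|=n-\kappa+1$, so $|K^c|=\kappa-1$. Consider the projection $\pi_{K^c}:C\to\bF^{\kappa-1}$ onto the coordinates indexed by $K^c$. Since $\dim C=\kappa$ and the codomain has dimension $\kappa-1$, the rank-nullity theorem gives a nonzero $c\in C$ with $\pi_{K^c}(c)=0$, i.e., $\support(c)\subseteq K$. Because $C$ has minimum distance $n-\kappa+1=|K|$, the weight $\zeronorm{c}$ is at least $|K|$, and combined with $\support(c)\subseteq K$ this forces $\support(c)=K$.

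For A3, the coordinate-reversal map $(c_1,\ldots,c_n)\mapsto(c_n,\ldots,c_1)$ is a linear bijection of $\bF^n$ that preserves Hamming weight, so $C^R$ is a linear subspace of dimension $\kappa$ whose minimum distance equals the minimum distance of $C$, namely $n-\kappa+1$. Hence $C^R$ is also an $[n,\kappa]$ MDS code. I expect A1 to be the most delicate step because one has to cleanly invoke the columns-of-$G$ versus columns-of-$H$ duality; the other two parts are essentially one-line arguments once the right projection (A2) or isometry (A3) is identified.
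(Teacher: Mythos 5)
Your proofs of A2 and A3 take essentially the paper's route: A2 is the same argument as the paper's (a nonzero vector in the left kernel of the $\kappa\times(\kappa-1)$ restriction of the generator matrix; your rank--nullity projection is the same thing in different clothing), and A3 likewise just observes that a coordinate permutation preserves linearity, dimension, and Hamming weight. The paper cites a textbook for A1 rather than proving it, so there you are on your own.

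Your A1 argument, however, has a slip in the middle step. From ``every $\kappa$ columns of $G$ are independent'' you pass to ``every $n-\kappa$ columns of $H$ are independent'' and then assert that this means ``no nonzero codeword of $C^\bot$ can have Hamming weight at most $n-\kappa$.'' That last implication is a non sequitur, and the bound is wrong. Independence of columns of $H$ controls the weights of codewords of $C$, not $C^\bot$: $H$ is the parity-check matrix of $C$, so a weight-$w$ codeword $c$ of $C$ (satisfying $Hc^\top=0$) exhibits a dependence among $w$ columns of $H$. Thus ``every $n-\kappa$ columns of $H$ are independent'' is just the MDS property of $C$ restated and says nothing new about $C^\bot$. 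Even the numerology is off: to conclude the minimum distance of $C^\bot$ is at least $\kappa+1$, you would need ``no nonzero codeword of weight at most $\kappa$,'' not $n-\kappa$. The fix is to drop the detour entirely: since $G$ is a parity-check matrix for $C^\bot$ (as $C^\bot=\{y:Gy^\top=0\}$), the fact that every $\kappa$ columns of $G$ are independent directly says that no nonzero $y$ with $Gy^\top=0$ has weight at most $\kappa$, so the minimum distance of $C^\bot$ is at least $\kappa+1$; Singleton applied to the $[n,n-\kappa]$ code $C^\bot$ then forces equality. With that one correction, A1 is sound.
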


\begin{proof}
	For A1 see \cite[Prob.~4.1]{Ronny}. The proof of A3 is trivial since permuting the coordinates of a code does not alter its minimum distance. For A2, let~$\boldG\in\bF^{\kappa\times n}$ be the \textit{generator matrix} of~$\cC$, i.e., a matrix whose rows are a basis to~$\cC$. By restricting~$\boldG$ to the columns indexed by~$\cK^c$ we get a~$\kappa\times (\kappa-1)$ matrix, which has a nonzero vector~$\boldy$ in its left kernel, and hence~$\boldy\boldG$ is a codeword in~$\cC$ which is zero in the entries that are indexed by~$\cK^c$. Since the minimum distance of~$\cC$ is~$n-\kappa+1$, it follows that~$\boldy\boldG$ has nonzero values in entries that are indexed by~$\cK$, and the claim follows.
\end{proof}

Two common families of codes are used in the sequel---Reed-Solomon (RS) codes and\allowbreak~Bose-Chaudhuri-Hocquen\-ghem (BCH) codes. An~$[n,\kappa]$ RS code~$\cC$ is defined by a set of~$n$ distinct \textit{evaluation points}~$\alpha_0,\ldots,\alpha_{n-1}\in\bF$ as
$$\cC=\{ (f(\alpha_0),f(\alpha_1),\ldots,f(\alpha_{n-1})) : f\in \bF^{<\kappa}[x] \},$$ where~$\bF^{<\kappa}[x]$ is the set of polynomials of degree less than~$\kappa$ and coefficients from~$\bF$ in the variable~$x$. Alternatively, RS codes can be defined as~$\{ \boldy \boldV\vert \boldy\in\bF^\kappa \}$, where~$\boldV\in\bF^{\kappa\times n}$ is a Vandermonde matrix on~$\{ \alpha_i \}_{i=0}^{n-1}$, i.e., $v_{i,j}=\alpha_{j-1}^{i-1}$ for every~$(i,j)\in[\kappa]\times [n]$.
It is widely known that RS codes are MDS codes, and in some cases, they are also cyclic.

In contrast with RS codes, where every codeword is a vector of \textit{evaluations} of a polynomial, a codeword in a BCH code is a vector of \textit{coefficients} of a polynomial; that is, a codeword $\boldc=(c_0,c_1,\ldots,c_{n-1})$ is identified by $c(x)\triangleq c_0+c_1x+\ldots+c_{n-1}x^{n-1}$. For a field~$\bK$ that contains~$\bF$ and a set~$\cR\subseteq \bK$, a BCH code~$\cC$ is defined as~$\cC=\{ c\in\bF^n  \vert c(r)=0 \mbox{ for all }r\in \cR \}$. The set~$\cR$ is called \textit{the roots of~$\cC$}, or alternatively, $\cC$ is said to be a BCH code on~$\cR$ over~$\bF$.
For example, a set of complex numbers~$\cR\subseteq \bC$ defines a BCH code on~$\cR$ over~$\bR$, which is the set of real vectors whose corresponding polynomials vanish on~$\cR$.

\begin{lemma}\cite{RealCodes,Ronny}\label{lemma:BCHcyclic}
	If all elements of~$\cR\subseteq \bC$ are roots of unity of order~$n$, then the BCH code~$\cC$ on~$\cR$ over~$\bR$ is cyclic.
\end{lemma}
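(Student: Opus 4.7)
The plan is to verify the cyclic-shift property directly by working with the polynomial representation of codewords. Let $c=(c_0,c_1,\ldots,c_{n-1})\in C$, so by definition $c(r)=0$ for every $r\in R$, and let $c'=(c_{n-1},c_0,\ldots,c_{n-2})$ be the cyclic shift whose associated polynomial is $c'(x)=c_{n-1}+c_0 x+\ldots+c_{n-2}x^{n-1}$. The goal is to show that $c'\in C$.

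The key identity I would establish by direct expansion is
\begin{align*}
c'(x)=x\cdot c(x)-c_{n-1}(x^n-1),
\end{align*}
which holds in $\mathbb{R}[x]$ since multiplying $c(x)$ by $x$ produces the shifted polynomial plus the extraneous top term $c_{n-1}x^n$, and subtracting $c_{n-1}(x^n-1)$ both removes that term and reintroduces the constant $c_{n-1}$.

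Next I would invoke the hypothesis on $R$: every $r\in R$ is an $n$-th root of unity, so $r^n-1=0$. Substituting $x=r$ in the identity and using $c(r)=0$ gives $c'(r)=r\cdot c(r)-c_{n-1}\cdot 0=0$ for every $r\in R$. Since $c'$ is a real vector (being a permutation of the coordinates of $c\in\mathbb{R}^n$), this shows $c'\in C$, and Definition~\ref{definition:cyclicCode} yields that $C$ is cyclic.

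There is no real obstacle in this argument; the only subtlety worth remarking on is that $R$ need not be contained in $\mathbb{R}$, so one must take care that $c$ is specified to be real while the vanishing conditions are imposed over the larger field $\mathbb{C}$. This is automatically compatible with cyclic shifts because shifts preserve reality of the entries, and the polynomial identity above is valid regardless of the field over which the roots are considered.
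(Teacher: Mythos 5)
Your proof is correct and follows essentially the same approach as the paper: both establish the identity $\tilde c(x)=x\cdot c(x)-c_{n-1}(x^n-1)$ for the cyclic shift and then observe that each $r\in R$ satisfies $r^n=1$ and $c(r)=0$, so $\tilde c(r)=0$. Your extra remark about $R\subseteq\mathbb{C}$ versus $c\in\mathbb{R}^n$ is a sensible clarification but does not change the argument.
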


\begin{proof}
	If~$c(x)$ is a codeword in~$\cC$, then its cyclic shift is given by~$\tilde{c}(x)\triangleq c(x)\cdot x \bmod (x^n-1)=x\cdot c(x)-c_{n-1}\cdot (x^n-1)$. Since every~$r\in \cR$ is a root of unity of order~$n$, it follows that
	\begin{align*}
	\tilde{c}(r)=r \cdot c(r)-c_{n-1}\cdot(r^n-1)=r\cdot c(r)=0,
	\end{align*}
	and hence~$\tilde{c}$ is a codeword in~$\cC$.
\end{proof}

Further, the structure of~$R$ may also imply a lower bound on the distance of~$C$.

\begin{theorem}(The BCH bound)~\cite{RealCodes,Ronny}\label{theorem:BCHbound}
	If~$\cR$ contains a subset of~$\ell$ consecutive powers of a primitive root of unity (i.e., a subset of the form~$\omega^{b},\omega^{b+1},\ldots,\omega^{b+\ell-1}$, where~$\omega$ is a  primitive $n$'th root of unity), then the minimum distance of~$\cC$ is at least~$\ell+1$.
\end{theorem}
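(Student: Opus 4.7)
The plan is to prove the BCH bound by contradiction: assume that there exists a nonzero codeword $c(x) \in C$ of Hamming weight $w \leq D$, and derive that its coefficients must all vanish by exhibiting an invertible Vandermonde system.

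First I would write $c(x) = c_0 + c_1 x + \cdots + c_{n-1} x^{n-1}$ and let $\{i_1, \ldots, i_w\} \subseteq \{0, 1, \ldots, n-1\}$ denote the support of $c$, so that the only (possibly) nonzero coefficients are $c_{i_1}, \ldots, c_{i_w}$. By assumption $R$ contains $\omega^{b}, \omega^{b+1}, \ldots, \omega^{b+D-1}$, and since $w \leq D$, in particular the first $w$ of these are roots of $c(x)$. Thus for each $j \in \{0, 1, \ldots, w-1\}$,
\begin{align*}
0 = c(\omega^{b+j}) = \sum_{\ell=1}^{w} c_{i_\ell}\,\omega^{(b+j)\,i_\ell}.
\end{align*}

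Next I would recast this as a matrix equation $M \boldx = 0$, where $\boldx = (c_{i_1}, \ldots, c_{i_w})^\top$ and $M \in \bC^{w \times w}$ has entries $M_{j+1,\ell} = \omega^{(b+j)\,i_\ell}$ for $j \in \{0, \ldots, w-1\}$ and $\ell \in \{1, \ldots, w\}$. Factoring $\omega^{b\,i_\ell}$ out of the $\ell$-th column, $M$ becomes $V \cdot D_{\mathrm{diag}}$, where $D_{\mathrm{diag}} = \diag(\omega^{b\,i_1}, \ldots, \omega^{b\,i_w})$ is invertible (its diagonal entries are roots of unity, hence nonzero), and $V$ is the $w\times w$ Vandermonde matrix with $V_{j+1,\ell} = (\omega^{i_\ell})^{j}$.

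It remains to argue that $V$ is invertible, which holds if and only if the values $\omega^{i_1}, \ldots, \omega^{i_w}$ are pairwise distinct. This is exactly where primitivity is used: since $\omega$ is a primitive $n$-th root of unity and $i_1, \ldots, i_w$ are distinct elements of $\{0, 1, \ldots, n-1\}$, we have $\omega^{i_\ell} \ne \omega^{i_{\ell'}}$ whenever $\ell \ne \ell'$. Therefore $M$ is invertible, so $\boldx = 0$, contradicting the choice of the support. Hence no nonzero codeword of weight at most $D$ exists, and the minimum distance is at least $D+1$. I do not anticipate a genuine obstacle here; the only delicate point is being careful that the $i_\ell$'s lie in a complete residue system modulo $n$ so that primitivity of $\omega$ gives distinctness of $\omega^{i_\ell}$, but this is ensured by the convention $c \in \bR^n$ with coordinates indexed by $\{0, \ldots, n-1\}$.
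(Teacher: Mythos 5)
Your proof is correct and is the standard argument for the BCH bound: restrict to the support $\{i_1,\ldots,i_w\}$ of a putative low-weight codeword, use the first $w$ consecutive roots $\omega^{b},\ldots,\omega^{b+w-1}$ to set up a $w\times w$ system, factor out the diagonal $\diag(\omega^{b i_1},\ldots,\omega^{b i_w})$ to expose a Vandermonde matrix in the nodes $\omega^{i_\ell}$, and invoke primitivity of $\omega$ to get distinct nodes and hence invertibility, forcing the codeword to be zero. The paper itself states this theorem without proof, citing \cite{RealCodes,Ronny}; your argument is precisely the one given in those references, so there is no genuine divergence to report. The only cosmetic quibble is that if $\{i_1,\ldots,i_w\}$ is the support, the $c_{i_\ell}$ are nonzero by definition (not merely ``possibly nonzero''), which is in fact what makes the conclusion $\boldx=0$ a contradiction rather than a vacuity.
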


In the remainder of this section, a brief overview on expander graphs is given. The interested reader is referred to~\cite{Expanders1} for further details. Let~$G=(\cV,\cE)$ be a~$d$-regular, undirected, and connected graph on~$n$ nodes. Let~$\boldA_G\in\bR^{n\times n}$ be the \textit{adjacency matrix} of~$G$, i.e., $(\boldA_G)_{i,j}=1$ if $\{i,j \}\in \cE$, and~$0$ otherwise. Since $\boldA_G$ is a real symmetric matrix, it follows that it has~$n$ real eigenvalues~$\lambda_1\ge\lambda_2\ge \ldots \ge \lambda_n$, and denote~$\lambda\triangleq \max\{|\lambda_2|,|\lambda_n| \}$. It is widely known~\cite{Expanders1} that $\lambda_1=d$, and that $\lambda_n\ge -d$, where equality holds if and only if~$G$ is bipartite. Further, it also follows from~$\boldA_G$ being real and symmetric that it has a basis of orthogonal real eigenvectors $\boldv_1=\1,\boldv_2,\ldots,\boldv_n$, and w.l.o.g assume that $\norm{\boldv_i}=1$ for every~$i\ge 2$. The parameters~$\lambda$ and~$d$ are related by the celebrated Alon-Boppana Theorem.
	
\begin{theorem}\label{theorem:lambdaLowerBound}\cite{Expanders1}
		An infinite sequence of~$d$ regular graphs on~$n$ vertices satisfies that $\lambda\ge 2\sqrt{d-1}-o_n(1)$, where~$o_n(1)$ is an expression which tends to zero as~$n$ tends to infinity.
	\end{theorem}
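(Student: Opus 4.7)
The plan is to prove the Alon--Boppana bound via the variational characterization of eigenvalues. Since $\lambda \ge \lambda_2$, it suffices to exhibit a nonzero test vector $f\in\bR^n$ orthogonal to $\1$ whose Rayleigh quotient $\langle f,A_G f\rangle/\langle f,f\rangle$ is at least $2\sqrt{d-1}-o_n(1)$. I would then invoke
\begin{align*}
\lambda_2 = \max\left\{ \tfrac{\langle f,A_G f\rangle}{\langle f,f\rangle} : f\ne 0,\ f\perp \1 \right\}
\end{align*}
to conclude. The whole proof revolves around exploiting the fact that, locally, the ball of radius $k$ in a $d$-regular graph covers at most $1+d(d-1)^{k-1}$ vertices, so for $n$ large one can always find pairs of vertices that are far apart.

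My first step would be to fix $k=k(n)\to\infty$ with $k=o(\log_{d-1} n)$, and locate two vertices $u,v\in V$ with graph distance at least $2k+1$ (their existence is guaranteed by the ball-size bound above). For each $w\in\{u,v\}$ let $V_i(w)$ denote the set of vertices at distance exactly $i$ from $w$ and define the radial test function
\begin{align*}
    f_w(x)=\begin{cases}(d-1)^{-i/2} & \text{if } x\in V_i(w),\ 0\le i\le k,\\ 0 & \text{otherwise.}\end{cases}
\end{align*}
Because $d(u,v)\ge 2k+1$, the supports of $f_u$ and $f_v$ are disjoint. I would then take a linear combination $f=\alpha f_u+\beta f_v$ with $\alpha,\beta$ not both zero and $\langle f,\1\rangle=0$; this is always possible since one has two degrees of freedom against a single linear constraint.

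Next I would estimate the Rayleigh quotient. By disjointness of supports the numerator and denominator split as $\alpha^2$ times an $f_u$-contribution plus $\beta^2$ times an $f_v$-contribution, so it suffices to show that for each $w$,
\begin{align*}
    \frac{\langle f_w,A_G f_w\rangle}{\langle f_w,f_w\rangle}\ge 2\sqrt{d-1}\left(1-\frac{c}{k}\right)
\end{align*}
for some absolute constant $c$. For the denominator I would write $\langle f_w,f_w\rangle=\sum_{i=0}^k |V_i(w)|(d-1)^{-i}$, and for the numerator I would sum the contributions of edges according to which two shells $V_i(w),V_j(w)$ their endpoints lie in. Using that every vertex in $V_{i+1}(w)$ has at least one neighbor in $V_i(w)$ (and that the graph is $d$-regular, so each vertex has exactly $d$ neighbors), the ``bulk'' contribution from inter-shell edges works out to exactly $2\sqrt{d-1}$ times $\sum_{i=0}^{k-1}|V_i(w)|(d-1)^{-i}$, while the ``missing'' boundary contribution at shell $k$ costs only one shell out of $k$ and is therefore of order $1/k$ of the total. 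Choosing $k=k(n)\to\infty$ slowly yields the desired $2\sqrt{d-1}-o_n(1)$ bound.

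The main obstacle is the careful bookkeeping of the Rayleigh quotient in the presence of possible short cycles inside the radius-$k$ ball, which can make shell sizes smaller than in the $d$-regular tree and can create intra-shell edges. The clean tree-like calculation has to be replaced by the inequality above, and one has to verify that every such deviation only \emph{increases} $\langle f_w,A_G f_w\rangle$ relative to the tree (e.g.\ an edge inside $V_i(w)$ contributes $2(d-1)^{-i}\ge 0$), while bounding the denominator from above by the tree value. Once these monotonicity statements are nailed down, the asymptotic choice of $k$ is straightforward, and combined with the fact that $n$ being large guarantees two far-apart vertices the conclusion follows.
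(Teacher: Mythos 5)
Your test vector and overall strategy are right (this is Nilli's proof, and the crucial per-vertex bound you want does hold), but the argument you sketch for that bound has a genuine gap. You propose to show $\langle f_w,A_G f_w\rangle/\langle f_w,f_w\rangle\ge 2\sqrt{d-1}(1-c/k)$ by arguing that every deviation from the $d$-regular tree only \emph{increases} $\langle f_w,A_G f_w\rangle$, while ``bounding the denominator from above by the tree value.'' Neither step works. Shrinking shell sizes $|V_i(w)|<d(d-1)^{i-1}$ remove inter-shell edges and hence \emph{decrease} the numerator as well, so the numerator is not bounded below by $2\sqrt{d-1}$ times the tree denominator; and an upper bound on $\langle f_w,f_w\rangle$ is the wrong direction for a lower bound on the ratio. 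Once you try to compare both numerator and denominator to the tree, the ``boundary is only $1/k$ of the total'' step also breaks: $\langle f_w,f_w\rangle=\sum_{i\le k}|V_i(w)|(d-1)^{-i}$ need not grow like $k$ (for graphs of polynomial volume growth it converges), so a fixed boundary term cannot simply be charged to a linearly-growing bulk.

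The repair, which salvages your plan, is to drop the tree comparison and account edge-by-edge. Using $d$-regularity, write
\[
\langle f_w,A_G f_w\rangle - 2\sqrt{d-1}\,\langle f_w,f_w\rangle \;=\; \sum_{(x,y)\in E}\left[\,2f_w(x)f_w(y) - \tfrac{2\sqrt{d-1}}{d}\bigl(f_w(x)^2+f_w(y)^2\bigr)\right].
\]
Every edge whose endpoints lie in $V_i(w)\cup V_{i+1}(w)$ with $i+1\le k$ contributes $\ge 0$: the bracket vanishes exactly for inter-shell edges (since $2(d-1)^{-i-1/2}=\tfrac{2\sqrt{d-1}}{d}[(d-1)^{-i}+(d-1)^{-i-1}]$) and is $\ge 0$ for intra-shell edges because $d\ge 2\sqrt{d-1}$. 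Only the at most $(d-1)|V_k(w)|$ edges from $V_k(w)$ to $V_{k+1}(w)$ are negative, each contributing $-\tfrac{2\sqrt{d-1}}{d}(d-1)^{-k}$, so the total deficit is at most $\tfrac{2(d-1)^{3/2}}{d}\,s_k$ where $s_i\triangleq |V_i(w)|(d-1)^{-i}$. The missing ingredient is then the monotonicity $s_{i+1}\le s_i$ for $i\ge 1$ (from $|V_{i+1}|\le (d-1)|V_i|$, since each vertex of $V_i$ already spends one edge on $V_{i-1}$), which gives $s_k\le \tfrac{1}{k}\sum_{i=1}^{k}s_i\le \tfrac{1}{k}\langle f_w,f_w\rangle$ and hence the desired $O(1/k)$ relative deficit with no reference to the tree at all. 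One further small fix: require $d(u,v)\ge 2k+2$, not $2k+1$; disjoint supports are not enough, you also need no edge between the two balls, since the cross term $2\alpha\beta\sum f_u(x)f_v(y)$ is negative (as $\alpha\beta<0$) and would spoil the clean split of the Rayleigh quotient.
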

	
	Constant degree regular graphs (i.e., families of graphs with fixed degree~$d$ that does not depend on~$n$) for which~$\lambda$ is small in comparison with~$d$ are largely referred to as \textit{expanders}. In particular, graphs which attain the above bound asymptotically (i.e.,~$\lambda\le 2\sqrt{d-1}$) are called \textit{Ramanujan graphs}, and several efficient constructions are known~\cite{RamanujanGraphs,cohen2016ramanujan}. \blue{Since explicit constructions of Ramanujan graphs are often rather intricate, one may resort to choosing a random regular graph and verify its expansion by computing the respective eigenvalues. This process that is known to produce a good expander with high probability~\cite[Theorem~7.10]{Expanders1}, and is used in our experiments.}

\section{Exact Gradient Coding from Cyclic MDS Codes}\label{section:exact}

	
For a given~$n$ and~$s$, let~$\cC$ be a cyclic $[n,n-s]$ MDS code over~$\bF$ that contains~$\1$ (explicit constructions of such codes are given in the sequel). According to Lemma~\ref{lemma:MDSproperties}, there exists a codeword $\boldc_1\in \cC$ whose support is~$\{1,\ldots,s+1\}$. Let~$\boldc_2,\ldots,\boldc_n$ be all cyclic shifts of~$\boldc_1$, which lie in~$\cC$ by its cyclic property. Finally, let~$\boldB$ be the~$n\times n$ matrix whose \textit{columns} are $\boldc_1,\ldots,\boldc_n$, i.e., $\boldB\triangleq (\boldc_1^\top,\boldc_2^\top,\ldots,\boldc_n^\top)$. The following lemma provides some important properties of~$\boldB$.
	
	\begin{lemma} \label{lemma:matrixB}
		The matrix~$\boldB$ satisfies the following properties.
		\begin{itemize}
			\item [B1.] $\zeronorm{\boldb}=s+1$ for every row~$\boldb$ of~$\boldB$.
			\item [B2.] Every row of~$\boldB$ is a codeword in~$\cC^R$.
			\item [B3.] The column span of~$\boldB$ is the code~$\cC$.
			\item [B4.] Every set of~$n-s$ rows of~$\boldB$ are linearly independent over~$\bF$.
		\end{itemize}
	\end{lemma}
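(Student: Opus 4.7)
My plan is to prove the four properties in the listed order, with B3 serving as the linchpin and B4 following from B3 via MDS duality. For B1, I would note that each column $c_j$ is the cyclic shift of $c_1$ by $j-1$ positions, so its support is the shifted interval $\{j,j+1,\ldots,j+s\}$ reduced mod $n$. The entry $B_{i,j}$ is thus nonzero precisely when $i$ lies in this interval—equivalently, when $j$ lies in the cyclic window $\{i-s,\ldots,i\}$—yielding exactly $s+1$ nonzeros per row. For B2, the $i$-th row of $B$, read left-to-right, is itself a cyclic shift of the reversal of $c_1$; since $c_1\in C$ puts its reversal into $C^R$, and since cyclicity of $C$ forces cyclicity of $C^R$ (coordinate reversal commutes with cyclic shifts up to sign of the shift), every such cyclic shift still lies in $C^R$.

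The heart of the argument is B3. Viewing $c_1$ as the polynomial $c_1(x)=\sum_{i=1}^{s+1}(c_1)_i\,x^{i-1}$, its small support forces $\deg c_1(x)\le s$. Since $c_1\in C$ and $C$ is generated as an ideal in $\bF[x]/(x^n-1)$ by its unique monic generator polynomial $g(x)$ of degree $n-\dim C=s$, we have $g(x)\mid c_1(x)$ in $\bF[x]$---no reduction modulo $x^n-1$ is needed, since $\deg c_1<n$. Comparing degrees pins down $c_1(x)=\alpha\,g(x)$ for some nonzero $\alpha\in\bF$, so the ideal generated by $c_1$ coincides with that generated by $g$, namely with $C$ itself. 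Because the columns of $B$ are precisely the cyclic shifts of $c_1$, their linear span---this very ideal---equals $C$.

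Given B3, property B4 will follow by MDS duality. Suppose, for contradiction, that some set of $n-s$ rows of $B$ indexed by $K\subseteq[n]$ were linearly dependent, witnessed by a nontrivial coefficient vector $v\in\bF^n$ supported on $K$. Then $v^\top B=0$, so $v$ is orthogonal to every column of $B$; by B3 this means $v\in C^\bot$. By property A1 in Lemma~\ref{lemma:MDSproperties}, $C^\bot$ is an $[n,s]$ MDS code and therefore has minimum Hamming weight $n-s+1$. But $|\support{v}|\le|K|=n-s<n-s+1$, forcing $v=0$ and yielding the desired contradiction.

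The main obstacle I anticipate is B3 itself, which rests on the slightly subtle observation that a codeword whose support is an initial segment of length equal to the minimum distance must be a scalar multiple of the generator polynomial. Once this is in place, the remaining structural properties of $B$ fall out cleanly from the basic cyclic and MDS-duality machinery already recorded in Lemma~\ref{lemma:MDSproperties}.
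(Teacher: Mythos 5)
Your proof is correct, and properties B1, B2, and B4 are established in essentially the same way as the paper does: B1 and B2 by reading off the circulant structure of $B$, and B4 by the same contradiction argument via $C^\bot$ being an $[n,s]$ MDS code of minimum weight $n-s+1$.

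The genuine divergence is in B3. The paper's route is purely linear-algebraic and very concrete: since $c_1$ has support $\{1,\ldots,s+1\}$, the leftmost $n-s$ columns of $B$ form a lower-triangular block with leading (first nonzero) entries in distinct rows $1,2,\ldots,n-s$, hence are linearly independent; combined with each column lying in $C$ and $\dim C = n-s$, this pins the column span to $C$. You instead invoke the generator-polynomial machinery for cyclic codes over $\bF[x]/(x^n-1)$: $C$ is the ideal $\langle g(x)\rangle$ with $\deg g = n - \dim C = s$, $g\mid c_1$ in $\bF[x]$ since $\deg c_1 < n$, and equality of degrees forces $c_1 = \alpha g$ for a nonzero scalar, so the linear span of the cyclic shifts of $c_1$ (i.e., the ideal $\langle c_1\rangle$) is exactly $C$. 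This is a legitimate and arguably more structural argument — it explains \emph{why} a minimum-weight codeword supported on an initial segment must be (up to scalar) the generator polynomial — but it requires importing the generator-polynomial theory, which over $\bR$ or $\bC$ does hold (both are characteristic zero) yet is heavier than what the paper needs. The paper's direct leading-coefficient argument is shorter and self-contained, whereas yours illuminates the ring-theoretic reason $c_1$ is canonical. Both are valid; the tradeoff is elementarity versus conceptual clarity.
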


\begin{proof}
	To prove~B1 and~B2, observe that~$\boldB$ is of the following form, where~$\boldc_1\triangleq (\beta_1,\ldots,\beta_{s+1},0,\ldots,0)$.
	\begin{align*}
		\begin{pmatrix}
			\beta_1      & 0           & \cdots & 0            & \beta_{s+1} & \beta_{s}   & \ldots & \beta_2\\
			\beta_2      & \beta_1     & 0 		& \cdots       & 0 	         & \beta_{s+1} & \ldots & \beta_3\\
			\vdots       & \vdots      & \ddots & \ddots       & \vdots      & \ddots      & \ddots & \vdots  \\
			\beta_{s}    & \beta_{s-1} & \cdots & \beta_1            & 0      & \cdots       & 0      & \beta_{s+1}\\
			\beta_{s+1}  & \beta_s     & \cdots & \beta_2      & \beta_1           & 0      & \cdots       & 0\\
			0            & \beta_{s+1} & \cdots & \beta_3      & \beta_2     & \beta_1           & \cdots & 0\\
			\vdots       & \vdots      & \ddots & \vdots       & \vdots      & \vdots      & \ddots & \vdots  \\
			0            & \cdots      & 0      & \beta_{s+1}  & \beta_s     & \beta_{s-1}      &  \cdots      & \beta_1  \\
		\end{pmatrix}.
	\end{align*}
	
	To prove~B3, notice that the leftmost~$n-s$ columns of~$\boldB$ have leading coefficients in different positions, and hence they are linearly independent. Thus, the dimension of the column span of~$\boldB$ is at least~$n-s$, and since $\dim \cC=n-s$, the claim follows.

	To prove~B4, assume for contradiction that there exist a set of~$n-s$ linearly dependent rows. Hence, there exists a vector~$\boldv\in\bF^n$ of Hamming weight $n-s$ such that~$\boldv\boldB=0$. According to~B3, the columns of~$\boldB$ span~$\cC$, and hence the vector~$\boldv$ lies in the dual code~$\cC^\bot$ of~$\cC$. Since~$\cC^\bot$ is an~$[n,s]$ MDS code by Lemma~\ref{lemma:MDSproperties}, it follows that the minimum Hamming weight of a codeword in~$\cC^\bot$ is~$n-s+1$, a contradiction.
\end{proof}

	Since~$\cC^R$ is of dimension~$n-s$, it follows from parts~B2 and~B4 of Lemma~\ref{lemma:matrixB} that every set of~$n-s$ rows of~$\boldB$ are a basis to~$\cC^R$. Furthermore, since~$\1\in \cC$ it follows that~$\1\in \cC^R$. Therefore, there exists a function~$a:\cP(n)\to \bF^n$ such that for any set~$\cK\subseteq[n]$ of size~$n-s$ we have that~$\support{a(\cK)}\blue{\subseteq }\cK$ and $a(\cK)\cdot \boldB=\1$. 
	
	
	\begin{theorem}\label{theorem:cyclicMDS}.
		The above~$a$ and~$\boldB$ satisfy the EC condition (Definition~\ref{definition:ECcondition}).
	\end{theorem}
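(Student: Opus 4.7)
The theorem essentially asks me to verify that the function $A$ defined just before the statement indeed exists and does its job, given the machinery already in place. Since Lemma~\ref{lemma:matrixB} has already done most of the heavy lifting, my plan is to assemble the pieces rather than introduce new ideas.

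First I would focus on sets $K \subseteq [n]$ of size exactly $n-s$. By property B2 of Lemma~\ref{lemma:matrixB} every row of $B$ lies in $C^R$, and by property B4 any $n-s$ rows of $B$ are linearly independent. Since $\dim C^R = n-s$ (using A3 of Lemma~\ref{lemma:MDSproperties}), the rows of $B$ indexed by $K$ form a basis of $C^R$. Next I would observe that $\1 \in C$ by assumption and $\1$ is fixed under reversal, so $\1 \in C^R$. Therefore there is a unique vector of coefficients $(a_i)_{i \in K}$ expressing $\1$ as a combination of the basis rows, and setting $A(K)_i = a_i$ for $i \in K$ and $A(K)_i = 0$ otherwise yields a vector with $\support{A(K)} = K$ satisfying $A(K) \cdot B = \1$.

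To finish, I would extend to arbitrary $K$ with $|K| \ge n - s \ge \max_{t} s_t$ (taking $s = \max_t s_t$ in the construction). Pick any $K' \subseteq K$ of size $n-s$ and set $A(K) \triangleq A(K')$. Then $\support{A(K)} = K' \subseteq K$, which is consistent with the global assumption $\support{A(K)} \subseteq K$ made in Section~\ref{section:framework}, and trivially $A(K) \cdot B = A(K') \cdot B = \1$. This verifies the EC condition for every admissible~$K$.

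There is no real obstacle here; the only point that warrants care is confirming that the domain of~$A$ is handled for all relevant~$K$, not just the minimum-size ones. The definition given in the paragraph preceding the theorem only explicitly addresses $|K| = n-s$, so I would make the extension to larger~$K$ explicit, and note that choosing~$K' \subseteq K$ arbitrarily is harmless because the EC guarantee only requires $A(K) \cdot B = \1$ and $\support{A(K)} \subseteq K$, both of which are preserved under such a restriction.
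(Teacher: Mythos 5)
Your proof is correct and follows essentially the same route as the paper, which establishes the theorem in the paragraph immediately preceding it (using B2 and B4 to conclude that any $n-s$ rows of $B$ form a basis of $C^R$, noting $\1 \in C^R$, and defining $A(K)$ via the unique basis expansion). Your explicit extension of $A$ to sets $K$ with $|K| > n-s$ by restricting to a size-$(n-s)$ subset $K'\subseteq K$ is a minor but genuine gap-filling, since the paper only constructs $A(K)$ for $|K| = n-s$ while Definition~\ref{definition:ECcondition} requires the identity $A(K)\cdot B = \1$ for all $K$ of size at least $n-s$.
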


	In the remainder of this section, two cyclic MDS codes over the complex numbers and the real numbers are suggested, from which the construction in Theorem~\ref{theorem:cyclicMDS} can be obtained. These constructions are taken from~\cite{RealCodes} (Sec.~II.B), and are given with a few adjustments to our case. The contributions of these codes is summarized in the following theorem. 
	
	\begin{theorem}
	For any given~$n$ and~$s$ there exist explicit complex valued~$a$ and~$\boldB$ that satisfy the EC-condition with optimal~$d=s+1$. The respective encoding (i.e., constructing~$\boldB$) and decoding (i.e., constructing~$a(\cK)$ given~$\cK$) complexities are $O(s(n-s))$ and $O(s \log^2 s+n \log n)$, respectively. In addition, for any given~$n$ and~$s$ such that $n \ne s \bmod 2$ there exist explicit real valued~$a$ and~$\boldB$ that satisfy the EC-condition with optimal~$d=s+1$. The encoding and decoding complexities are~$O(\min\{s \log^2 s, n \log n\})$ and $O(g_s+s(n-s))$, where~$g_s$ is the complexity of inverting a generalized Vandermonde matrix.
	\end{theorem}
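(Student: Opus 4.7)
The plan is to produce, in each case, a concrete cyclic $[n,n-s]$ MDS code $C$ over the appropriate field that contains $\1$, and then invoke Theorem~\ref{theorem:cyclicMDS} to extract matrices $A$ and $B$ satisfying the EC-condition at the optimal storage overhead $d=s+1$. The existence part of the theorem is thus immediate once such $C$ is exhibited; the content lies in making $C$ explicit and in exploiting its polynomial description to achieve the claimed encoding and decoding complexities.

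For the complex case, let $\omega$ be a primitive $n$-th root of unity and let $C$ be the BCH code over $\bC$ with root set $R=\{\omega,\omega^2,\ldots,\omega^s\}$. Lemma~\ref{lemma:BCHcyclic} yields cyclicity; the $s$ evaluation constraints are linearly independent (their coefficient matrix is a Vandermonde on distinct points), so $\dim C=n-s$; and Theorem~\ref{theorem:BCHbound} applied to the $s$ consecutive powers in $R$ forces minimum distance at least $s+1$, so $C$ is MDS by Singleton. Finally $\1\in C$ since $1+x+\cdots+x^{n-1}=(x^n-1)/(x-1)$ vanishes at every $n$-th root of unity except $1$, and $1\notin R$.

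For the real case the root set must additionally be closed under complex conjugation. Writing $R=\{\omega^b,\omega^{b+1},\ldots,\omega^{b+s-1}\}$, conjugation closure reads $2b\equiv n-s+1\pmod n$, which admits an integer solution precisely when $n-s+1$ is even, equivalently $n\ne s\pmod 2$. Under this hypothesis, choosing $b=(n-s+1)/2$ places all exponents strictly between $0$ and $n$, so $1\notin R$, and the verifications of cyclicity, dimension, distance, and membership of $\1$ carry over verbatim to the real BCH code on this symmetrized $R$.

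For the complexity bounds, the first column of $B$ is (up to scaling) the coefficient vector of $c_1(x)=\prod_{r\in R}(x-r)$, and all other columns of $B$ are its cyclic shifts. Encoding therefore amounts to computing these coefficients and listing the shifts, accounting for the $O(s(n-s))$ bound in the complex case and, via subproduct-tree/FFT multiplication, for the $O(\min\{s\log^2 s, n\log n\})$ bound in the real case. Decoding, given $K$ with $|K|=n-s$, reduces to solving $a\cdot B_{K,*}=\1$ for $a\in\bF^{n-s}$; since the rows of $B$ lie in $C^R$ (Lemma~\ref{lemma:matrixB}, part~B2) and $C^R$ itself has a BCH-type description, $B_{K,*}$ inherits Vandermonde-like structure on a subset of roots of unity. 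FFT-based polynomial interpolation then inverts it in $O(s\log^2 s + n\log n)$ in the complex case, while the real case reduces to inverting a generalized Vandermonde matrix at cost $g_s$ plus $O(s(n-s))$ further arithmetic. The main obstacle is precisely this last translation, namely recasting the cyclic-convolutional form of $B_{K,*}$ as an explicit Vandermonde interpolation problem on roots of unity and invoking the appropriate fast algorithm; the purely algebraic portion of the theorem is a direct consequence of Theorem~\ref{theorem:cyclicMDS} and standard BCH machinery.
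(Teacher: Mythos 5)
Your construction of the codes is correct and, for the existence claims, matches the paper's route: the complex code you build via the BCH roots $\{\omega,\ldots,\omega^s\}$ is exactly the paper's Reed--Solomon code (the DFT of an evaluation vector of a degree-$<(n-s)$ polynomial vanishes on precisely those frequencies), and your unified real construction with $b=(n-s+1)/2$ reproduces the paper's Construction~\ref{construction:realBCH} in both parity cases. One small imprecision: the congruence $2b\equiv n-s+1\pmod n$ always has a solution when $n$ is odd regardless of the parity of $n-s+1$; what actually requires $n\ne s\bmod 2$ is that the \emph{centered} choice $b=(n-s+1)/2$ be an integer, which is what guarantees $1\notin R$. (For $n=s\bmod 2$ the unique solution of the congruence can push $0$ into the exponent interval, e.g.\ $n=5,s=3$ forces $b\equiv 4$, giving $R\ni\omega^0=1$.)

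The real gap is in the decoding complexity. You reduce to ``solving $a\cdot B_{K,*}=\1$'' and assert that $B_{K,*}$ ``inherits Vandermonde-like structure'' so FFT interpolation applies, but you then explicitly flag recasting this as a Vandermonde problem as the unresolved obstacle. That is indeed the crux, and the paper does not attack the system $aB_{K,*}=\1$ directly. Instead it works through the dual code: precompute once any $x'$ with $x'B=\1$ (via back-substitution on the lower-triangular block of the circulant $B$); then by Lemma~\ref{lemma:GRS} the dual $C^{\bot}$ is itself a GRS code on the same evaluation points, so one finds $y\in C^{\bot}$ with $y_{K^c}=-x'_{K^c}$ by a degree-$(s-1)$ interpolation on the $s$ erased positions (cost $O(s\log^2 s)$ by Kung's algorithm), extends $y$ to all $n$ coordinates by an FFT evaluation on the full set of $n$-th roots of unity (cost $O(n\log n)$), and returns $A(K)=y+x'$, which automatically has support in $K$ and satisfies $A(K)B=\1$ since $yB=0$. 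The analogous real algorithm replaces the interpolation step by inversion of a generalized Vandermonde matrix, giving $O(g_s+s(n-s))$. Without this dual-code/interpolation reduction, the claimed $O(s\log^2 s+n\log n)$ bound does not follow from what you wrote.
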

	
	\subsection{Cyclic-MDS Codes Over the Complex Numbers}\label{section:ComplexCyclic}
	For a given~$n$ and~$s$, let~$i=\sqrt{-1}$, and let $\cA\triangleq \{\alpha_j\}_{j=0}^{n-1}$ be the set of~$n$ complex roots of unity of order~$n$, i.e., $\alpha_j\triangleq e^{2\pi i j/n}$. Let~$\boldG\in\bC^{(n-s)\times n}$ be a complex Vandermonde matrix over~$\cA$, i.e., $g_{k,j}=\alpha_{j-1}^{k-1}$ for~$j\in[n]$ and any~$k\in[n-s]$. Finally, let~$\cC\triangleq\{\boldx\boldG\vert \boldx\in\bC^{n-s} \}$. It is readily verified that~$\cC$ is an~$[n,n-s]$ MDS code that contains~$\1$, whose codewords may be seen as the evaluations of all polynomials in~$\bC^{<n-s}[x]$ on the set~$\cA$.
	
	\begin{lemma} \label{lemma:MDScyclic}
		The code $\cC$ is cyclic.
	\end{lemma}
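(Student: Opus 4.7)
The plan is to exploit the evaluation description of $C$ together with the fact that the evaluation set $A=\{\alpha_0,\ldots,\alpha_{n-1}\}$ is closed under multiplication by $\alpha_1$, the primitive $n$-th root of unity $e^{2\pi i/n}$. Specifically, every codeword $c=(c_0,c_1,\ldots,c_{n-1})\in C$ can be written as $c_j=f(\alpha_j)$ for some $f\in\bC^{<n-s}[x]$, since $C=\{xG\vert x\in\bC^{n-s}\}$ and the $j$-th column of $G$ is $(1,\alpha_j,\alpha_j^2,\ldots,\alpha_j^{n-s-1})^\top$.

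Next, I would show directly that the cyclic shift of $c$ also lies in $C$ by exhibiting an explicit polynomial of degree less than $n-s$ whose evaluations on $A$ give this shifted vector. The key identity is $\alpha_j=\alpha_1^j$ for every $j\in\{0,1,\ldots,n-1\}$ (with indices taken modulo $n$, so $\alpha_{-1}=\alpha_{n-1}=\alpha_1^{-1}$). Hence, the cyclic shift $\tilde c=(c_{n-1},c_0,\ldots,c_{n-2})$ satisfies $\tilde c_j=c_{j-1\bmod n}=f(\alpha_{j-1})=f(\alpha_1^{-1}\alpha_j)$. Defining $g(x)\triangleq f(\alpha_1^{-1}x)$, which is again a polynomial of degree at most $n-s-1$, we obtain $\tilde c_j=g(\alpha_j)$ for every $j$, so $\tilde c\in C$.

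There is essentially no hard step; the only thing to be careful about is the index bookkeeping when passing from the cyclic shift on coordinates to the multiplicative shift on evaluation points, and verifying that $g$ still lies in $\bC^{<n-s}[x]$ (which is clear since substituting a scalar multiple of $x$ does not raise the degree). As an alternative route, one could instead identify $C$ with the BCH code on the set $R=\{\alpha_{n-s},\alpha_{n-s+1},\ldots,\alpha_{n-1}\}$ over $\bC$ (via the duality between RS codes and BCH codes defined by roots of unity) and invoke Lemma~\ref{lemma:BCHcyclic} directly. The evaluation-based argument above is more elementary and self-contained, so that is the route I would take.
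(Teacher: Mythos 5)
Your proof is correct and takes essentially the same route as the paper: exhibit the cyclic shift of a codeword as the evaluation of a degree-preserving substitution $x\mapsto \zeta x$ (for a primitive $n$-th root of unity $\zeta$) applied to the interpolating polynomial. In fact your bookkeeping is slightly more careful than the paper's, since composing with $\alpha_1^{-1}x$ genuinely produces the right-shift $(c_{n-1},c_0,\ldots,c_{n-2})$ that matches Definition~\ref{definition:cyclicCode}, whereas the paper writes $f_c(e^{2\pi i/n}\cdot x)$, which yields the left shift (harmless, as closure under either shift is equivalent, but your version matches the stated definition exactly).
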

	\begin{proof}
	Let~$\boldc\in \cC$ be a codeword, and let $f_{\boldc}\in\bC^{<n-s}[x]$ be the corresponding polynomial. Consider the polynomial~$f_{\boldc'}(x)\triangleq f_{\boldc}(e^{2\pi i/n}\cdot x)$, and notice that $\deg f_{\boldc'}=\deg f_{\boldc}$. Further, it is readily verified that any~$j\in\{0,1,\ldots,n-1\}$ satisfies that $f_{\boldc'}(\alpha_j)=f_{\boldc}(\alpha_{j-1})$, where the indices are taken modulo~$n$. Hence, the evaluation of the polynomial~$f_{\boldc'}$ on the set of roots~$\cA$ results in the cyclic shift of the codeword~$\boldc$, and lies in~$\cC$ itself.
\end{proof}

\begin{corollary}
	The code~$\cC$ is a cyclic MDS code which contains~$\1$, and hence it can be used to obtain $a$ and~$\boldB$, as described in Theorem~\ref{theorem:cyclicMDS}.
\end{corollary}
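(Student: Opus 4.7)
The plan is to verify in turn the three hypotheses required by Theorem~\ref{theorem:cyclicMDS}, namely that $C$ is $[n,n-s]$ and MDS, that it is cyclic, and that it contains $\1$, and then simply invoke the theorem.

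First, I would argue that $C$ is an $[n,n-s]$ MDS code. By construction $C$ is a Reed--Solomon code with $n$ distinct evaluation points $\alpha_0,\ldots,\alpha_{n-1}$ (the $n$-th roots of unity), so the rows of the Vandermonde matrix $G$ are linearly independent and $\dim C=n-s$. For the minimum distance, any nonzero codeword corresponds to a polynomial $f\in\bC^{<n-s}[x]$, which has at most $n-s-1$ zeros, so any nonzero codeword has Hamming weight at least $n-(n-s-1)=s+1$. Combined with the Singleton bound $\delta\le n-(n-s)+1=s+1$, this shows $\delta=s+1$ and $C$ is MDS.

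Next, to see that $\1\in C$, I would simply take the constant polynomial $f(x)=1\in\bC^{<n-s}[x]$, whose evaluation on the set $A$ is the all-ones vector; equivalently, this is the first row of $G$ (the one corresponding to $k=0$). Cyclicity of $C$ is exactly the content of Lemma~\ref{lemma:MDScyclic}.

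Having established the three properties, I would then apply Theorem~\ref{theorem:cyclicMDS}: pick a codeword $c_1\in C$ supported on $\{1,\ldots,s+1\}$ (which exists by property~A2 of Lemma~\ref{lemma:MDSproperties}), let $c_2,\ldots,c_n$ be its cyclic shifts (which remain in $C$ by cyclicity), and form $B=(c_1^\top,\ldots,c_n^\top)$. By Lemma~\ref{lemma:matrixB} any $n-s$ rows of $B$ form a basis of $C^R$, and since $\1\in C$ implies $\1\in C^R$, there is a unique function $A:\cP(n)\to\bC^n$ with $\support{A(K)}\subseteq K$ and $A(K)B=\1$ for every $K$ of size $\ge n-s$; this is exactly the EC condition of Definition~\ref{definition:ECcondition}. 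Since every nontrivial step (MDS-ness of RS codes, the constant polynomial giving $\1$, and cyclicity) has already been handled, there is no genuine obstacle here — the corollary is essentially a bookkeeping step that assembles the named ingredients into the hypotheses of Theorem~\ref{theorem:cyclicMDS}.
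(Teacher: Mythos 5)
Your proof is correct and matches the paper's intended reasoning: the paper leaves as ``readily verified'' the fact that $C$ is an $[n,n-s]$ MDS code containing $\1$, and you have supplied exactly the standard Reed--Solomon argument for that, combined it with Lemma~\ref{lemma:MDScyclic} for cyclicity, and then invoked Theorem~\ref{theorem:cyclicMDS}. The only minor quibble is the claim of a \emph{unique} decoding function $A$ --- for $|K|>n-s$ the vector $a$ with $\support{a}\subseteq K$ and $aB=\1$ is generally not unique --- but uniqueness plays no role, so this does not affect the argument.
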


Given a set~$\cK$ of~$n-s$ non-stragglers, an algorithm for computing the encoding vector~$a(\cK)$ in~$O(s\log^2s+n\log n)$ operations over~$\bC$ (after a one-time initial computation of~$O(s^2+s(n-s))$), is given in Appendix~\ref{section:AK_complex}.
The complexity of this algorithm is asymptotically smaller than the corresponding algorithm in~\cite{ShortDot} and~\cite{Wael} whenever~${s=o(n)}$. Furthermore, the cyclic structure of the matrix~$\boldB$ enables a very simple algorithm for its construction; this algorithm compares favorably with previous works for any~$s$, and is given in Appendix~\ref{section:AK_complex} as well.

\begin{remark}
Note that the use of complex rather than real matrix~$\boldB$ may potentially double the required bandwidth, since every complex number contains two real numbers. A simple manipulation of Algorithm~\ref{algorithm:distributedSGD} which resolves this issue is given in Appendix~\ref{section:BandwidthReduction}. This optimal bandwidth is also attained by the scheme in the next section, which uses a smaller number of multiplication operations. However, it is applicable only if~$n\ne s\bmod 2$.
\end{remark}

\subsection{Cyclic-MDS Codes Over the Real Numbers}\label{section:RealCyclic}
If one wishes to abstain from using complex numbers, e.g., in order to reduce bandwidth, we suggest the following construction, which provides a cyclic MDS code over the reals. This construction relies on~\cite{RealCodes} (Property~3), with an additional specialized property. 

	\begin{construction}\label{construction:realBCH}
		For a given~$n$ and~$s$ such that~$n\ne s\bmod 2$, define the following BCH codes over the reals. In both cases denote $\omega\triangleq e^{2\pi i/n}$.
		\begin{enumerate}
			\item If~$n$ is even and~$s$ is odd let~$s'\triangleq \floor{\frac{s}{2}}$, and let~$\cC_1$ be a BCH code which consists of all  polynomials in~$\bR^{<n}[x]$ that vanish over the set~$\cR_1\triangleq \{\omega^{n/2-s'},\omega^{n/2-s'+1},\ldots,\omega^{n/2+s'} \}$.
			\item If~$n$ is odd and~$s$ is even let~$n'\triangleq\floor{\frac{n}{2}}$, and let~$\cC_2$ be a BCH code which consists of all polynomials in~$\bR^{<n}[x]$ that vanish over the set~$\cR_2\triangleq \{\omega^{n'-s/2+1}, \omega^{n'-s/2+2},\ldots,\omega^{n'+s/2}  \}$.
		\end{enumerate}
	\end{construction}

	\begin{lemma}\label{lemma:BCHareCyclicMDS}
		The codes~$\cC_1$ and~$\cC_2$ from Construction~\ref{construction:realBCH} are cyclic~$[n,n-s]$ MDS codes that contain~$\1$.
	\end{lemma}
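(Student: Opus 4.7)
The plan is to verify cyclicity, dimension $n-s$, the MDS property, and containment of $\mathbf{1}$ in turn, using the BCH machinery assembled in Section~\ref{section:notions}.

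Cyclicity follows immediately from Lemma~\ref{lemma:BCHcyclic}, since $R_1$ and $R_2$ both consist of $n$-th roots of unity by construction. For the MDS property, observe that $R_1$ and $R_2$ are each composed of $s$ consecutive powers of the primitive root $\omega$, so Theorem~\ref{theorem:BCHbound} yields minimum distance $\delta \geq s+1$. Combined with the Singleton bound $\delta \leq n - \dim + 1$, it will suffice to establish $\dim C_i \geq n-s$; the two bounds then collapse to $\delta = s+1$ and $\dim = n-s$ simultaneously. This dimension lower bound is the step I expect to be the main obstacle.

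To bound the dimension from below, I would exploit the fact that $R_1$ and $R_2$ are closed under complex conjugation, so that the $s$ complex constraints $\{c(r)=0 : r \in R_i\}$ on the real coefficient vector of $c$ reduce to at most $s$ real equations. For $R_1$, the conjugate of $\omega^{n/2 \pm j}$ is $\omega^{n/2 \mp j}$, making $R_1$ symmetric about the real element $\omega^{n/2} = -1$; since $s$ is odd, this yields one real root and $(s-1)/2$ conjugate pairs. For $R_2$, writing $n = 2n'+1$ gives that the conjugate of $\omega^{n'+j}$ is $\omega^{n'+1-j}$, and the involution $j \mapsto 1-j$ on the index set $\{-s/2+1,\ldots,s/2\}$ has no fixed point (since $1/2 \notin \mathbb{Z}$), so $R_2$ is a disjoint union of $s/2$ conjugate pairs. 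Translating each real root into one real equation and each conjugate pair into two (the real and imaginary parts of $c(r)=0$) shows that the constraint system has rank at most $s$, giving $\dim C_i \geq n-s$ and completing the MDS verification.

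Finally, to see that $\mathbf{1} \in C_i$, observe that the polynomial corresponding to $\mathbf{1}$ is $1 + x + \cdots + x^{n-1} = (x^n - 1)/(x-1)$, which vanishes at every $n$-th root of unity except $x = 1$. It therefore suffices to check that $1 \notin R_i$: the exponents defining $R_1$ lie in $\{n/2 - s', \ldots, n/2 + s'\} \subset (0, n)$ because $s' < n/2$, and the exponents defining $R_2$ lie in $\{n' - s/2 + 1, \ldots, n' + s/2\} \subset (0, n)$ whenever $s \leq n - 1$, which is the only regime of interest.
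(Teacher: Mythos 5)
Your proof is correct and follows the same overall skeleton as the paper's (cyclicity from Lemma~\ref{lemma:BCHcyclic}, minimum distance $\geq s+1$ from the BCH bound, Singleton plus a dimension lower bound to get MDS, and $\mathbf{1}\in C_i$ by checking $1\notin R_i$), but you establish the dimension lower bound $\dim C_i \geq n-s$ by a genuinely different argument. The paper exploits conjugate-closure of $R_i$ to observe that $P_i(x)\triangleq\prod_{r\in R_i}(x-r)$ has real coefficients, so the ideal $\{P_i(x)L(x) : L\in\bR^{<n-s}[x]\}$ is an explicit $(n-s)$-dimensional subspace of $C_i$; you instead exploit the same conjugate-closure on the constraint side, noting that for real $c$ one has $c(\bar r)=\overline{c(r)}$, so the nominal $2s$ real equations among $\{\Re\,c(r)=\Im\,c(r)=0 : r\in R_i\}$ collapse to at most $s$ independent ones (one per real root, two per conjugate pair), whence the kernel has dimension at least $n-s$. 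Both are standard BCH-code arguments resting on the same key observation; yours is a rank/nullity count, the paper's is constructive and has the side benefit of producing the generator polynomial $P_i$ that is reused in Appendix~\ref{section:AK_real} for the encoding algorithm. Your bookkeeping on the conjugation involutions ($j\mapsto -j$ with fixed point $j=0$ for $R_1$; $j\mapsto 1-j$ with no integer fixed point for $R_2$) and on the exponent ranges excluding $1$ is accurate, so no gaps.
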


\begin{proof}
	According to Lemma~\ref{lemma:BCHcyclic}, it is clear that~$\cC_1$ and~$\cC_2$ are cyclic. According to the BCH bound (Theorem~\ref{theorem:BCHbound}), it is also clear that the minimum distance of~$\cC_1$ is at least~$|\cR_1|+1=s+1$, and the minimum distance of~$\cC_2$ is at least~$|\cR_2|+1=s+1$. Hence, to prove that~$\cC_1$ and~$\cC_2$ are MDS codes, it is shown that their code dimensions are~$n-s$.
	
	Since the sets~$\cR_1$ and~$\cR_2$ are closed under conjugation (i.e., $r$ is in~$\cR_i$ if and only if the conjugate of~$r$ is in~$\cR_i$) it follows that the polynomials $p_1(x)\triangleq \prod_{r\in \cR_1}(x-r)$ and $p_2(x)\triangleq \prod_{r\in \cR_2}(x-r)$ have real coefficients. Hence, by the definition of BCH codes it follows that 
	\begin{align}\label{equation:generatorPolys}
	\nonumber\{p_1(x)\cdot \ell(x)\vert \ell(x)\in\bR^{<n-s}[x] \}&\subseteq \cC_1\\
	\{p_2(x)\cdot \ell(x)\vert \ell(x)\in\bR^{<n-s}[x] \}&\subseteq C_2,
	\end{align}
	and hence, $\dim \cC_1 \ge n-s$ and $\dim \cC_2\ge n-s$. Let~$d(\cC_1)$ and~$d(\cC_2)$ be the minimum distances of~$\cC_1$ and~$\cC_2$, respectively, and notice that by the Singleton bound~\cite{Ronny} (Sec.~4.1) it follows that 
	\begin{align*}
	\dim \cC_1 &\le n - d(\cC_1)+1 \le n- (s+1)+1=n-s\\
	\dim \cC_2 &\le n - d(\cC_2)+1 \le n- (s+2)+1=n-s,
	\end{align*}
	and thus~$\cC_1$ and~$\cC_2$ satisfy the Singleton bound with equality, or equivalently, they are MDS codes. To prove that~$\1$ is in~$\cC_1$ and~$\cC_2$, we ought to show that~$\1(r_1)=\1(r_2)=0$ (where~$\1$ denotes the all ones polynomial here) for every~$r_1\in R_1$ and~$r_2\in R_2$, which amounts to showing that~$\sum_{j=0}^{n-1}r_1^j=\sum_{j=0}^{n-1}r_2^j=0$. It is well-known that the sum of the~$0$'th to~$(n-1)$'th power of any root of unity of order~$n$, other than~$1$, equals zero. Since
	\begin{align*}
	1 &\leq \frac{n}{2} -\bigfloor{\frac{s}{2}}=\frac{n}{2}-s'<\frac{n}{2}-s'+1\\
	&<\cdots<\frac{n}{2}+s'=\frac{n}{2}+\bigfloor{\frac{s}{2} } \leq n-1 \mbox{ if~$n$ is even and~$s$ is odd, and}\\
	1&\le \bigfloor{\frac{n}{2} }-\frac{s}{2}+1=n'-\frac{s}{2}+1<n'-\frac{s}{2}+2\\
	&<\cdots<n'+\frac{s}{2}=\bigfloor{\frac{n}{2} }+\frac{s}{2}\le n-1\mbox{ otherwise,}
	\end{align*}	
	it follows that \blue{all powers of~$\omega$ in~$\cR_1$ and~$\cR_2$ are between~1 and~$n-1$, and hence}~$1\notin \cR_1$ and that~$1\notin \cR_2$. Hence, we have that~$\1(r_1)=\1(r_2)=0$ for every~$r_1\in \cR_1$ and~$r_2\in \cR_2$, which concludes the claim.
\end{proof}

Algorithms for computing the matrix~$\boldB$ and the vector~$a(\cK)$ for the codes in this subsection are given in Appendix~\ref{section:AK_real}.
The algorithm for construction~$\boldB$ outperforms previous works whenever~$s=o(n)$, and the algorithm for computing~$a(\cK)$ outperforms previous works for a smaller yet wide range of~$s$ values.

\section{Approximate Gradient Coding from Expander Graphs}\label{section:approximate}
Recall that in order to retrieve that exact gradient, one must have~$d\ge s+1$, an undesirable overhead in many cases. To break this barrier, we relax the requirement to retrieve the gradient exactly, and settle for an approximation of it. Note that trading the exact gradient for an approximate one is a necessity in many variants of gradient descent (such as the acclaimed \textit{stochastic gradient descent}~\cite[Sec.~14.3]{MachineLearningI}), and hence our techniques are aligned with common practices in machine learning.

\purple{For a set~$\cK\subseteq [n]$ of non-stragglers, let~$\1_\cK$ be its binary characteristic vector with respect to~$[n]$. Setting~$\boldB$ as the identity matrix and~$a$ as the function which maps~$\cK\in\cP(n)$ to~$\frac{n}{|\cK|}\cdot\1_\cK$ corresponds to executing gradient descent while \textit{ignoring} the stragglers~\cite{chen2016revisiting}, and averaging the partial gradients from~$\cK$ (the factor~$\frac{n}{|\cK|}$ corrects the factor~$\frac{1}{n}$ in~\eqref{equation:N(w)}). In what follows, this is referred to as the \textit{trivial scheme}. The respective~$\boldB$ and~$a$ clearly satisfy the~$\epsilon$-AC scheme for~$\epsilon(s)=\sqrt{\frac{ns}{n-s}}$, where~$s=s(\cK)\triangleq n-|\cK|$, since
	\begin{align}\label{equation:trivialApproximationScheme}
	d_2(a(\cK)\boldB,\1)=d_2(\tfrac{n}{n-s}\1_\cK,\1)=\sqrt{\sum_{j\in\cK}(\tfrac{n}{n-s}-1)^2+\sum_{j\notin\cK}1^2}=\sqrt{\frac{ns}{n-s}}.
	\end{align}} 
We show that this can be outperformed by setting~$\boldB$ to be a normalized adjacency matrix of a connected regular graph on~$n$ nodes, which is constructed by the master before dispersing the data, and setting~$a$ to be some simple function. 
	
	The resulting error function~$\epsilon(s)$ depends on the parameters of the graph, whereas the resulting storage overhead~$d$ is given by its \textit{degree} (i.e., the \textit{fixed} number of neighbors of each node). The error function is given below for a general connected and regular graph, and particular examples with their resulting errors are given in the sequel. In particular, it is shown that taking the graph to be an expander graph provides an error term~$\epsilon$ which is \blue{smaller than~\eqref{equation:trivialApproximationScheme} for any~$s$}.
	
	For a given~$n$ let~$G$ be a connected~$d$-regular graph on~$n$ nodes, with eigenvalues~$\lambda_1\ge\ldots\ge \lambda_n$, corresponding (\blue{row}) eigenvectors~$\boldv_1=\1,\boldv_2,\ldots,\boldv_n$, and $\lambda\triangleq \max\{|\lambda_2|,|\lambda_n| \}$ as described in Subsection~\ref{section:notions}. For a given~$\cK\subseteq[n]$  
	define~$\boldu_\cK\in \bR^n$ as
	\begin{align}\label{equation:uK}
		(\boldu_\cK)_i=
		\begin{cases}
			-1 & i\notin \cK\\
			\frac{s}{n-s} & i\in \cK
		\end{cases},
	\end{align}
	\blue{and let~$\Span{\boldv_1,\boldv_2,\ldots}$ be the span of~$\boldv_1,\boldv_2,\ldots$ over~$\bR^n$.}
	
	\begin{lemma}\label{lemma:uK}
		For any~$\cK\subseteq[n]$ we have~$\boldu_\cK\in\Span{\boldv_2,\ldots,\boldv_n}$.
	\end{lemma}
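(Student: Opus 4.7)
The plan is to use orthogonality. Since $A_G$ is a real symmetric matrix, its eigenvectors $v_1 = \mathbf{1}, v_2, \ldots, v_n$ form an orthogonal basis of $\mathbb{R}^n$. In particular, $\mathrm{Span}(v_2, \ldots, v_n)$ is precisely the orthogonal complement of $\mathrm{Span}(v_1) = \mathrm{Span}(\mathbf{1})$. So it suffices to show that $u_K \cdot \mathbf{1}^\top = 0$, i.e., that the entries of $u_K$ sum to zero.

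First I would simply compute the sum. By the definition in~\eqref{equation:uK}, the vector $u_K$ has exactly $n-s$ coordinates equal to $\frac{s}{n-s}$ (those indexed by $K$) and exactly $s$ coordinates equal to $-1$ (those indexed by $K^c$). Therefore
\begin{align*}
\sum_{i=1}^{n}(u_K)_i = (n-s)\cdot\frac{s}{n-s} + s\cdot(-1) = s - s = 0,
\end{align*}
which is the same as saying $\langle u_K, v_1\rangle = 0$.

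Second I would conclude: expanding $u_K$ in the orthogonal basis $v_1, \ldots, v_n$ as $u_K = \sum_{i=1}^{n}\alpha_i v_i$ and taking the inner product with $v_1 = \mathbf{1}$ gives $\alpha_1 \|v_1\|^2 = 0$, hence $\alpha_1 = 0$, and therefore $u_K \in \mathrm{Span}(v_2, \ldots, v_n)$, as claimed.

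There is no real obstacle here; the entire content of the lemma is the easy numerical check that the coefficients $-1$ and $\frac{s}{n-s}$ were chosen precisely so that $u_K$ is mean-zero, combined with the standard spectral fact that the eigenspace orthogonal to $\mathbf{1}$ coincides with $\mathrm{Span}(v_2, \ldots, v_n)$ (which uses connectedness of $G$ only implicitly, through the fact that $v_1 = \mathbf{1}$ is singled out as the Perron eigenvector). The lemma will be used later to control $\|B u_K\|$ via the spectral gap, but that is downstream of the present statement.
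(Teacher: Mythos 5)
Your proof is correct and follows essentially the same route as the paper's: both identify $\Span{v_2,\ldots,v_n}$ as the orthogonal complement of $\1$ (equivalently, the zero-sum subspace) and then verify that the entries of $u_K$ sum to zero. You merely spell out the arithmetic $(n-s)\cdot\frac{s}{n-s}+s\cdot(-1)=0$ that the paper leaves implicit.
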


\begin{proof}
	First, observe that $\Span{\boldv_2,\ldots,\boldv_n}$ is exactly the subspace of all vectors whose sum of entries is zero. This follows from the fact that $\{\1,\boldv_2,\ldots,\boldv_n\}$ is an orthogonal basis, hence $\boldv_i^\top\cdot\1=0$ for every~$i\ge 2$, and from the fact that~$\{\boldv_2,\ldots,\boldv_n\}$ are linearly independent. Since the sum of entries of~$\boldu_\cK$ is zero, the result follows.
\end{proof}

\begin{corollary}\label{corollary:uK}
	For any~$\cK\subseteq[n]$ there exists $\alpha_2,\ldots,\alpha_n\in\bR$ such that $\boldu_\cK=\alpha_2\boldv_2+\ldots+\alpha_n\boldv_n$, and $\norm{\boldu_\cK}=\sqrt{\sum_{i=2}^{n}\alpha_i^2}=\sqrt{\frac{ns}{n-s}}$.
\end{corollary}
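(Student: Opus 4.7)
The existence of the expansion $u_K = \alpha_2 v_2 + \ldots + \alpha_n v_n$ is immediate from Lemma~\ref{lemma:uK}, which already places $u_K$ in $\Span{v_2,\ldots,v_n}$. Hence the only substantive content of the corollary is the two equalities for $\norm{u_K}$. My plan is to establish them separately and then equate.

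For the first equality $\norm{u_K} = \sqrt{\sum_{i=2}^n \alpha_i^2}$, I would invoke the orthonormality of $v_2,\ldots,v_n$, which is built in: recall from Section~\ref{section:notions} that $A_G$ is real symmetric and therefore admits an orthogonal basis of real eigenvectors, which the paper further normalizes so that $\norm{v_i}=1$ for every $i\ge 2$. Given this, expanding $\norm{u_K}^2 = \langle \sum_i \alpha_i v_i, \sum_j \alpha_j v_j\rangle$ and using $\langle v_i,v_j\rangle = \delta_{ij}$ collapses the sum to $\sum_{i=2}^n \alpha_i^2$, i.e., Parseval's identity.

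For the second equality $\norm{u_K} = \sqrt{ns/(n-s)}$, I would just compute the squared Euclidean norm directly from the definition~\eqref{equation:uK}: there are exactly $s$ indices $i \notin K$ contributing $(-1)^2 = 1$ and exactly $n-s$ indices $i \in K$ contributing $(s/(n-s))^2$, giving
\begin{align*}
\norm{u_K}^2 = s\cdot 1 + (n-s)\cdot \frac{s^2}{(n-s)^2} = s + \frac{s^2}{n-s} = \frac{s(n-s)+s^2}{n-s} = \frac{ns}{n-s}.
\end{align*}

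There is no real obstacle here; the entire corollary is bookkeeping that packages Lemma~\ref{lemma:uK} together with orthonormality and a one-line arithmetic computation. The only thing to be careful about is that the claim implicitly assumes $|K| = n-s$ (so that $u_K$ is well defined as in~\eqref{equation:uK}) and $s < n$ (so that the denominator $n-s$ is nonzero), both of which are consistent with the setting of the surrounding discussion.
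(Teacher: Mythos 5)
Your proof is correct and follows essentially the same route the paper takes: deduce the decomposition from Lemma~\ref{lemma:uK}, get $\norm{u_K}=\sqrt{\sum_{i\ge 2}\alpha_i^2}$ from orthonormality, and get $\norm{u_K}=\sqrt{ns/(n-s)}$ by direct computation from~\eqref{equation:uK}. You merely spell out the arithmetic the paper leaves implicit, and your parenthetical caveats about $|K|=n-s$ and $s<n$ match the surrounding hypotheses.
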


\begin{proof}
	The first part follows immediately from Lemma~\ref{lemma:uK}. The second part follows by computing the~$\ell_2$ norm of~$\boldu_\cK$ in two ways, once by its definition~\eqref{equation:uK} and again by using the representation of~$\boldu_\cK$ as a linear combination of the orthonormal set~$\{\boldv_2,\ldots,\boldv_n \}$.
\end{proof}

Now, let~$\boldB\triangleq \frac{1}{d}\cdot \boldA_G$, define~$a:\cP(n)\to \bR^n$ as~$a(\cK)=\boldu_\cK+\1$, and observe that~$\support{a(\cK)}=\cK$ for all~$\cK\in\cP(n)$. Note that computing~$a(\cK)$ given~$\cK$ is done by a straightforward~$O(n)$ algorithm. The error function~$\epsilon$ is given by the following lemma.

\begin{lemma}\label{lemma:ABproximity}
	For every \blue{nonempty\footnote{For~$\cK=\varnothing$ we clearly have~$\boldu_\cK+\1=0$, and hence~$d_2(a(\cK)\boldB,\1)=\norm{\1}=\sqrt{n}$.}} set~$\cK\subseteq[n]$ of size~$n-s$ we have that $d_2(a(\cK)\boldB,\1)\le \frac{\lambda}{d}\cdot \sqrt{\frac{ns}{n-s}}\triangleq\epsilon(s)$.
\end{lemma}

\begin{proof}
	Notice that the eigenvalues of~$\boldB$ are $\mu_i\triangleq \frac{\lambda_i}{d}$, and hence~$\mu\triangleq \max\{|\mu_2|,|\mu_n| \}$ equals~$\frac{\lambda}{d}$. Further, the eigenvectors are identical to those of~$\boldA_G$. Therefore, it follows from Corollary~\ref{corollary:uK} that
	\begin{align*}
	d_2(a(\cK)\boldB,\1)&=d_2((\1+\boldu_\cK) \boldB,\1)\\
	&=d_2((\1+\alpha_2 \boldv_2+\ldots+\alpha_n \boldv_n) \boldB,\1)\\
	&=d_2(\1+\alpha_2 \mu_2\boldv_2+\ldots+\alpha_n\mu_n \boldv_n,\1)\\
	&=\norm {\alpha_2 \mu_2\boldv_2+\ldots+\alpha_n\mu_n \boldv_n},
	\end{align*}
	and since $\{\boldv_2,\ldots,\boldv_n \}$ are orthonormal, it follows that
	\begin{align*}
		\norm {\alpha_2 \mu_2\boldv_2&+\ldots+\alpha_n\mu_n \boldv_n}\\
		&=\sqrt{\sum_{i=2}^{n}\mu_i^2\alpha_i^2}\le \sqrt{\sum_{i=2}^{n}\mu^2\alpha_i^2}\\
		&=\mu\sqrt{\sum_{i=2}^{n}\alpha_i^2}=\frac{\lambda}{d}\sqrt{\frac{ns}{n-s}}.\qedhere
	\end{align*}
\end{proof}

\begin{corollary}
	The above~$a$ and~$\boldB$ satisfy the~$\epsilon$-AC condition for~$\epsilon(s)=\frac{\lambda}{d}\sqrt{\frac{ns}{n-s}}$. The storage overhead of this scheme equals the degree~$d$ of the underlying regular graph~$G$.
\end{corollary}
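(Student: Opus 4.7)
The plan is to derive the corollary directly from Lemma~\ref{lemma:ABproximity} together with two bookkeeping checks: that the candidate $\epsilon$ is a valid error function in the sense of Definition~\ref{definition:ACcondition}, and that the row-support of $B$ is what the claim on storage overhead requires. No new analytic work is needed beyond what is already in the lemma.

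First I would verify the $\epsilon$-AC condition. Given any $K\in\cP(n)$, set $s\triangleq |K^c|$, so $|K|=n-s$. Lemma~\ref{lemma:ABproximity} then yields $d_2(A(K)B,\1)\le \frac{\lambda}{d}\sqrt{\frac{ns}{n-s}}=\epsilon(|K^c|)$, which is exactly the inequality required by Definition~\ref{definition:ACcondition}. To be formally legal, one still needs $\epsilon$ to be non-decreasing on $[n-1]$ with $\epsilon(0)=0$; this is immediate since the map $s\mapsto \frac{ns}{n-s}$ has positive derivative $\frac{n^2}{(n-s)^2}$ on $[0,n-1]$, and $\sqrt{\cdot}$ preserves monotonicity, while $\epsilon(0)=\frac{\lambda}{d}\sqrt{0}=0$.

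For the storage overhead claim, recall from Line~3 of Algorithm~\ref{algorithm:distributedSGD} that worker $W_i$ is sent $|\support{B_i}|$ data parts, so the storage overhead is the maximum row support size of $B$. Here $B=\frac{1}{d}A_G$ where $A_G$ is the adjacency matrix of the $d$-regular graph $G$; hence every row of $A_G$, and therefore of $B$, has exactly $d$ nonzero entries, giving storage overhead equal to $d$.

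I do not expect any real obstacle: the substantive inequality is already contained in Lemma~\ref{lemma:ABproximity}, and the only mild subtlety is making sure that the monotonicity and boundary conditions on $\epsilon$ are satisfied and that the storage overhead is correctly identified with the row-support of $B$ (rather than, say, the column-support). The corollary is essentially a restatement packaging these facts together.
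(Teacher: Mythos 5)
Your proposal is correct and takes essentially the same route the paper intends: the corollary is stated without proof precisely because it is an immediate consequence of Lemma~\ref{lemma:ABproximity}, and your bookkeeping (monotonicity of $s\mapsto\frac{\lambda}{d}\sqrt{ns/(n-s)}$ with $\epsilon(0)=0$, and the row-support of $B=\frac{1}{d}A_G$ having size exactly $d$ by $d$-regularity) is exactly the right packaging.
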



It is evident that in order to obtain small deviation~$\epsilon(s)$, it is essential to have a small~$\lambda$ and a large~$d$. However, most constructions of expanders have focused in the case were~$d$ is constant (i.e., $d=O(1)$). On one hand, constant~$d$ serves our purpose well \blue{in terms of storage overhead}, since it implies a \textit{constant} storage overhead. On the other hand, a constant~$d$ does not allow~$\lambda/d$ to tend to zero as~$n$ tends to infinity due to Theorem~\ref{theorem:lambdaLowerBound}.

\purple{It is readily verified that our scheme outperforms the trivial one by a multiplicative factor of~$\frac{\lambda}{d}$, which is less than one for every non-bipartite graph; bipartite graphs are used in a slightly different fashion in the sequel. We conclude the discussion with several examples,}
%
%
%
the first of which uses \textit{Margulis graphs} (\cite{Expanders1}, Sec.~8), that are rather easy to construct. 
\begin{example}
	For any integer~$n$ there exists an $8$-regular graph on~$n$ nodes with $\lambda \le 5\sqrt{2}$. For example, by using these graphs with the parameters~$n=500$, $d=8$, we have \purple{an improvement factor of~$\frac{\lambda}{d}=\frac{5\sqrt{2}}{8}\approx 0.883$.}
\end{example}

Several additional examples for Ramanujan graphs, which attain \purple{an improvement factor~$\le\frac{2\sqrt{d-1}}{d}\approx\frac{2}{\sqrt{d}}$} but are harder to construct, are as follows.

\begin{example}\cite{RamanujanGraphs}
	Let~$p$ and~$q$ be distinct primes such that~$p=1\bmod 4$, $q=1\bmod 4$, and such that the Legendre symbol $\left(\frac{p}{q}\right)$ is~$1$ (i.e.,~$p$ is a quadratic residue modulo~$q$). Then, there exist a non-bipartite Ramanujan graph on~$n=\frac{q(q^2-1)}{2}$ nodes and constant degree~$p+1$. 
	\begin{enumerate}
		\item If~$p=13$ and~$q=17$ then~$n=2448$, $d=14$, and~$\frac{2\sqrt{d-1}}{d}\approx 0.534$. 
		
		\item If~$p=5$ and~$q=29$ then~$n=12180$, $d=6$, and~$\frac{2\sqrt{d-1}}{d}\approx 0.816$.
	\end{enumerate}
	
\end{example}

Restricting~$d$ to be a constant (i.e., not to grow with~$n$) is detrimental to the improvement factor~$\frac{\lambda}{d}$ 
due to Theorem~\ref{theorem:lambdaLowerBound}, but allows lower storage overhead. If one wishes a smaller \purple{(i.e., better) improvement factor} at the price of higher overhead, the following is useful.

\begin{example}\cite{BiluLinial}
	There exists a polynomial algorithm (in~$n$) to produce a graph~$G$ with the parameters~$(n,d,\lambda)=(2^m,m-1,\sqrt{m\log^3 m})$. For this family of graphs, the term~$\frac{\lambda}{d}$ goes to zero as~$n$ goes to infinity.
\end{example}


\subsection{Bipartite expanders.}\label{section:bipartite}

The above approximation scheme can be used with a \textit{bipartite} graph~$G$ as well. However, bipartite graphs satisfy that~$\lambda=d$, and hence the resulting error function~$\epsilon(s)=\sqrt\frac{ns}{n-s}$ is identical to the error function of the trivial scheme~\eqref{equation:trivialApproximationScheme}, which requires lower overhead, and hence no gain is attained. However, in what follows it is shown that bipartite graphs on~$2n$ nodes can be employed in a slightly different fashion, and obtain~\blue{$\epsilon(s)=\frac{\sigma_2}{d}\sqrt\frac{ns}{n-s}$ for some~$\sigma_2<d$ that is defined shortly}. 


To this end, we require the notion of \textit{singular value decomposition}, which implies that any matrix~$\boldP\in\bR^{n\times n}$ can be written as~$\boldP=\boldU\boldD\boldV^\top$, where~$\boldD\in\bR^{n\times n}$ is a diagonal matrix, and~$\boldU$ and~$\boldV$ are orthonormal matrices (i.e., $\boldU\boldU^\top=\boldV\boldV^\top=\boldI_n$, where~$\boldI_n$ is the identity matrix of order~$n$). The elements~$\{\sigma_i\}_{i=1}^n$ on the diagonal of~$\boldD$, which are nonnegative, are called the \textit{singular values} of~$\boldP$, the columns~$\{\boldu^\top_i\}_{i=1}^n$ of~$\boldU$ are called \textit{left-singular vectors} of~$\boldP$, and the columns~$\{\boldv^\top_i\}_{i=1}^n$ of~$\boldV$ are called \textit{right-singular vectors} of~$\boldP$. The singular values and singular vectors of~$\boldP$ can be arranged in triples~$\{(\boldu_i,\boldv_i,\sigma_i)\}_{i=1}^n$ such that for all~$i\in[n]$ we have~$\boldP \boldv_i^\top =\sigma_i \boldu_i^\top$ and $\boldP^\top \boldu^\top_i=\sigma_i \boldv^\top_i$, which implies that~$\boldP^\top \boldP \boldv^\top_i=\sigma_i^2 \boldv^\top_i$ and~$\boldP\boldP^\top \boldu^\top_i=\sigma_i^2\boldu^\top_i$.

Let~$G=(\cL\cup \cR,\cE)$ be a $d$-regular (in both sides) and connected bipartite graph on~$2n$ nodes (and hence $|\cL|=|\cR|=n$), with an  adjacency matrix 
\begin{align*}
\boldA_G = 
\begin{pmatrix}
0 & \boldC \\
\boldC^\top & 0
\end{pmatrix}
\end{align*}
for some~$n\times n$ real matrix~$\boldC$, and eigenvalues~$\lambda_1=d\ge \lambda_2 \ge \cdots \ge \lambda_{2n}=-d$. Let~$\{(\boldu_i,\boldv_i,\sigma_i)\}_{i=1}^n$ be the set of triples of left-singular vectors, right-singular vectors, and singular values of~$\boldC$, as explained above, where~$\sigma_1\ge \cdots\ge \sigma_n\ge 0$. The next \blue{well-known} lemma presents the connection between the singular values of~$\boldC$ and the eigenvalues of~$\boldA_G$. \blue{Its proof is a combination of a few simple exercises, and is given for completeness.}

\begin{lemma}\label{lemma:LambdaSigma}
	$\{\lambda_i\}_{i=1}^{2n}=\{ \sigma_i \}_{i=1}^n\cup \{ -\sigma_i \}_{i=1}^n$.
\end{lemma}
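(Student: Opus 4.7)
The plan is to exhibit, for each singular triple $(u_i,v_i,\sigma_i)$ of $C$, a pair of eigenvectors of $A_G$ with eigenvalues $\sigma_i$ and $-\sigma_i$, and then argue that the resulting collection is an eigenbasis of $\mathbb{R}^{2n}$, which accounts for all $2n$ eigenvalues.

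First, I would form the block vectors $w_i^{+}\triangleq\frac{1}{\sqrt 2}\begin{pmatrix}u_i\\ v_i\end{pmatrix}$ and $w_i^{-}\triangleq\frac{1}{\sqrt 2}\begin{pmatrix}u_i\\ -v_i\end{pmatrix}$ in $\mathbb{R}^{2n}$. Using the relations $Cv_i=\sigma_i u_i$ and $C^\top u_i=\sigma_i v_i$ recorded in the excerpt, a direct block multiplication gives
\begin{align*}
A_G w_i^{+} &= \tfrac{1}{\sqrt 2}\begin{pmatrix}Cv_i\\ C^\top u_i\end{pmatrix} = \sigma_i\, w_i^{+},\\
A_G w_i^{-} &= \tfrac{1}{\sqrt 2}\begin{pmatrix}-Cv_i\\ C^\top u_i\end{pmatrix} = -\sigma_i\, w_i^{-},
\end{align*}
so each $\sigma_i$ and each $-\sigma_i$ appears among the eigenvalues $\{\lambda_j\}_{j=1}^{2n}$.

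Next I would verify that the $2n$ vectors $\{w_i^{+},w_i^{-}\}_{i=1}^n$ form an orthonormal set, and hence a basis of $\mathbb{R}^{2n}$. Since $\{u_i\}_{i=1}^n$ and $\{v_i\}_{i=1}^n$ are each orthonormal in $\mathbb{R}^n$ (as columns of the orthonormal matrices $U$ and $V$ from the singular value decomposition of $C$), computing inner products block-wise gives $\langle w_i^{\pm},w_j^{\pm}\rangle=\delta_{ij}$ and $\langle w_i^{+},w_j^{-}\rangle=\tfrac{1}{2}(\delta_{ij}-\delta_{ij})=0$ for all $i,j$. Thus $A_G$ admits an orthonormal eigenbasis whose eigenvalues are precisely $\{\sigma_i\}_{i=1}^n\cup\{-\sigma_i\}_{i=1}^n$ (with multiplicity), and since $A_G$ is symmetric with $2n$ real eigenvalues, this exhausts the spectrum and proves the claimed set equality.

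The only mildly subtle point I anticipate is handling zero singular values: if $\sigma_i=0$ then $w_i^{+}$ and $w_i^{-}$ are both eigenvectors for eigenvalue $0$, and one should make sure the multiset identity is still correct. Since the orthonormality calculation above uses only the orthonormality of $\{u_i\}$ and $\{v_i\}$ and not $\sigma_i>0$, the $2n$ vectors remain linearly independent regardless, so the multiplicity of $0$ on each side matches. No other obstacle arises, so this is the entire argument.
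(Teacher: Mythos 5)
Your proof is correct, and it takes a cleaner route than the paper's. The paper argues set-containment in two directions: it first shows that $(u_i^\top,v_i^\top)$ is a left eigenvector with eigenvalue $\sigma_i$, appeals to the (stated but unproven) fact that a bipartite graph's spectrum is symmetric about zero to get $-\sigma_i$, and then for the converse squares the eigenvector equations to show each $\lambda_i^2$ is some $\sigma_j^2$. Your approach instead exhibits the explicit orthonormal eigenbasis $\left\{w_i^{+},w_i^{-}\right\}_{i=1}^n$ of $A_G$ and reads off the spectrum directly. This buys you two things the paper's argument does not make explicit: it is self-contained (the $\pm$ symmetry of the bipartite spectrum is not assumed, it falls out of the construction), and it correctly accounts for multiplicities, since you produce $2n$ linearly independent eigenvectors rather than two set inclusions between objects that are really multisets. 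Your parenthetical handling of $\sigma_i=0$ is also right: orthogonality of $w_i^{+}$ and $w_i^{-}$ uses only the orthonormality of the $u_i$ and $v_i$, so the count is unaffected. No gaps.
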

\begin{proof}
	For any~$i\in[n]$ we have that
	\begin{align*}
	(\boldu_i,\boldv_i)\begin{pmatrix}
	0 & \boldC \\ \boldC^\top & 0
	\end{pmatrix}=(\boldv_i \boldC^\top, \boldu_i \boldC)=\sigma_i(\boldu_i,\boldv_i),
	\end{align*}
	and hence~$\sigma_i\in\{ \lambda_i \}_{i=1}^{2n}$. It is an easy exercise to verify that the eigenvalues of a bipartite graph are symmetric around zero, and hence it follows that~$-\sigma_i\in\{ \lambda_i \}_{i=1}^{2n}$ as well.
	
	Conversely, for any~$i\in[2n]$ let~$(\bolds_i,\boldt_i)$ be an eigenvector of~$\boldA_G$ with a corresponding eigenvalue~$\lambda_i$, where~$\bolds_i,\boldt_i\in\bR^n$. Since~$(\bolds_i,\boldt_i)\boldA_G=\lambda_i(\bolds_i,\boldt_i)$, it follows that
	\begin{align*}
	\left\{ 
	\begin{array}{ll}
	\boldt_i \boldC^\top &= \lambda_i \bolds_i\\
	\bolds_i \boldC      &= \lambda_i \boldt_i
	\end{array} \right.&\mbox{, and hence }
	\left\{ 
	\begin{array}{ll}
	\boldt_i \boldC^\top \boldC &= \lambda_i^2 \boldt_i\\
	\bolds_i \boldC \boldC^\top &= \lambda_i^2 \bolds_i
	\end{array} \right..
	\end{align*}
	Therefore, it follows that~$\lambda_i^2\in\{ \sigma_i^2 \}_{i=1}^n$, and thus~$\lambda_i \in\{ \sigma_i \}_{i=1}^n\cup \{ -\sigma_i \}_{i=1}^n$.
\end{proof}

From Lemma~\ref{lemma:LambdaSigma}, the following corollaries are easy to prove.

\begin{corollary}\label{corollary:singularVectors}
	~
	\begin{itemize}
		\item [A.] $(\1,\1,d)\in\{ (\boldu_i,\boldv_i,\sigma_i) \}_{i=1}^n$.
		\item [B.] \blue{$\sigma_2< d$}.
	\end{itemize}
\end{corollary}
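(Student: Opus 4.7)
My plan is to handle part~A by direct verification using $d$-regularity, and then derive part~B from part~A together with Lemma~\ref{lemma:LambdaSigma} and the connectedness assumption.

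For part~A, I would first exploit that $G$ is $d$-regular on both sides: each row of $C$ has exactly $d$ ones (since each left vertex has $d$ right-neighbors), and similarly for columns, so $C\1=d\1$ and $C^\top\1=d\1$. Substituting into the singular value relations $C v_i=\sigma_i u_i$ and $C^\top u_i=\sigma_i v_i$ (equivalently, observing $CC^\top\1=d^2\1$ and $C^\top C\1=d^2\1$) shows that, after normalizing $\1\mapsto \1/\sqrt{n}$, the triple $(\1,\1,d)$ is a singular triple of $C$. To pin it down as $(u_1,v_1,\sigma_1)$ under the sorting $\sigma_1\ge \cdots\ge \sigma_n\ge 0$, it remains to show $d=\max_i \sigma_i$. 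By Lemma~\ref{lemma:LambdaSigma} this maximum equals $\max_i |\lambda_i|$, and since $A_G$ is the adjacency matrix of a $d$-regular graph, every eigenvalue satisfies $|\lambda_i|\le d$; the bound $d$ is attained (as we just showed), hence $\sigma_1=d$.

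For part~B, I would match the two orderings via Lemma~\ref{lemma:LambdaSigma}: since $\{\lambda_i\}_{i=1}^{2n}=\{\sigma_i\}_{i=1}^n\cup\{-\sigma_i\}_{i=1}^n$ as multisets, arranging both in decreasing order yields $\lambda_i=\sigma_i$ and $\lambda_{2n+1-i}=-\sigma_i$ for every $i\in[n]$. In particular $\lambda_2=\sigma_2$ and $\lambda_{2n-1}=-\sigma_2$, whence $\lambda_{\mbox{\tiny bipartite}}=\max\{|\lambda_2|,|\lambda_{2n-1}|\}=\sigma_2$. The strict inequality $\sigma_2<d$ is exactly where connectedness enters: by Perron--Frobenius, $d=\lambda_1$ is a \emph{simple} eigenvalue of $A_G$ for a connected $d$-regular graph, so $\lambda_2<\lambda_1=d$, i.e.\ $\sigma_2<d$.

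\textbf{Main obstacle.} The argument is essentially routine once Lemma~\ref{lemma:LambdaSigma} is available; the only cosmetic subtlety is the implicit normalization in the statement ``$(\1,\1,d)\in\{(u_i,v_i,\sigma_i)\}_{i=1}^n$'', which must be interpreted up to scaling because $U$ and $V$ are defined as orthonormal matrices, so $\|u_i\|=\|v_i\|=1$.
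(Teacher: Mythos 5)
Your proof is correct and follows essentially the same route as the paper: for part~A you use $d$-regularity (equivalently, the eigenvector $(\1,\1)$ of $A_G$ with eigenvalue $d$) to exhibit $(\1,\1,d)$ as a singular triple of $C$, and for part~B you match orderings via Lemma~\ref{lemma:LambdaSigma} and invoke connectedness for $\lambda_2<d$. The extra detail you supply (pinning down $\sigma_1=d$, citing Perron--Frobenius, and noting the normalization subtlety) is sound but does not change the argument.
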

\begin{proof}
	~
	\begin{itemize}
		\item [A.] Repeat the second part of the proof of Lemma~\ref{lemma:LambdaSigma} with~$i=1$, namely, with~$(\bolds_i,\boldt_i)=(\1,\1)$ and~$\lambda_i=d$.
		\item [B.] Since~$\sigma_1\ge \ldots\ge \sigma_n$, it follows from Lemma~\ref{lemma:LambdaSigma} that \blue{$\lambda_2= \sigma_2$}. Further, since~$G$ is connected, it follows that~$\lambda_2<d$, and hence~$\sigma_2<d$.\qedhere
	\end{itemize}
\end{proof}

Given Corollary~\ref{corollary:singularVectors}.A, we may assume without loss of generality that~$\boldu_1=\boldv_1=\1$, and that~$\norm{\boldv_i}=\norm{\boldu_i}=1$ for every~$i\in\{2,\ldots,n\}$. Now, for a given set~$\cK\subseteq[n]$ of size~$n-s$, let~$\boldu_\cK$ and~$a(\cK)$ be as in Section~\ref{section:approximate}. 
Notice that since~$\{ \boldu_i \}_{i=1}^n$ is an orthonormal basis, and since~$\1$ is orthogonal to~$\boldu_\cK$, it follows that~$\boldu_\cK\in \Span{\boldu_2,\ldots,\boldu_n}$. 
Hence, there exist real~$\alpha_2,\ldots,\alpha_n$ such that~$\boldu_\cK=\sum_{i=2}^{n}\alpha_i\boldu_i$. Further, it follows that
\begin{align*}
\norm{\boldu_\cK}=\sqrt{\frac{ns}{n-s}}=\sqrt{\sum_{i=2}^{n}\alpha_i^2}.
\end{align*}
By setting~$\boldB\triangleq\frac{1}{d}\boldC$, we have the following lemma, which may be seen as the equivalent of Lemma~\ref{lemma:ABproximity} to the bipartite case.

\begin{proposition}
	For every set~$\cK\subseteq[n]$ of size~\blue{$n-s$}, $d_2(a(\cK)\boldB,\1)\le \frac{\blue{\sigma_2}}{d}\cdot \sqrt{\frac{ns}{n-s}}$.
\end{proposition}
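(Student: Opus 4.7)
The plan is to mimic the proof of Lemma~\ref{lemma:ABproximity}, but to expand $u_K$ in the left-singular vectors of $C$ rather than in an eigenbasis, and use the singular value relation to pass to an orthonormal set of right-singular vectors where the cross-terms cancel.

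First, I would exploit the $d$-regularity to kill the $\1$-part. Because $G$ is $d$-regular on both sides, each row and each column of $C$ sums to $d$, so $C\1 = d\1$ and $\1^\top C = d\1^\top$. Equivalently, by Corollary~\ref{corollary:singularVectors}.A, $(u_1,v_1,\sigma_1)=(\1,\1,d)$. Consequently $\1^\top B = \frac{1}{d}\1^\top C = \1^\top$, and therefore
\begin{align*}
A(K)B - \1 = (\1+u_K)^\top B - \1 = u_K^\top B = \tfrac{1}{d}\,u_K^\top C.
\end{align*}
So the task reduces to bounding $\|u_K^\top C\|$.

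Next I would diagonalize $u_K^\top C$. By the discussion preceding the proposition, $u_K \in \Span{u_2,\dots,u_n}$, so write $u_K = \sum_{i=2}^n \alpha_i u_i$ with $\sum_{i=2}^n \alpha_i^2 = \|u_K\|^2 = \tfrac{ns}{n-s}$. Using $u_i^\top C = (C^\top u_i)^\top = \sigma_i v_i^\top$ for each $i$, we obtain
\begin{align*}
u_K^\top C = \sum_{i=2}^n \alpha_i \sigma_i v_i^\top.
\end{align*}
Since $\{v_i\}_{i=1}^n$ is an orthonormal basis, the squared norm becomes the clean sum $\sum_{i=2}^n \alpha_i^2 \sigma_i^2$. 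By Corollary~\ref{corollary:singularVectors}.B, $\sigma_2 = \lambda_{\text{bipartite}}$, and $\sigma_i \le \sigma_2$ for all $i\ge 2$. Bounding each $\sigma_i$ by $\lambda_{\text{bipartite}}$ and invoking $\sum_{i=2}^n \alpha_i^2 = \tfrac{ns}{n-s}$ yields
\begin{align*}
\|u_K^\top C\|^2 \le \lambda_{\text{bipartite}}^2 \cdot \tfrac{ns}{n-s},
\end{align*}
and dividing by $d$ gives the desired inequality.

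I do not expect a serious obstacle: the only subtlety is remembering to use the left-singular vectors (rather than any eigenbasis of $B$ itself, which is not symmetric and not even guaranteed to be diagonalizable). Once one writes $u_K$ in the basis $\{u_i\}$ and recalls that $C$ maps $v_i$ to $\sigma_i u_i$ and $C^\top$ maps $u_i$ to $\sigma_i v_i$, the Parseval-style cancellation is immediate. The argument is essentially the same shape as Lemma~\ref{lemma:ABproximity}, with $\sigma_i$ playing the role of $|\mu_i|$ and the switch from the $u$-basis to the $v$-basis absorbing the asymmetry of the bipartite setting.
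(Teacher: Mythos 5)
Your proof is correct and is essentially identical to the paper's: both expand $u_K$ in the orthonormal left-singular vectors $\{u_i\}_{i\ge 2}$, apply $C$ to move to the orthonormal right-singular basis via $u_i^\top C = \sigma_i v_i^\top$, and then bound each $\sigma_i$ ($i\ge 2$) by $\sigma_2 = \lambda_{\text{bipartite}}$.
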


\begin{proof}
	By Corollary~\ref{corollary:singularVectors}, and since~$\{\boldv_i \}_{i=2}^n$ is an orthonormal set, it follows that
	\begin{align*}
	d_2(a(\cK)\boldB,\1)&=\norm{(\1+\boldu_\cK)\cdot\tfrac{1}{d}\boldC-\1}=\norm{\tfrac{1}{d} \boldu_\cK \boldC}\\
	&=\norm{\tfrac{1}{d} \left( \sum_{i=2}^{n}\alpha_i \boldu_i \right)\boldC}=\norm{\tfrac{1}{d}  \sum_{i=2}^{n}\alpha_i\sigma_i \boldv_i  }\\
	&= \frac{1}{d}\sqrt{ \sum_{i=\blue{2}}^{n}\sigma_i^2\alpha_i^2 }\le\frac{\sigma_2}{d}\sqrt{\frac{ns}{n-s}}\qedhere
	\end{align*}
\end{proof}

Applying the above lemma on several constructions of bipartite expanders provides the following examples.
\begin{example}
	Let~$p$ and~$q$ be distinct primes such that~$p=1\bmod 4$, $q=1\bmod 4$, and such that the Legendre symbol $\left(\frac{p}{q}\right)$ is~$-1$. Then, there exists a bipartite Ramanujan graph on~$2n=q(q^2-1)$ nodes with~$\blue{\sigma_2}\le 2\sqrt{p}$.
	\begin{enumerate}
		\item If~$p=5$ and~$q=13$ then~$n=\frac{2184}{2}=1092$, $d=6$, and~$\frac{\sigma_2}{d}\le \frac{2\sqrt{5}}{6}\approx 0.745$.
		\item If~$p=13$ and~$q=5$ then~$n=\frac{120}{2}=60$, $d=14$, and~$\frac{\sigma_2}{d}\le \frac{2\sqrt{13}}{14}\approx 0.515$.
	\end{enumerate}
\end{example}

\subsection{Lower bound.}

Finally, we have the following lower bound on the approximation error of any \textit{Approximate Computation (AC)} scheme, that establishes asymptotic optimality of our scheme, up to constants, when used with Ramanujan graphs. In what follows, for any set of vertices~$\cA$ in a graph~$G$, let $\mathcal{N}(\cA)$ be the set of vertices which has at least one neighbor in~$\cA$, and for a vertex~$V$ let $d(V)$ denote its degree.

\blue{
\begin{lemma}\label{lemma:missingLemma}
	Let~$G=(\cW\cup\cP,\cE)$ be a bipartite graph where~$|\cW|=|\cP|=n$ for some~$n$ and $d(W)\le d$ for every $W\in\cW$ and some~$d$. Then, for every~$r\in\{1,2,\ldots, \floor{\frac{n}{d}}\}$ there exists a set~$\cQ_r\subseteq \cP$ of size~$r$ such that~$|\cN(\cQ_r)|\le d|\cQ_r|$.
\end{lemma}
\begin{proof}
	Let~$\cP=\{P_1,\ldots,P_n\}$ and without loss of generality assume that~$d(P_1)\le d(P_2)\le\ldots\le d(P_n)$. Due to this ordering of~$P_1,\ldots,P_n$, and due to the bounded degree of vertices in~$\cW$, it follows that $d\ge\text{avg}_n\ge\text{avg}_{n-1} \ge \text{avg}_1$, where~$\text{avg}_j\triangleq \frac{1}{j}\sum_{i=1}^j d(P_i)$ for every~$j\in [n]$. Therefore, the average degree of~$\cQ_r\triangleq\{
	P_1,\ldots,P_r\}$ is at most~$d$, which implies that~$\sum_{j=1}^rd(P_i)\le rd$. Since the size of~$\cN(\cQ_r)$ is at most the sum of degrees in~$\cQ_r$, it follows that~$|\cN(\cQ_r)|\le \sum_{j=1}^rd(P_i)\le rd=d|\cQ_r|$.
\end{proof}
}

\begin{lemma}\label{lemma:lowerBound}
	Consider any~$\boldB\in \bR^{n\times n}$ with each row having at most~$d$ non-zeros. Then, for any~$s$ such that~$d<s<n$ there exists a set~$\cK\subseteq[n]$ of size~$n-s$ such that
    \begin{equation}
	\min_{\substack{\bolda\in \bR^n \\ \support{a}\subseteq \cK}} d_2(\bolda \boldB,\1) \geq \sqrt{\left\lfloor \frac{s}{d} \right\rfloor}.
	\end{equation}
\end{lemma}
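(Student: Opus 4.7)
The plan is to exhibit a straggler set $S\triangleq [n]\setminus K$ of size~$s$ that ``kills'' at least $\ell\triangleq \lfloor s/d\rfloor$ columns of~$B$, where column~$j$ is killed if its support $c_j\triangleq\{i:B_{i,j}\neq 0\}$ lies entirely inside~$S$. The reason this suffices is immediate: if $J$ is the set of killed columns, then for every vector~$a$ with $\support{a}\subseteq K$ and every $j\in J$ we have $(aB)_j=\sum_{i\in K}a_iB_{i,j}=0$, so $d_2(aB,\1)^2\ge \sum_{j\in J}1=|J|$. Thus it only remains to find~$S$ with $|J|\ge\ell$.

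The construction of~$S$ will rely on the double-counting identity $\sum_{j=1}^n|c_j|=\sum_{i=1}^n|\support{B_i}|\le nd$, which says that column supports are small on average. Concretely, I will invoke the order-statistics lemma: if $|c_{j_1}|\le\cdots\le|c_{j_n}|$ is the sorted sequence of column support sizes, then $\sum_{k=1}^\ell|c_{j_k}|\le \ell d$. The one-line proof is that, letting $\bar{x}$ be the average of $|c_{j_1}|,\ldots,|c_{j_\ell}|$, each $|c_{j_k}|$ with $k>\ell$ is at least $|c_{j_\ell}|\ge \bar{x}$, so $nd\ge\sum_j|c_j|\ge \ell\bar{x}+(n-\ell)\bar{x}=n\bar{x}$ and therefore $\bar{x}\le d$.

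Given this, I define $S\triangleq \bigcup_{k=1}^{\ell}c_{j_k}$, which has size at most $\ell d\le s$. Since $s\le n$ (from $|K|=n-s\ge 0$), I can pad~$S$ with arbitrary additional rows until $|S|=s$; padding preserves each inclusion $c_{j_k}\subseteq S$, so the columns $j_1,\ldots,j_\ell$ remain killed under the final~$S$. Setting $K\triangleq [n]\setminus S$ gives $|K|=n-s$, and the opening observation yields $d_2(aB,\1)\ge \sqrt{\ell}=\sqrt{\lfloor s/d\rfloor}$ for every $a$ supported on~$K$.

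I do not expect a real obstacle: the whole argument is constructive and elementary, with the only non-trivial ingredient being the one-line order-statistics lemma. The hypothesis $s>d$ is used solely to ensure $\ell\ge 1$, so that the bound is nontrivial.
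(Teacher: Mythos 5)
Your proof is correct, and it follows the same high-level strategy as the paper: show that one can find $\ell\triangleq\lfloor s/d\rfloor$ columns of $B$ whose \emph{joint} support has size at most $s$, declare those rows to be stragglers (padding to exactly $s$), and note that any $a$ supported on the complement yields $(aB)_j=0$ for each of those $\ell$ columns $j$, forcing $d_2(aB,\1)\geq\sqrt{\ell}$. The difference is in how the $\ell$ columns are located. The paper phrases $B$'s support pattern as a bipartite graph between workers and partitions and runs a greedy peeling argument: at each step it invokes an averaging bound to find a ``partition'' vertex of degree $\leq d$ in the current residual subgraph, removes it together with its neighborhood, and repeats. You instead apply the averaging bound $\sum_j|c_j|\leq nd$ once, globally, via a sorting/order-statistics argument: the $\ell$ columns of smallest support satisfy $\sum_{k\le\ell}|c_{j_k}|\leq\ell d\leq s$, and you take their union as the straggler set. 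Both routes work, and the column-supports you select need not coincide with the ones the greedy produces, but your one-shot averaging is arguably cleaner. In particular, the paper's greedy quietly reuses ``average right-degree $\leq d$'' inside the shrinking subgraph, and that inequality is not automatic once $|\mathcal{W}\setminus\mathcal{N}(Q)|$ exceeds $|\mathcal{P}\setminus Q|$ (e.g., after peeling off degree-zero columns); your sorting argument sidesteps that subtlety entirely because it never needs to re-average on a residual graph.
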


\begin{proof}
	Associate a bipartite graph $G=(\mathcal{W}\cup\mathcal{P},\cE)$ with~$\boldB$ as follows. Consider left vertices $\mathcal{W} = \{W_1, W_2, \ldots, W_n\}$ corresponding to workers, and right vertices $\mathcal{P} = \{P_1, \ldots, P_n\}$ corresponding to data parts. We draw an edge $\{W_i,P_j\}\in \cE$ between worker~$W_i$ and data part $P_j$ if $(\boldB)_{i,j}\ne 0$. According to Lemma~\ref{lemma:missingLemma}, for any~$s$ such that~$d<s<n$ there exists a set~$\cQ\subseteq \cP$ of size~$\floor{\frac{s}{d}}$ such that~$|\cN(\cQ)|\le d\cdot \floor{\frac{s}{d}}\le s$. 
	
	\blue{Now, let~$\cK$ be any set of size $n-s$ which is contained in~$\cW\setminus \cN(\cQ)$; this set exists since~$|\cW\setminus\cN(\cQ)|=n-|\cN(\cQ)|\ge n-s$. In addition, let $\bolda=(a_j)_{j=1}^n\in\bR^n$ be any vector with~$\support{\bolda}\subseteq \cK$. Notice that~$(\boldB)_{\ell,j}=0$ whenever~$\ell\notin\cN(\cQ)$ and~$j\in \cQ$, and~$a_\ell=0$ whenever~$\ell\in\cN(\cQ)$. Thus, for~$j\in\cQ$ we have
		\begin{align*}
			(\bolda\boldB)_j=\sum_{\ell=1}^n a_\ell(\boldB)_{\ell,j}=\sum_{\ell\in\cN(\cQ)}a_\ell(\boldB)_{\ell,j}+\sum_{\ell\notin\cN(\cQ)}a_\ell\boldB_{\ell,j}=0,
		\end{align*}
		since in the summation over~$\ell\in\cN(\cQ)$ we have~$a_\ell=0$, and in the summation over~$\ell\notin\cN(\cQ)$ we have~$(\boldB_{\ell,j})=0$. Therefore, $(\bolda\boldB-\1)_j=-1$ for~$j\in \cQ$, and thus~$d_2(\bolda\boldB,\1)\ge \sqrt{|\cQ|} \geq \sqrt{\left\lfloor \frac{s}{d}\right\rfloor}$, which concludes the claim.}
\end{proof}

The above lemma establishes the asymptotic optimality (up to constants) of our scheme, when used with Ramanujan graphs. Recall that for Ramanujan graphs we have $\lambda \leq 2\sqrt{d}$. Thus, using our proposed scheme, we get an~$a$ and~$\boldB$ that satisfy the~$\epsilon$-AC condition for~$\epsilon(s)\leq \frac{2}{\sqrt{d}}\sqrt{\frac{s}{1-(s/n)}}$ which tends to  $2\sqrt{\frac{s}{d}}$ as $s/n \to 0$.
\color{black}
\subsection{A few remarks about convergence}\label{section:aFewRemarks}
While our results in Section~\ref{section:exact} guarantee the exact computation of the gradient, and hence the convergence of the overall gradient descent algorithm, some care is needed to guarantee convergence when the gradient is approximated, as done in Section~\ref{section:approximate} and Section~\ref{section:bipartite}. We operate under the standard assumption (e.g.~~\cite{ErasureHead}) that the arrival of a computation result from a server, within a predefined time frame, is modeled by a Bernoulli random variable.
Then, the convergence of our algorithms can be guaranteed as a special case of the SGD algorithm~\cite[Sec.~14.3]{shalev2014understanding}; this applies for Section~\ref{section:approximate} and Section~\ref{section:bipartite}, but for brevity we focus on Section~\ref{section:approximate}. Furthermore, the $\ell_2$-deviations that were discussed above provide better convergence guarantees than the trivial algorithm, that can also be seen as a special case of SGD under identical assumptions.

Assume that every server in the system responds by some given deadline with probability~$q$, i.e., it is a straggler with probability~$1-q$. Under this premise, it is shown that the expected value of~$\boldv_r$ in iteration~$r$ of Algorithm~\ref{algorithm:distributedSGD}, when employing the expander scheme from Section~\ref{section:approximate}, is the true gradient~$\nabla L_\cS(\boldw^{(r)})$ up to a constant factor. Since the analysis is identical for every~$r$, we omit this superscript from now on.

Let~$X_i$ be a $\text{Bernoulli}(q)$ random variable that equals~$1$ if the~$i$'th server responded, and~$0$ otherwise. Further, if no server responded then~$\boldv=0$, and hence we have that $\boldv=\boldz\cdot \boldB\cdot\boldN(\boldw)$, where~$\boldz$ is a random variable such that
\begin{align*}
\boldz=
\begin{cases}
0 & \mbox{if }\sum_{i=1}^nX_i=0\\
\frac{n}{\sum_{i=1}^nX_i}(X_1,\ldots,X_n)	&\mbox{else}.
\end{cases}
\end{align*}
Since~$\boldB$ and~$\boldN(\boldw)$ are constants, it suffices to show that~$\bE\boldz=c\1$ for some constant~$c$. Then, we would have that $$\bE \boldv=(\bE \boldz)\cdot \boldB\boldN(\boldw)=c\1\boldB\boldN(\boldw)=c\1\boldN(\boldw)=c \nabla L_\cS(\boldw),$$
and thus convergence of the algorithm can be guaranteed as a special case of the SGD algorithm, by adjusting the learning rate properly.
\begin{lemma}
	For~$\boldz$ and~$q$ defined as above, we have that~$\bE\boldz=(1-(1-q)^n)\cdot \1$.
\end{lemma}
\begin{proof}
We have that
\begin{align*}
\bE\boldz=0\cdot (1-q)^n+\sum_{\cF\subseteq [n],\cF\ne\varnothing}q^{|\cF|}(1-q)^{n-|\cF|}\cdot\tfrac{n}{|\cF|}\cdot\1_{\cF},
\end{align*}
where~$\1_\cF$ is the indicator vector of a set~$\cF$. Therefore, for every~$j\in[n]$, the $j$'th entry of~$\bE\boldz$ equals
\begin{align*}
\sum_{\cF\subseteq[n],\cF\ne\varnothing}q^{|\cF|}(1-q)^{n-|\cF|}\cdot \tfrac{n}{|\cF|}\cdot \1(j\in\cF),
\end{align*}
where~$\1(j\in\cF)$ is a $0$-$1$ indicator for the truth value of ``$j\in\cF$''. Re-indexing the above sum by all possible sizes of a set~$\cF$ that contains~$j$, we have
\begin{align*}
\sum_{i=1}^n \binom{n-1}{i-1}q^i(1-q)^{n-i}\cdot \tfrac{n}{i}&=\sum_{i=1}^n \binom{n}{i}q^i(1-q)^{n-i} \\
&=\sum_{i=0}^n \binom{n}{i}q^i(1-q)^{n-i}-(1-q)^n=1-(1-q)^n,
\end{align*}
where the last inequality follows from the Binomial Theorem. Hence, it follows that~$\bE\boldz=(1-(1-q)^n)\1$, which concludes the claim.
\end{proof}

It is widely known that if the gradient update in gradient descent is taken from a probability distribution whose expectation is the true gradient, then convergence is guaranteed~\cite[Sec.~14.3]{shalev2014understanding}. Therefore, the algorithm in Section~\ref{section:approximate} converges. According to the same argument, it can be readily verified that the trivial algorithm~\eqref{equation:trivialApproximationScheme} converges as well. However, in what follows it is shown that the improved~$\ell_2$-deviation guarantees of our algorithm, with respect to the trivial one, provide faster convergence for an important family of functions.

For~$\beta\in\bR$, a continuously differentiable function~$f$ is $\beta$-smooth in a domain~$\cX$ if~$\norm{\nabla f(\boldx)-\nabla f(\boldy)}\le\beta\norm{\boldx-\boldy}$ for every~$\boldx,\boldy\in\cX$. The following theorem applies whenever~$L_\cS$ is $\beta$-smooth, and the learning algorithm is constrained to some domain~$\cX$. For the sake of citing the following theorem, notice that the vector~$\boldv$ in Algorithm~\ref{algorithm:distributedSGD} is a (stochastic) function of~$\boldw$, and thus we write~$\boldv(\boldw)$. In addition, we let~$\boldw^*$ be the minimizer of~$L_\cS$ in~$\cX$.
\begin{theorem}(\cite[Theorem~6.3]{bubeck2015convex}\footnote{This theorem is stated in much greater generality in~\cite{bubeck2015convex}, but cited herein as a special case for simplicity.})\label{theorem:Bubeck}
	For~$R^2\triangleq \sup_{\boldw\in\cX}\tfrac{1}{2}\norm{\boldw}^2-\norm{\boldw^{(1)}}$, if~$\bE\norm{\nabla L_\cS(\boldw)-\boldv(\boldw)}^2\le \sigma^2$ for every~$\boldw\in\cX$ and~$\bE\boldv(\boldw)=c\nabla L_\cS(\boldw)$ for some~$c\in\bR$, then Algorithm~\ref{algorithm:distributedSGD} with step size~$\frac{1}{c(\beta+1/\eta)}$ and~$\eta\triangleq \frac{R}{\sigma}\sqrt{\frac{2}{t}}$ satisfies
	\begin{align*}
		\bE L_\cS\left( \frac{1}{t} \sum_{s=1}^t \boldw^{(s+1)} \right)-L_\cS(\boldw^*)\le R\sigma\sqrt{\frac{2}{t}}+\frac{\beta R^2}{t}.
	\end{align*}
\end{theorem}
By performing some technical calculations, that are given in full in Appendix~\ref{appendix:convergenceProof}, we have that
\begin{align*}
	\bE\norm{\nabla L_\cS(\boldw)-\boldv(\boldw)}^2&\le \specnorm{\boldN(\boldw)}^2\cdot n\cdot\left( (1-q)^n+\tfrac{2(1-q)}{q} \right) \mbox{ for the trivial scheme; and}\\
	\bE\norm{\nabla L_\cS(\boldw)-\boldv(\boldw)}^2&\le \specnorm{\boldN(\boldw)}^2\cdot n\cdot\left( (1-q)^n+\tfrac{\lambda^2}{d^2}\cdot\tfrac{2(1-q)}{q} \right)\mbox{ for our scheme}.
\end{align*}
Therefore, Theorem~\ref{theorem:Bubeck} and the analysis in Section~\ref{section:approximate} show that our algorithm provides better bounds than the trivial algorithm, and thus it is likely to converge \textit{faster} in many cases.

\color{black}
\section{Experimental Results}\label{section:experimentalResults}
\begin{figure*}[t]
	\centering
	\begin{subfigure}[b]{0.3\textwidth}
		\includegraphics[width=\textwidth]{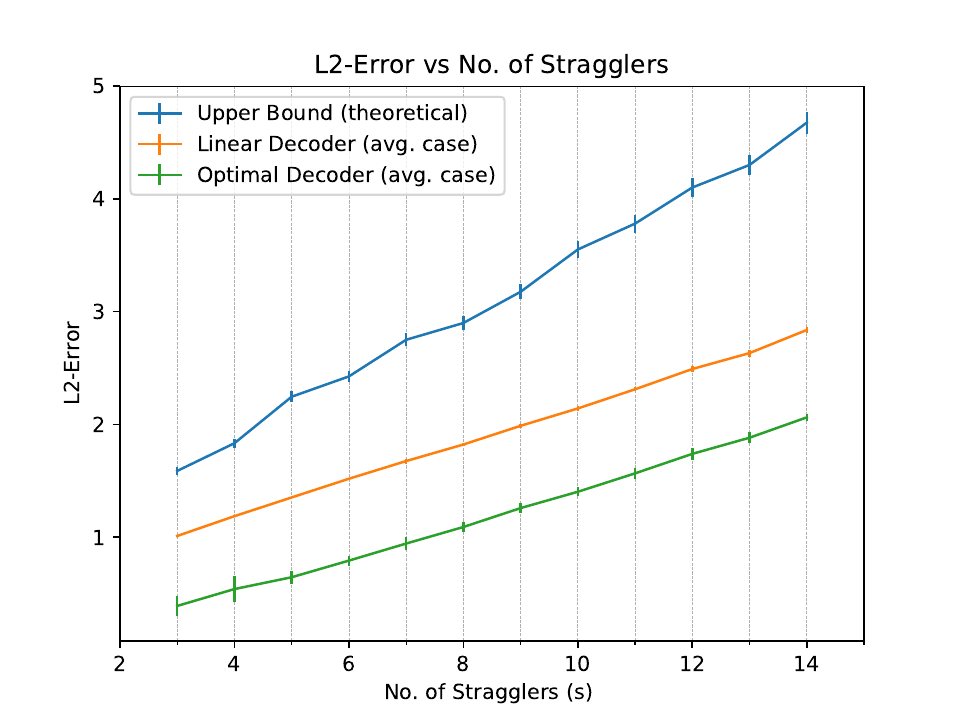}
		\caption{$n=30$, $d=3$}
		\label{fig:l2a}
	\end{subfigure}
	\begin{subfigure}[b]{0.3\textwidth}
		\includegraphics[width=\textwidth]{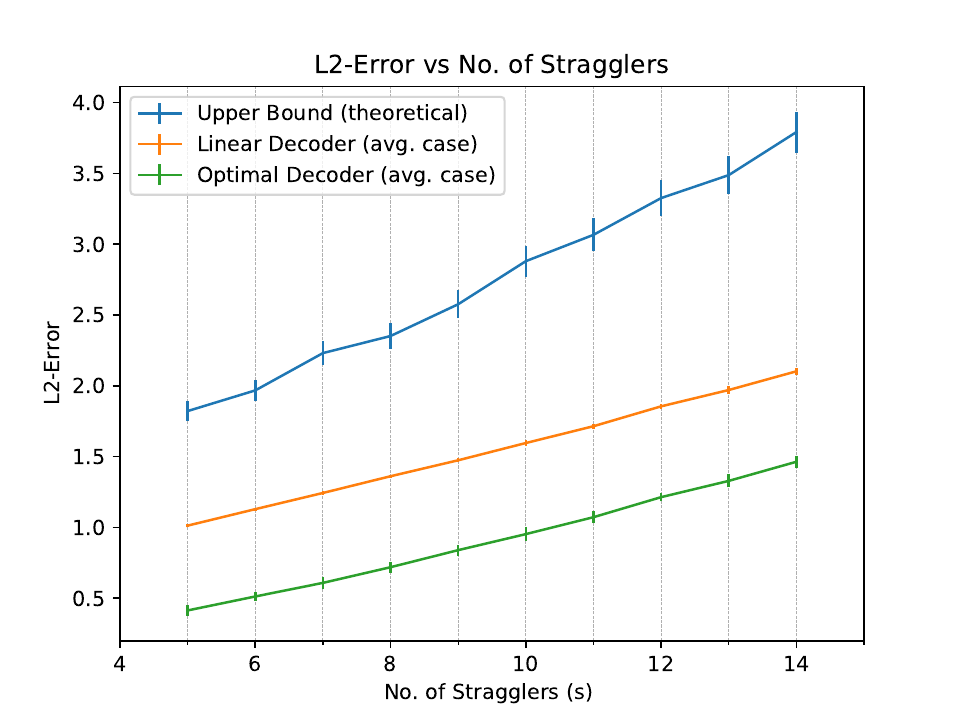}
		\caption{$n=30$, $d=5$}
		\label{fig:l2b}
	\end{subfigure}
	\begin{subfigure}[b]{0.3\textwidth}
		\includegraphics[width=\textwidth]{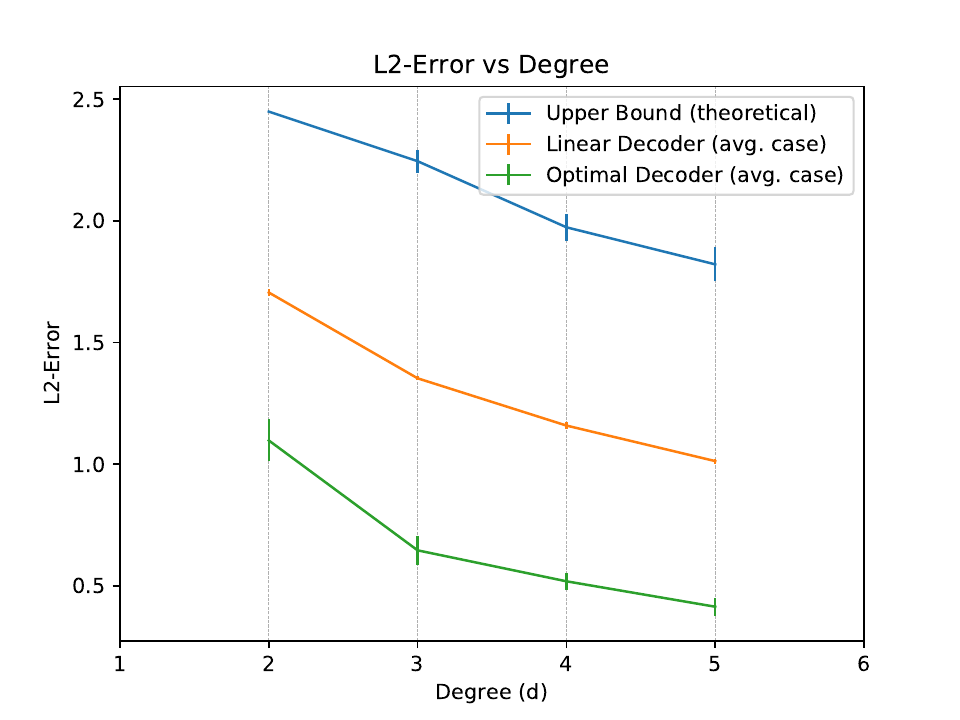}
		\caption{$n=30$, $s=5$}
		\label{fig:l2c}
	\end{subfigure}
	\begin{subfigure}[b]{0.3\textwidth}
		\includegraphics[width=\textwidth]{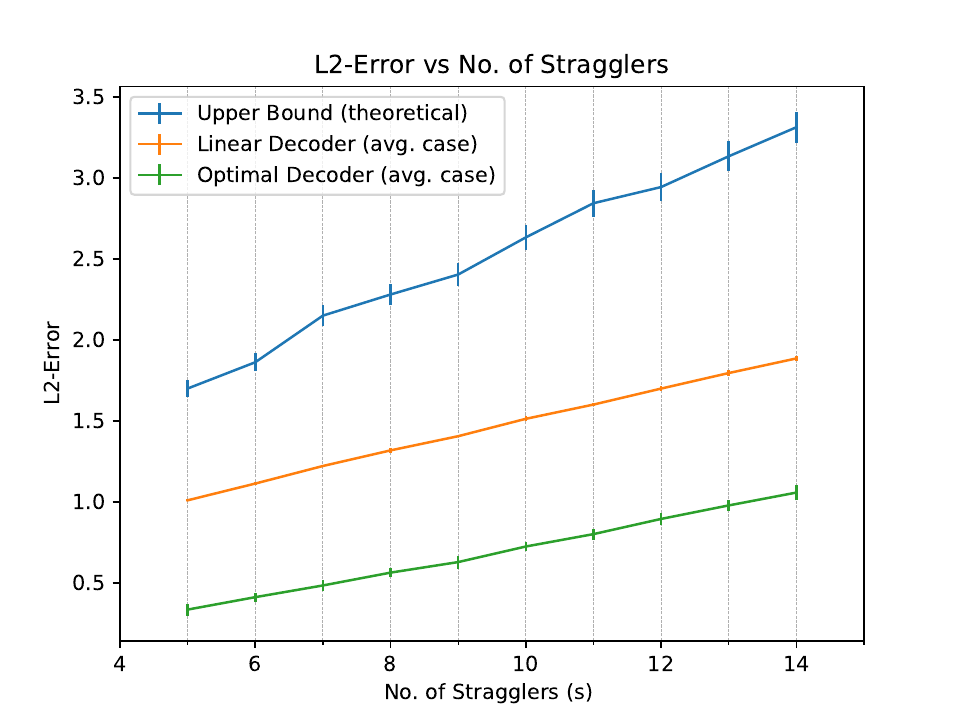}
		\caption{$n=50$, $d=5$}
		\label{fig:l2d}
	\end{subfigure}
	\begin{subfigure}[b]{0.3\textwidth}
		\includegraphics[width=\textwidth]{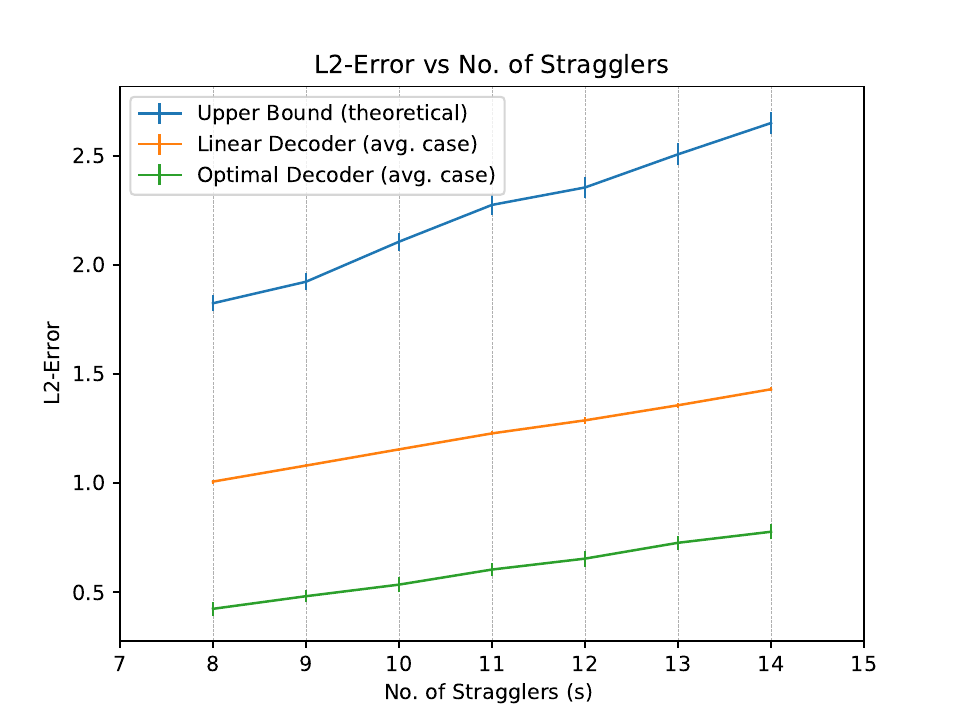}
		\caption{$n=50$, $d=8$}
		\label{fig:l2e}
	\end{subfigure}
	\begin{subfigure}[b]{0.3\textwidth}
		\includegraphics[width=\textwidth]{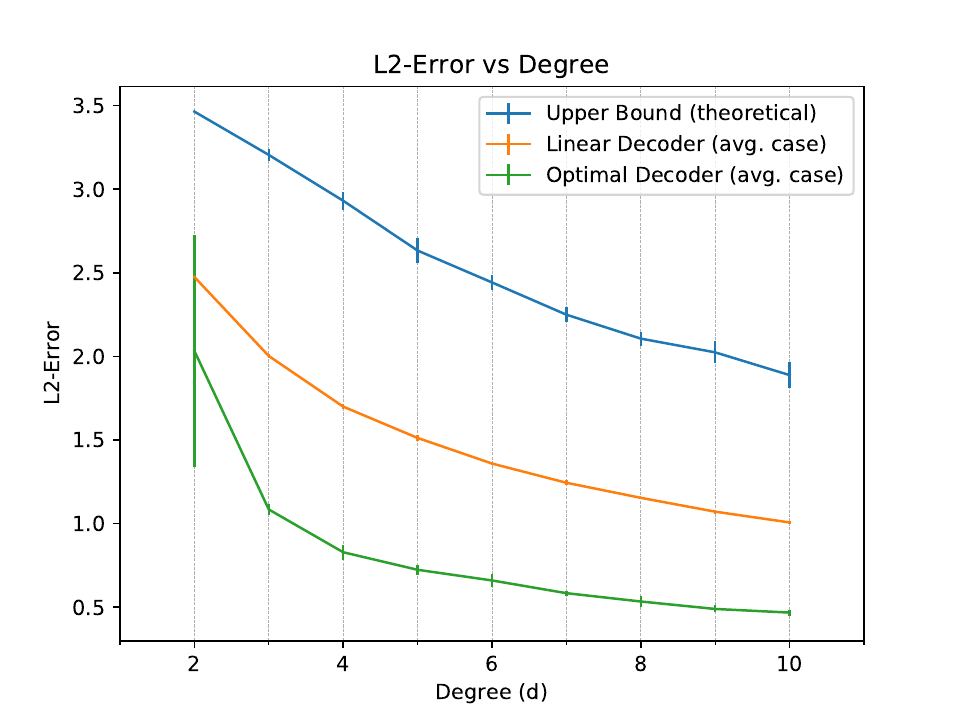}
		\caption{$n=50$, $s=5$}
		\label{fig:l2f}
	\end{subfigure}
	\caption{$\ell_2$-error for recovery of~$\1$ using normalized adjacency matrices of random $d$-regular graphs.}
	\label{fig:l2}
\end{figure*}
In this section, we present experimental results of our proposed approximate gradient coding scheme (Sec.~\ref{section:approximate}).

\subsection{$\ell_2$-Error} 
We measured the performance of our approximate coding schemes in terms of the $\ell_2$-error for the recovery of~$\1$. We chose the normalized adjacency matrix of a random $d$-regular graph on $n$ vertices as the matrix~$\boldB$. We randomly chose~$n-s$ rows of~$\boldB$ to be the surviving workers in any particular iteration, where~$s$ is the number of stragglers. For the decoding vector~$a(\cK)$, we chose the vector in~\eqref{equation:uK}, (called the \textit{linear decoder}), as well as the optimal least squares solution (called the \textit{optimal decoder}):
\begin{equation}
a(\cK) = \min_{\bolda\in\bR^{n-s}} \lVert \bolda \boldB(\cK,:) - \1\rVert_2,
\end{equation}
where~$\boldB(\cK:)$ is the submatrix of~$\boldB$ which consists of the rows that are indexed by~$\cK$.
Note that even though we have no additional theoretical guarantees for the optimal decoder, it is always possible to compute it in cubic time, e.g., by the singular value decomposition of~$\boldB(\cK,:)$.

Figure~\ref{fig:l2} presents the results using graphs on $n=30, 50$ vertices, and various values of~$s$ and~$d$. The results shown are averaged over multiple samples of~$\cK$ and multiple draws of the matrix~$\boldB$. Figures~\ref{fig:l2a}, \ref{fig:l2b}, \ref{fig:l2d}, and \ref{fig:l2e} show the $\ell_2$-error versus number of stragglers~$s$. As the number of stragglers increases, the recovery gets worse for a fixed degree~$d$.

Figures \ref{fig:l2c} and \ref{fig:l2f} show the $\ell_2$-error versus the degree~$d$. As~$d$ increases, the recovery error gets better for a fixed number of stragglers~$s$. Also, as expected, in all cases, the optimal decoder does better than the linear decoder in terms of the $\ell_2$-error. Interestingly, we can also observe that on average both the linear decoder and the optimal decoder are better than the theoretical upper bound in our paper. One could even think of exploiting this empirically by randomizing the assignment of the rows of~$\boldB$ to the different workers in every iteration.

\subsection{Generalization Error} 
In this section, our approximate gradient coding (AGC) scheme is compared to other approaches. We compare against gradient coding from~\cite{TandonLDK17} (GC), as well as the trivial scheme (IS), where the data is divided equally among all workers, but the master only uses the first~$n-s$ gradients.

We measured the performance of our coding schemes in terms of the \text{area under the curve} (AUC) on a validation set for a logistic regression problem, on a real dataset. The dataset we used was the Amazon Employee dataset from Kaggle. We used~$26,200$ training samples, and a model dimension of~$241,915$ (after one-shot encoding with interaction terms), and used gradient descent to train the logistic regression. For GC we used a constant learning rate, chosen using cross-validation. For AGC~and IS we used a learning rate of~$c_1/(r + c_2)$, which is typical for SGD, where~$c_1$ and~$c_2$ were also chosen via cross-validation.

All our methods were implemented in python using MPI4py (similar to \cite{TandonLDK17}). We ran our experiments using \texttt{t2.micro} worker instance types on Amazon EC2 and a \texttt{c3.8xlarge} master instance type. The results for~$n = 30, 50$ are given in Fig.~\ref{fig:auc1} and Fig.~\ref{fig:auc2}, in which AGC~corresponds to our approximation schemes with the optimal decoder, whereas AGC~(Linear), termed AGCL~is our full proposed approximation scheme. 

We observe that both these approaches are only slightly worse than GC, which utilizes the full gradient, and are quite better than the IS approach. Compared to each other, AGC~and AGCL~seem equivalent, however AGC~was marginally better. That being said, AGCL~can be faster since computing the linear decoder only requires~$O(n)$ time, in contrast to~$O(n^3)$ time for the optimal decoder. 

\begin{figure}[ht]
	\centering
	\includegraphics[width=0.4\textwidth]{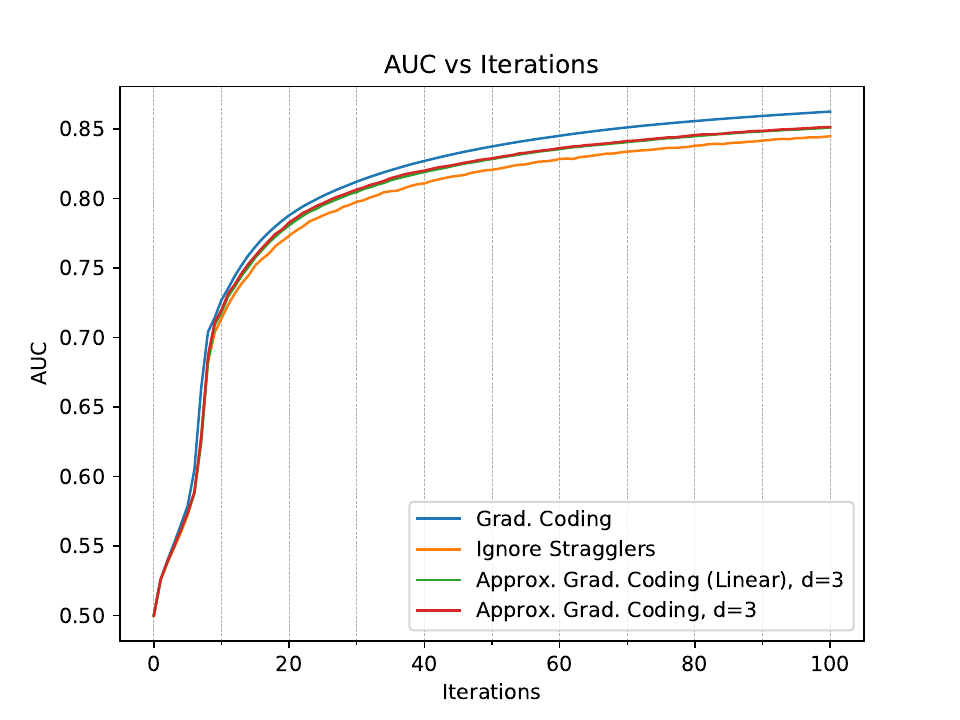}
	\caption{Generalization error versus number of iterations using $n=30$ \texttt{t2.micro} worker instances on EC2, with~$d=3$, and~$s=5$. Note that in case of \GC~\cite{TandonLDK17}, the computational overhead here is $\times6$ times (instead of $\times3$ in our approach).}
	\label{fig:auc1}
\end{figure}

\begin{figure}[ht]
	\centering
	\includegraphics[width=0.4\textwidth]{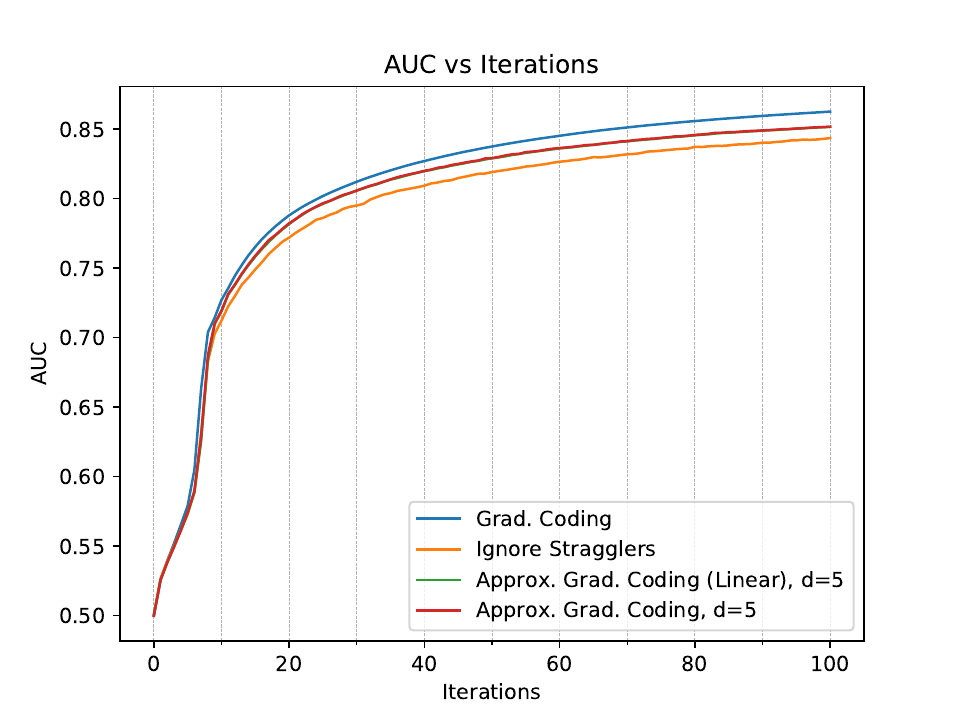}
	
	\caption{Generalization error versus number of iterations using~$n=50$ \texttt{t2.micro} worker instances on EC2, with~$d=5$, and~$s=10$. Note that in case of \GC~\cite{TandonLDK17}, the computational overhead here is $\times11$ times (instead of $\times5$ in our approach).}
	\label{fig:auc2}
\end{figure}

\section*{Acknowledgments}
This research has been supported by NSF Grants CCF
1422549, 1618689, DMS 1723052, ARO YIP W911NF-
14-1-0258 and research gifts by Google, Western Digital
and NVIDIA. The work of Rashish Tandon was done while
he was at UT Austin, prior to joining apple. The work of Itzhak Tamo and Netanel Raviv was supported in part ISF Grant 1030/15 and NSF-BSF Grant 2015814. The work of Netanel Raviv was supported in part by the postdoctoral fellowship of the Center for the Mathematics of Information (CMI) in the California Institute of Technology, and in part by the Lester-Deutsch postdoctoral fellowship. The authors express their gratitude to Prof.~Roi Livni for his valuable input.

\bibliographystyle{IEEEtranS}
\bibliography{bib}

\ifSUPPLEMENTARY
\appendices

\section{}\label{section:AK_complex}
In this section, efficient algorithms for encoding (i.e., computing the matrix~$\boldB$) and decoding (i.e., computing the vector~$a(\cK)$ given a set~$\cK$ of non-stragglers) are given for the scheme in Section~\ref{section:ComplexCyclic}.

Since the matrix~$\boldB$ is circulant, it suffices to compute only its leftmost column. Further, the leftmost column~$\boldc^\top_1$ of~$\boldB$ is a codeword in an~$[n,n-s]$ Reed-Solomon code whose evaluation points are all roots of unity of order~$n$, denoted~$\{\alpha_i\}_{i=0}^{n-1}$. Hence, to find~$\boldc_1$, one can define the polynomial~$m(x)\triangleq \prod_{j=s+1}^{n-1}(x-\alpha_j)$ and evaluate it over~$\alpha_0,\alpha_1,\ldots,\alpha_s$, which is possible in~$O(s(n-s))$ operations. This compares favorably with the respective~$O(n^2\log^2(n))$ of~\cite{ShortDot} (Sec.~5.1.1) for any~$s$.

As for decoding, for a given set~$\cK$ of~$n-s$ non-stragglers we present an algorithm which computes~$a(\cK)$ in~$O(s\log^2s+n\log n)$ operations over~$\bC$. This outperforms the respective~$O((n-s)\log^2(n-s))$ in~\cite{ShortDot} (Sec.~5.2.1) whenever~$s=o(n)$, and the respective~$O((n-s)^2)$ of~\cite{Wael} for any~$s<\delta n$ with~$\delta<1$. The central tool in this section is the well-known \textit{Generalized Reed-Solomon} (GRS) codes. A code~$\cC\subseteq \bC^n$ is called an~$[n,s]$ GRS code~if
\begin{align*}
	\cC=\left\{ (\ell_1f(\alpha_1),\ell_2f(\alpha_2),\ldots,\ell_nf(\alpha_n))~:~f\in\bC[x]^{<s} \right\},
\end{align*}
where~$\{\alpha_i\}_{i=1}^n$ are pairwise distinct evaluation points, and~$\{\ell_i\}_{i=1}^n\subseteq \bC$ are nonzero \textit{column multipliers}. It is readily verified that any RS code (Section~\ref{section:notions}) is a GRS code.

\begin{lemma}[{\cite{Ronny}, Prop.~5.2}]\label{lemma:GRS}
	If~$\cC\subseteq \bF^n$ is a GRS code then its dual~$\cC^\bot$ is a GRS code with identical evaluation points.
\end{lemma}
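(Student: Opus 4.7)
\medskip

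\noindent\textbf{Proof proposal for Lemma~\ref{lemma:GRS}.} The plan is to exhibit an explicit GRS code on the same evaluation points that is contained in~$C^\bot$, and then close the argument by a dimension count.

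First, I would observe that~$C$, being the image of the injective linear map~$f\mapsto (\ell_1 f(\alpha_1),\ldots,\ell_n f(\alpha_n))$ from~$\bC[x]^{<s}$ to~$\bC^n$ (injective because the~$\alpha_i$ are distinct and the~$\ell_i$ are nonzero), has dimension exactly~$s$. Hence~$\dim C^\bot=n-s$, which is the target dimension of the candidate GRS code. Next, for each~$i\in[n]$, define the Lagrange weights~$u_i\triangleq \prod_{j\neq i}(\alpha_i-\alpha_j)^{-1}$ (these are nonzero since the~$\alpha_i$ are pairwise distinct), and let~$\ell_i'\triangleq u_i/\ell_i$. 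Consider the GRS code
\begin{equation*}
	C'\triangleq\left\{(\ell_1' g(\alpha_1),\ldots,\ell_n' g(\alpha_n))~:~g\in\bC[x]^{<n-s}\right\},
\end{equation*}
which shares the evaluation points~$\{\alpha_i\}_{i=1}^n$ with~$C$ and has dimension~$n-s$ by the same injectivity argument.

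The heart of the proof is to verify that~$C'\subseteq C^\bot$. For any~$f\in\bC[x]^{<s}$ and~$g\in\bC[x]^{<n-s}$, the inner product of the corresponding codewords equals $\sum_{i=1}^{n}\ell_i\ell_i' f(\alpha_i)g(\alpha_i)=\sum_{i=1}^{n}u_i h(\alpha_i)$, where~$h\triangleq fg$ satisfies~$\deg h<n-1$. I would then invoke the standard Lagrange-interpolation identity: for any polynomial~$h$ of degree strictly less than~$n-1$, the sum~$\sum_{i=1}^{n}u_i h(\alpha_i)$ vanishes. This can be seen by writing the Lagrange interpolant of~$h$ on the nodes~$\{\alpha_i\}$ as~$h(x)=\sum_{i=1}^{n}h(\alpha_i)\prod_{j\neq i}\frac{x-\alpha_j}{\alpha_i-\alpha_j}$ and reading off the coefficient of~$x^{n-1}$; since~$\deg h<n-1$, this coefficient, namely~$\sum_{i=1}^{n}u_i h(\alpha_i)$, must be zero.

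Having established that every generator of~$C'$ is orthogonal to every generator of~$C$, it follows that~$C'\subseteq C^\bot$. Combined with~$\dim C'=\dim C^\bot=n-s$ from the first step, we conclude~$C'=C^\bot$, proving that~$C^\bot$ is a GRS code on the same evaluation points~$\{\alpha_i\}_{i=1}^n$ with column multipliers~$\{\ell_i'\}_{i=1}^n$. The only mildly subtle step is justifying the Lagrange identity cleanly; everything else is bookkeeping on dimensions and on the definition of a GRS code.
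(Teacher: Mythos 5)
Your proof is correct. Note, however, that the paper does not prove this lemma at all; it simply cites it from the reference \cite{Ronny} (Roth's \emph{Introduction to Coding Theory}, Prop.~5.2), so there is no in-paper argument to compare against. Your proof is the standard textbook argument for this fact: exhibit the candidate dual $C'$ with column multipliers $\ell_i' = u_i/\ell_i$ where $u_i=\prod_{j\neq i}(\alpha_i-\alpha_j)^{-1}$, establish orthogonality via the fact that $\sum_i u_i h(\alpha_i)$ is the $x^{n-1}$ coefficient of the Lagrange interpolant of $h$ (hence vanishes when $\deg h<n-1$), and close with a dimension count. All the pieces check out: $\deg(fg)\le (s-1)+(n-s-1)=n-2<n-1$, the $u_i$ and $\ell_i'$ are nonzero, the evaluation map is injective in both degree ranges, and the paper's notion of duality is the bilinear form $y\cdot c^\top$, which is what your orthogonality computation uses. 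This is essentially the same argument as in the cited source.
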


Let~$\cC$ be the code from Subsection~\ref{section:ComplexCyclic}, and notice that it is a GRS code whose column multipliers are all equal to~$1$. By Lemma~\ref{lemma:GRS}, it follows that the generator matrix of~$\boldC^\bot$ is~$\boldV\cdot \boldD$, where
\begin{align*}
\boldV=
\begin{pmatrix}
1 & 1 & \ldots & 1 \\
\alpha_1 & \alpha_2 & \ldots & \alpha_n \\
\alpha_1^2 & \alpha_2^2 & \ldots & \alpha_n^2 \\
\vdots & \vdots & \vdots & \vdots \\
\alpha_1^{s-1} & \alpha_2^{s-1} & \ldots & \alpha_n^{s-1} \\
\end{pmatrix}
\end{align*} 
and~$\boldD=\diag(\ell_1,\ldots,\ell_n)$ is a diagonal matrix which contains the nonzero column multipliers of~$\boldC^\bot$. In Algorithm~\ref{algorithm:AK_complex}, for a subset~$\cK\subseteq [n]$ let~$\boldV_\cK$ be the matrix of \textit{columns} of~$\boldV$ that are indexed by~$\cK$, and let~$\boldD_\cK=\diag(\ell_i)_{i\in \cK}$. 
In addition, for a vector~$\boldx\in\bC^n$, let~$\boldx_\cK$ be the vector which results from deleting the entries of~$\boldx$ that are not in~$\cK$.

\begin{algorithm}[tb]
   \caption{Computing the decoding vector~$a(\cK)$ for the scheme in Subsection~\ref{section:ComplexCyclic}.}\label{algorithm:AK_complex}
   {\bfseries Data:} Any vector~$\boldx'\in \bC^n$ such that~$\boldx'\boldB=\1$.\\
   {\bfseries Input:} A set~$\cK\subseteq[n]$ of~$n-s$ non-stragglers.\\
   {\bfseries Output:} A vector~$a(\cK)$ such that~$\support{a(\cK)}\subseteq \cK$ and~$a(\cK)\boldB=\1$.\\
   Find~$\boldf\in \bC^s$ such that~$\boldf\boldV_{\cK^c}=-\boldx'_{\cK^c}\cdot \boldD_{\cK^c}^{-1}$.\\
   Let~$\boldy\triangleq \boldf\cdot \boldV\boldD$.\\
   {\bfseries return}~$a(\cK)=\boldy+\boldx'$

\end{algorithm}


\begin{proof}[Correctness of Algorithm~\ref{algorithm:AK_complex}]
	Since~$\boldy\in \cC^\bot$, it follows that~$\boldy\boldB=0$, and hence~$a(\cK)\boldB=(\boldy+\boldx')\boldB=\1$. In addition, \blue{since $\boldy=\boldf\cdot \boldV\boldD$, it follows that~$\boldy_{\cK^c}=(\boldf\cdot \boldV\boldD)_{\cK^c}=\boldf\cdot \boldV_{\cK^c} \boldD_{\cK^c}=-\boldx'_{\cK^c}$, and thus $y_i=-x'_i$ for every~$i\notin K$. Therefore, it follows that $a(\cK)_i=y_i+x'_i=0$ for every~$i\notin \cK$, which implies that $\support{a(\cK)}\subseteq \cK$.}
\end{proof}

\begin{proof}[Complexity of Algorithm~\ref{algorithm:AK_complex}]
	Notice that --
	\begin{enumerate}
		\item Since~$\boldD_{\cK^c}$ is diagonal, computing its inverse requires~$s$ inverse operations in~$\bC$.
		\item Solving the equation $\boldf\boldV_{\cK^c}=-\boldx'_{\cK^c}\cdot \boldD_{\cK^c}^{-1}$ amounts to an interpolation problem, i.e., finding a degree (at most)~$s-1$ polynomial which passes through~$s$ given points. This is possible in~$O(s\log^2s)$ operations by~\cite{Kung}.
		\item Given~$\boldf$, computing the product~$\boldf\boldV$ reduces to evaluation of a degree (at most)~$n$ polynomial on all roots of unity of order~$n$. This is possible in~$O(n\log n)$ operations by utilizing the famous Fast Fourier Transform (FFT)~\cite{FFT}.
	\end{enumerate}
Hence, the total complexity of Algorithm~\ref{algorithm:AK_complex} is~$O(s\log^2s+n\log n)$.
\end{proof}

The pre-computation of~$\boldx'$ may be done by finding~$\boldx''\in\bC^{n-s}$ such that~$\boldx''\boldB'=\1$, where~$\boldB'$ is the upper-left $(n-s)\times(n-s)$ submatrix of~$\boldB$, and padding~$\boldx''$ with~$s$ zeros. Since~$\boldB'$ is a lower-triangular matrix in which the support of every column is of size at most~$s+1$, the equation~$\boldx''\boldB'=\1$ can be solved by a simple~$O(s^2+s(n-s))$ back-substitution algorithm. The computation of~$\boldx'$ can be done at the encoding phase (described above), and at a comparable complexity.

\section{}\label{section:AK_real}
As in the complex number case (Subsection~\ref{section:ComplexCyclic} and Appendix~\ref{section:AK_complex}), an algorithm for computing the matrix~$\boldB$ and the vector~$a(\cK)$ for the scheme in Subsection~\ref{section:RealCyclic} is given. 
In either of the cases of Construction~\ref{construction:realBCH}, the resulting code~$\cC$ consists of all codewords (seen as coefficients of polynomials in~$\bR[x]^{<n}$) with~$s$ mutual roots. That is, the code can be described as the right kernel of 
\begin{align*}
	\boldV\triangleq\begin{pmatrix}
		1 & \alpha_1 & \alpha_1^2 & \ldots & \alpha_1^{n-1}\\
		1 & \alpha_2 & \alpha_2^2 & \ldots & \alpha_2^{n-1}\\
		\vdots & \vdots & \vdots & \ddots & \vdots \\
		1 & \alpha_s & \alpha_s^2 & \ldots & \alpha_s^{n-1}	\\
	\end{pmatrix},
\end{align*}
where~$\{ \alpha_i \}_{i=1}^s$ are the roots of the code. 

As in Appendix~\ref{section:AK_complex},
to compute the matrix~$\boldB$ it suffices to find the codeword~$\boldc_1$, which in this case is a lowest weight codeword in a BCH code. It is readily verified that~$\boldc_1$ may be given by the coefficients of the \textit{generator polynomial} $g(x)\triangleq \prod_{i=1}^{s}(x-\alpha_i)$ (denoted by~$p_1$ and~$p_2$ in Eq.~\eqref{equation:generatorPolys}) . Finding these coefficients is possible by evaluating~$g(x)$ in~$s+1$ arbitrary and pairwise distinct roots of unity of order~$n$, and solving an interpolation problem in~$O(s\log^2s)$ by~\cite{Kung}. However, evaluating~$g(x)$ at~$s+1$ points requires~$O(n\log n)$ operations using the FFT algorithm. Hence, the overall complexity of computing~$\boldB$ is~$O(\min\{ s\log^2s,n\log n \})$, an improvement over~\cite{ShortDot} whenever~$s=o(n)$.

In Algorithm~\ref{algorithm:AK_real}, for a given~$\cK\subseteq[n]$, let~$\boldV_\cK$ be the matrix of columns of~$\boldV$ that are indexed by~$\cK$. The complexity of this algorithm outperforms~\cite{ShortDot} whenever~$s=o(\log^2 n)$, and outperforms~\cite{Wael} whenever~$s=o(n^{2/3})$.

\begin{algorithm}[tb]
   \caption{Computing the decoding vector~$a(\cK)$ for the scheme in Subsection~\ref{section:RealCyclic}.}
   \label{algorithm:AK_real}
   {\bfseries Data:} Any vector~$\boldx'\in \bR^n$ such that $\boldx'\boldB=\1$.\\
   {\bfseries Input:} A set~$\cK\subseteq[n]$ of~$n-s$ non-stragglers.\\
   {\bfseries Output:} A vector~$a(\cK)$ such that~$\support{a(\cK)}\subseteq \cK$ and~$a(\cK)\boldB=\1$.\\
   Compute~$\boldV_{\cK^c}^{-1}$.\\
   Let~$\boldf\triangleq -\boldx'_{\cK^c}\cdot \boldV_{\cK^c}^{-1}$.\\
	Let~$\boldy\triangleq \boldf\cdot \boldV$.\\
   {\bfseries return}~$a(\cK)=\boldy+\boldx'$.
\end{algorithm}

\begin{proof}[Correctness of Algorithm~\ref{algorithm:AK_real}]
	First, note that~$\boldV_{\cK^c}$ is an invertible matrix, since~$\cC^\bot$ is an MDS code by Lemma~\ref{lemma:MDSproperties} and Lemma~\ref{lemma:BCHareCyclicMDS}. Second, since~$\boldy$ is in the left image of~$\boldV$ it follows that~$\boldy\in \cC^\bot$, and further, $y_i=-x'_i$ for every~$i\in \cK^c$. Hence, $\support{a(\cK)}\subseteq \cK$ and~$a(\cK)\boldB=\1$.
\end{proof}

\begin{proof}[Complexity of Algorithm~\ref{algorithm:AK_real}]
	As in the complexity analysis of Algorithm~\ref{algorithm:AK_complex}, the complexity is clearly~$O(\gamma_s+s(n-s))$, where~$\gamma_s$ is the complexity of inverting a generalized Vandermonde matrix. While explicit formulas for inverting a generalized Vandermonde matrix were discussed in~\cite{GeneralizedVandermonde}, for simplicity, one may employ the trivial~$O(s^3)$ algorithm.
\end{proof}

\section{Bandwidth reduction in Subsection~\ref{section:ComplexCyclic}}\label{section:BandwidthReduction}
In step~$r\in[t]$ of Algorithm~\ref{algorithm:distributedSGD}, each node~$W_j$ transmits ~$\frac{1}{n}\sum_{i\in\support{\boldb_j}}b_{j,i}\cdot \nabla L_{\cS_i}(\boldw^{(r)})=\boldb_j\cdot \boldN(\boldw^{(r)})$ back to the master node. In the scheme which is suggested in Subsection~\ref{section:ComplexCyclic}, this step requires transmitting a vector in~$\bC^p$, where~$p$ is the number of entries of~$\boldw^{(r)}$. Since the gradient vectors~$\nabla L_{\cS_i}(\boldw^{(r)})$ are real, the resulting bandwidth is larger than the actual amount of information by a factor of~$2$. In this section it is shown that the optimal bandwidth can be attained by a simple manipulation of the gradient vectors. 

For~$p'\triangleq\ceil{\frac{p}{2}}$ and any~$r\in[t]$, denote the columns of the matrix~$\boldN(\boldw^{(r)})\in\bR ^{n\times p}$ by~$\boldN_{r,1},\ldots,\boldN_{r,p}$, and let~$\boldN'_r\in\bC^{n\times p'}$ be the matrix whose columns are\footnote{The rightmost column of~$\boldN'_r$ is~$\boldN_{r,p-1}+i\boldN_{r,p}$ if~$p$ is even and~$\boldN_{r,p}$ otherwise.}~$\boldN_{r,1}+i\boldN_{r,2},\boldN_{r,3}+i\boldN_{r,4},\ldots~$. To obtain optimal bandwidth, replace the transmission~$\boldb_j\cdot \boldN(\boldw^{(r)})\in\bC^{p}$ of~$W_j$ at iteration~$r$ of Algorithm~\ref{algorithm:distributedSGD} by~$\boldb_j\cdot \boldN'_r\in\bC^{p'}$. In addition, replace the definition of~$\boldw^{(r+1)}$ by~$\boldw^{(r)}-\eta e(\boldv_r)$, for~$e:\bC^{p'}\to \bR^p$ that is defined as
\begin{align*}
e(\boldv)\triangleq (\Re(v_1),\Im(v_1),\Re(v_2),\Im(v_2),\ldots),
\end{align*}
where~$\Re(y)$ and~$\Im(y)$ are the real and imaginary parts of a given~$y\in\bC$, respectively.

Since~$a$ and~$\boldB$ of Subsection~\ref{section:ComplexCyclic} satisfy the~EC condition, it follows that~$a(\cK)\boldB\boldN'_r=\1 \cdot \boldN'_r$ for every set~$\cK$ of~$n-s$ non-stragglers and every~$r\in[t]$. Further, we have that
\begin{align*}
e(\1 \boldN'_r)&=e\left( \1  (\boldN_{r,1}+i\boldN_{r,2}), \1  (\boldN_{r,3}+i\boldN_{r,4}),\ldots \right)\\
&=e\left( \1  \boldN_{r,1}+i\1 \boldN_{r,2}), \1  \boldN_{r,3}+i\1 \boldN_{r,4}),\ldots \right)\\
&=\left(\1\boldN_{r,1},\1\boldN_{r,2},\1\boldN_{r,3}\ldots \right)=\1\boldN(\boldw^{(r)}),
\end{align*}
and hence the correctness of Algorithm~\ref{algorithm:distributedSGD} under the scheme from Subsection~\ref{section:ComplexCyclic} is preserved. Under this framework, the transmission from~$W_j$ to the master node contains~$p'$ complex numbers (i.e.,~$2p'$ real numbers) rather than~$p$ complex numbers (i.e.,~$2p$ real numbers), and it is clearly optimal.

Note that this improvement does not diminish the contribution of the scheme in Subsection~\ref{section:RealCyclic}. In the current section, to compute the transmission~$\boldb_j \boldN'_r$, any worker node~$W_j$ performs~$dp'$ multiplication operations over~$\bC$, i.e.,~$4dp'\approx 2dp$ multiplication operations over~$\bR$. In contrast, the scheme in Subsection~\ref{section:RealCyclic} requires any~$W_j$ to perform only~$dp$ multiplications over~$\bR$.

\color{black}
\section{}\label{appendix:convergenceProof}
We operate under the convention that if all servers are stragglers (i.e., $\cK=\varnothing$), then the algorithm outputs the vector~$0$ as the approximation of the gradient. To analyze the trivial algorithm under this convention, define the random variable
\begin{align*}
	Y=
	\begin{cases}
		\tfrac{n}{|\cK|}\cdot\1_\cK &\mbox{if }\cK\ne \varnothing\\
		0 &\mbox{else,}	
	\end{cases}
\end{align*}
and notice that
\begin{align*}
	\norm{\1-Y}^2=
	\begin{cases}
	\frac{ns}{n-s} &\mbox{if }\cK\ne \varnothing\mbox{ (similar to~\eqref{equation:trivialApproximationScheme}),}\\
	n&\mbox{else}.
	\end{cases}
\end{align*}
Therefore, it follows that
\begin{align}\label{equation:convergenceBoundComp}
	\tfrac{1}{n}\cdot \bE[\norm{\1-Y}^2]&=(1-q)^n +\sum_{\cK\in\cP(n)\setminus\{\varnothing\}}q^{|\cK|}(1-q)^{n-|\cK|}\cdot \tfrac{n-|\cK|}{|\cK|}\nonumber\\
	&=(1-q)^n+\sum_{i=1}^{n-1}\binom{n}{i}q^i(1-q)^{n-i}\cdot\tfrac{n-i}{i}\nonumber\\
	&=(1-q)^n+\sum_{i=1}^{n-1}\binom{n}{i+1}q^i(1-q)^{n-i}\cdot\tfrac{i+1}{i}\nonumber\\
	&\le (1-q)^n+2\sum_{i=1}^{n-1}\binom{n}{i+1}q^i(1-q)^{n-i}\nonumber\\
	&= (1-q)^n+2\sum_{i=2}^{n}\binom{n}{i}q^{i-1}(1-q)^{n-(i-1)}\nonumber\\
	&\le (1-q)^n+\tfrac{2(1-q)}{q}\sum_{i=0}^{n}\binom{n}{i}q^{i}(1-q)^{n-i}=(1-q)^n+\tfrac{2(1-q)}{q}.
\end{align}

Hence, in the trivial algorithm we have
\begin{align*}
	\bE\norm{\nabla L_\cS(\boldw)-\boldv(\boldw)}^2&\le \specnorm{\boldN(\boldw)}^2\cdot \bE_{\cK}\norm{\1-Y}^2\le \specnorm{\boldN(\boldw)}^2\cdot n\cdot\left( (1-q)^n+\tfrac{2(1-q)}{q} \right).
\end{align*}
To conduct a similar analysis for the scheme in Section~\ref{section:approximate} we observe that
\begin{align*}
\bE\norm{\nabla L_\cS(\boldw)-\boldv(\boldw)}^2&\le \specnorm{\boldN(\boldw)}^2\cdot \bE_{\cK}\norm{\1-(\1+\boldu_\cK)\boldB}^2.
\end{align*}
In the spirit of the definition of~$Y$ above, we define a random variable
\begin{align*}
	R=
	\begin{cases}
		\tfrac{\lambda^2}{d^2}\cdot \tfrac{ns}{n-s} &\mbox{if }\cK\ne \varnothing\\
		n & \mbox{else}.
	\end{cases}
\end{align*}
According to Lemma~\ref{lemma:ABproximity} it follows that $\bE_{\cK}\norm{\1-(\1+\boldu_\cK)\boldB}^2\le \bE[R]$, and by following computations similar to~\eqref{equation:convergenceBoundComp}, we obtain
\begin{align*}
	\tfrac{1}{n}\bE[R]\le (1-q)^n+\tfrac{\lambda^2}{d^2}\cdot\tfrac{2(1-q)}{q},
\end{align*}
which implies that
\begin{align*}
	\bE\norm{\nabla L_\cS(\boldw)-\boldv(\boldw)}^2\le\specnorm{\boldN(\boldw)}^2\cdot n\cdot\left(
(1-q)^n+\tfrac{\lambda^2}{d^2}\cdot\tfrac{2(1-q)}{q}	\right).
\end{align*}
Finally, we remark that the analysis in~\eqref{equation:convergenceBoundComp} does not preclude applying a more sophisticated approach to bound the error. Yet, it is readily verified that any such approach can also be used to bound~$\bE[R]$, which will always result in a better bound.
\fi
\end{document}